\documentclass[12pt]{article}
\usepackage[margin=1.1in]{geometry}
\usepackage{setspace}
\usepackage{amsmath}
\usepackage{amssymb}
\usepackage{amsthm}
\usepackage{forloop}
\usepackage{nicefrac}
\usepackage{mathtools}
\usepackage[table]{xcolor}
\usepackage{url}
\usepackage{bm}
\usepackage{graphicx}
\usepackage{float}
\usepackage{soul}
\usepackage{hyperref}
\hypersetup{
    colorlinks,
    linkcolor={blue!100!black!70},
    citecolor={blue!75!black},
    urlcolor={blue!75!black}
}
\usepackage[nameinlink,capitalise]{cleveref}
\usepackage[shortlabels]{enumitem}
\usepackage{cancel}
\usepackage{multirow}
\usepackage{makecell}
\usepackage{caption}
\usepackage{subcaption}
\usepackage{framed}
\usepackage{wrapfig}
\usepackage{booktabs}
\usepackage{mathrsfs}
\usepackage[linesnumbered, ruled]{algorithm2e}
 \usepackage{subfiles}
\usepackage[T1]{fontenc}
\usepackage[utf8]{inputenc} 
\usepackage{tikz}
\usetikzlibrary{arrows.meta,positioning,shapes,calc}
\usepackage{xcolor}
\usepackage{pifont} 
\usepackage{float}
\usepackage{subcaption}
\usepackage{booktabs}
\usepackage{multirow}
\usepackage{adjustbox}
\usepackage{pdflscape}
\usepackage{threeparttable} 

\usepackage[round]{natbib}
\usepackage[splitrule]{footmisc}
\usepackage{makecell}
\usepackage{amsthm} %

\usepackage{graphicx}
\usepackage{booktabs}
\usepackage{threeparttable}
\usepackage{multirow}
\usepackage{siunitx}
\usepackage{pdflscape}

\definecolor{goodcolor}{RGB}{34,34,34}    
\definecolor{badcolor}{RGB}{120,120,120}  
\definecolor{boxcolor}{RGB}{230,230,230}  
\usepackage{amsmath}
\tikzset{
  var/.style = {
    rectangle, draw=goodcolor, very thick, rounded corners,
    align=center, minimum width=2.2cm, minimum height=1cm, fill=boxcolor
  },
  arrowgood/.style = {->, >=Latex, line width=1pt, draw=goodcolor},
  arrowbad/.style  = {->, >=Latex, line width=1pt, draw=badcolor, dashed},
  note/.style = {font=\footnotesize, align=center}
}

\theoremstyle{assumpstyle}
\newtheorem{assumption}{Assumption}

\newtheorem{theorem}{Theorem}[section]

\newtheorem{corollary}{Corollary}[section]

\newtheorem{proposition}{Proposition}[section]

\crefname{definition}{Def.}{Defs.}
\crefname{equation}{Eq.}{Eqs.}

\newcommand{\defcal} [1]{\expandafter\newcommand\csname cal#1\endcsname{{\mathcal #1}}}
\newcommand{\defsf} [1]{\expandafter\newcommand\csname sf#1\endcsname{{\mathsf #1}}}
\newcommand{\defbf} [1]{\expandafter\newcommand\csname bf#1\endcsname{{\mathbf #1}}}
\newcommand{\defbb} [1]{\expandafter\newcommand\csname bb#1\endcsname{{\mathbb{#1}}}}
\newcommand{\deffrak} [1]{\expandafter\newcommand\csname frak#1\endcsname{{\mathfrak{#1}}}}

\newcounter{ct}
\forLoop{1}{26}{ct}{
    \edef\letter{\Alph{ct}}
    \expandafter\defcal\letter
    \expandafter\defsf\letter
    \expandafter\defbf\letter
    \expandafter\defbb\letter
    \expandafter\deffrak\letter
}
\forLoop{1}{26}{ct}{
    \edef\letter{\alph{ct}}
    \expandafter\defcal\letter
    \expandafter\defsf\letter
    \expandafter\defbf\letter
    \expandafter\defbb\letter
    \expandafter\deffrak\letter
}

\newcommand{\mirrorfunc}[1]{
    \ensuremath{\if$#1$\phi \else \phi(#1)\fi}
}
\newcommand{\mirrorfuncdual}[1]{
    \ensuremath{\if$#1$\phi^{*}\else \phi^{*}(#1)\fi}
}

\newcommand{\subscript}[2]{$#1 _ #2$}
\newlist{assumplist}{enumerate}{1}
\setlist[assumplist]{label=\subscript{\textbf{\textsf{A}}}{\textsf{{\arabic*}}},leftmargin=*, itemsep=0pt}

\newcommand{\potential}[1]{
    \ensuremath{\if$#1$f\else f(#1)\fi}
}
\newcommand{\metric}[1]{
    \ensuremath{\if$#1$\mathscr{G}\else \mathscr{G}(#1)\fi}
}

\makeatletter
\newcommand*{\algotitle}[2]{%
  \stepcounter{algocf}%
  \hypertarget{algocf.title.\theHalgocf}{}%
  \NR@gettitle{#1}%
  \label{#2}%
  \addtocounter{algocf}{-1}%
}
\makeatother

\allowdisplaybreaks

\crefname{paragraph}{part}{parts}
\makeatletter
\def\@maketitle{%
  \newpage
  \begin{center}%
  \let \footnote \thanks
    {\Large \bf \@title  \par}
  \end{center}%
  \par
  \vskip 0.5em
  }

\makeatother

\title{Adaptive Penalization and Bootstrap-Smoothed Inference for Two-Sample Mendelian Randomization with Summary Data}

\begin{document}
\maketitle

\begin{center}
{\large
\begin{tabular}{ccc}
    \makecell{Muhammad Qasim\(^\dagger\)\(^\star\)\\{\normalsize\texttt{muhammad.qasim@stat.lu.se}}} & \makecell{Kai Wang\(^\ddagger\)\\{\normalsize\texttt{kai-wang@uiowa.edu}}} & \makecell{Ishan S. Bhatt\(^\ast\)\\{\normalsize\texttt{ishan.bhatt@okstate.edu}}} 
\end{tabular}
\vskip 0.5em

\normalsize
\begin{tabular}{c}
\({}^{\dagger}\)Department of Satistics, Lund University, Sweden\\
\({}^{\dagger}\)Department of Biostatistics, College of Public Health, University of Iowa, USA\\
\({}^{\ast}\)Department of Communication Sciences and Disorders, Oklahoma State University, USA\\

\({}^{\star}\)\textit{Corresponding author:} {\normalsize\texttt{muhammad.qasim@stat.lu.se}}\\
\end{tabular}\\
\vskip 1.1em
}
\end{center}
\vskip 0.5em

\begin{abstract}
\singlespacing
   Two-sample Mendelian randomization (MR) uses genetic variants as instrumental variables to estimate causal effects from observational data using summary association statistics. However, horizontal pleiotropy can invalidate standard MR estimators and lead to biased causal inference.  Pleiotropy-robust methods have been proposed to address this issue, including regularization-based approaches such as MR-Lasso. However, MR-Lasso may fail to identify invalid instruments consistently, and its post-selection inference can be unreliable. In this paper, we develop two lasso-type procedures for two-sample MR with summary-level data. The first, MR-ALasso, extends MR-Lasso by introducing adaptive penalty weights for pleiotropic effects in order to improve the identification of valid and invalid instruments. The second, MR-ALasso-B, combines adaptive lasso selection with bootstrap smoothing to improve post-selection inference. We establish theoretical results for MR-ALasso under the two-sample summary data framework, including invalid instrument identification consistency and oracle-type post-selection behavior. Simulation studies show that MR-ALasso generally improves upon MR-Lasso in estimation accuracy and invalid-instrument identification, whereas MR-ALasso-B substantially improves coverage and type-I error control relative to naive post-selection inference. A real-data application based on bidirectional analyses of multiple complex traits further illustrates the practical usefulness of the proposed methods. We provide an \textsf{R} package, \texttt{MRAlasso}, to facilitate implementation. 
    \vskip 1.5em
    \noindent \textbf{Keywords:} Mendelian randomization; genetic epidemiology; adaptive Lasso; bootstrap smoothing; pleiotropy. 
\end{abstract}

\setcounter{page}{1}

\section{Introduction}
Mendelian randomization (MR) is widely used to investigate causal relationships between exposures and outcomes using genetic variants, commonly single-nucleotide polymorphisms (SNPs), as instrumental variables \citep{DaveySmith2003,lawlor2008mendelian}. With the increasing availability of summary association statistics from large-scale genome-wide association studies (GWASs), two-sample MR has become a standard design in genetic epidemiology. In this setting, SNP--exposure associations, such as SNP--body mass index (BMI) associations, and SNP--outcome associations, such as SNP--type 2 diabetes (T2D) associations, can be obtained from separate samples without requiring access to individual-level data \citep{burgess2013mendelian,bowden2019improving}. For a genetic variant to be a valid instrumental variable, three core assumptions should hold. First, the variant must be associated with the exposure of interest (relevance). Second, it should be independent of unmeasured confounders of the exposure--outcome relationship (independence). Third, it must affect the outcome only through the exposure (exclusion restriction). When these assumptions hold for all selected variants, a standard estimator such as inverse variance weighting (IVW) is consistent for estimating the causal effect \citep{grant2021pleiotropy}.

In practice, however, the instrumental variable assumptions may be violated, most notably through horizontal pleiotropy, wherein a genetic variant has a direct impact on the outcome through pathways other than the exposure variable. The inclusion of a variant in an MR analysis that violates either the independence or the exclusion restriction may lead to biased causal estimates \citep{burgess2013use}. Robust methods have therefore been developed to obtain consistent causal effect estimates under weaker assumptions in the presence of multiple genetic variants. These include regression-based, median-based, mode-based, mixture-based, and penalization-based estimators, among others \citep{Bowden2015, bowden2016consistent, hartwig2017robust, verbanck2018detection, rees2019robust, zhao2020statistical,xue2021constrained,slob2020comparison}. 

Much recent work has focused on developing robust estimators for MR analyses with invalid instruments. One major line of work relaxes the all-valid-instruments assumption through alternative identification conditions. \cite{bowden2016consistent} proposed the weighted median estimator, which is a consistent estimate of the causal effect provided that at least half of the instruments are valid. Similarly, the mode-based estimator \citep{Hartwig2017} is motivated by the observation that, if the largest cluster of ratio estimates corresponds to valid instruments, the mode of the distribution of ratio estimates provides a consistent estimate of the causal effect. However, its performance can be sensitive to the bandwidth used to estimate the mode and may be less efficient than IVW when all instruments are valid.

Other approaches explicitly detect and remove outlying instruments. The MR-Presso procedure \citep{verbanck2018detection} identifies variants that generate unusually large residuals in the IVW regression and removes them before re-estimating the causal effect. This method can effectively correct for pleiotropy driven by a small number of outliers, although its performance may worsen when pleiotropic effects are extensive. Robust regression approaches have also been proposed to reduce the influence of instruments with heterogeneous effects. For example, \citet{rees2019robust} introduced the MR-Robust method, which applies robust regression techniques to reduce the influence of variants with large residuals in the regression model. These methods are less sensitive to outliers than IVW and can provide more stable causal estimates when some instruments exhibit heterogeneous effects.

Another line of research focuses on more explicit statistical modelling of pleiotropy using likelihood-based, mixture-based, and regression-based frameworks. \citet{zhao2020statistical} proposed a robust adjusted profile score method to accommodate both systematic and idiosyncratic pleiotropy in two-sample summary data. \citet{burgess2020robust} proposed mixture-based approaches that aim to obtain robust causal estimates in the presence of invalid instruments. These methods identify groups of genetic variants with similar causal estimates, which may indicate shared underlying mechanisms or clusters of valid instruments around the true causal effect, while treating other variants as invalid. \citet{bowden2018improving} introduced radial plots and radial regression as a reparameterization of standard IVW and MR-Egger estimators, providing a regression-based framework. These methods can be powerful when their structural assumptions are approximately correct.

Among the available robust methods, regularization-based approaches are useful when invalid instruments are sparse. In such settings, horizontal pleiotropic effects may be regarded as nuisance parameters with an approximately sparse structure, so that variable-selection techniques can be used to identify invalid instruments while estimating the causal parameter. The application of the least absolute shrinkage and selection operator (Lasso) to instrumental variables models has been discussed in the literature for individual-level data \citep{cheng2015select,kang2016instrumental,windmeijer2019use,qasim2025lasso} for selecting valid and relevant instruments. In the summary-data MR setting, however, regularization-based methods are much less developed. An important exception is the MR-Lasso procedure of \citet{rees2019robust}, which proposed the MR-Lasso estimator with potentially invalid instruments for two-sample summary data. However, lasso-based procedures may fail to correctly identify invalid instruments when instrument strengths are heterogeneous or when the irrepresentable-type conditions required for exact recovery are not satisfied. For the individual-level data, \citet{windmeijer2019use} addressed this issue using an adaptive Lasso procedure based on a median initial estimator. Since MR analyses are often based on GWAS summary statistics rather than individual-level data, that framework cannot be transferred directly to the summary data MR.

In this paper, we extend the adaptive Lasso methodology to the two-sample summary data. We propose an adaptive MR-Lasso estimator, MR-ALasso, which modifies the MR-Lasso procedure by introducing data-dependent penalty weights for the horizontal pleiotropic effects. The aim is to improve robust causal effect estimation in the presence of horizontal pleiotropy by improving the selection of invalid instruments when pleiotropic effects are sparse and instrument strengths are heterogeneous. We study the theoretical properties of the proposed estimator and establish conditions under which invalid-instrument identification consistency and oracle-type post-selection behavior are obtained under the two-sample summary-data model. Because inference based on a single selected model may remain anti-conservative after penalized selection, we also propose a bootstrap-smoothed extension, MR-ALasso-B. This procedure averages post-selection estimators across SNP-level bootstrap resamples in order to incorporate the additional variability due to the instrument selection step. The idea is in line with a broader post-selection inference literature, in which bootstrap smoothing has been used to improve interval estimation and hypothesis testing after data-adaptive model selection \citep{efron2014estimation,wang2014discussion}. The two proposed methods, MR-ALasso and MR-ALasso-B, address two related but distinct objectives, with MR-ALasso aimed at improved estimation and invalid instrument identification, and MR-ALasso-B aimed at improved post-selection inference. We further illustrate the practical behavior of the proposed methods through a broad real-data application based on bidirectional pairwise MR analyses across multiple complex traits, showing that MR-ALasso can identify additional potentially invalid instruments and that MR-ALasso-B can provide more stable post-selection inference.

The structure of the paper is as follows. Section \ref{sec:framework} presents the two-sample MR, introduces the summary-data model, and reviews horizontal pleiotropy with existing robust estimators. Section \ref{sec:method} develops the proposed adaptive penalized MR approaches, the choice of tuning parameter, and a selection consistency result. Section \ref{sec:simulation} reports simulation studies under two data-generating models. Section \ref{sec:application} illustrates the proposed methods in a real-data application. Section \ref{sec:discussion} concludes with a discussion. Proofs of the theoretical results and additional simulation results are provided in the Supplementary Material.

\section{Two-Sample Mendelian Randomization Framework} \label{sec:framework}

Let $Z_{ij}$ denote the genotype score of individual $i$ at genetic variant $j$, for $i=1,\dots,n$ and $j=1,\dots,J$. These genetic variants, typically SNPs, are used as candidate instrumental variables. Let $X_i$ denote the exposure and $Y_i$ the outcome of interest. We consider the following linear structural model for the exposure and outcome, similar to the framework described in \citet{bowden2017framework} and \citet{kang2016instrumental} for the single–exposure case,

\begin{align}
X_i &= \sum_{j=1}^{J} \beta_{Xj} Z_{ij} + \gamma_U U_i + \varepsilon_{Xi}, \label{eq:exposure} \\
Y_i &= \theta X_i + \sum_{j=1}^{J} \alpha_j Z_{ij} + \delta_U U_i + \varepsilon_{Yi}, \label{eq:outcome}
\end{align}
where $U_i$ denotes an unobserved confounder affecting both the exposure and the outcome, and $\gamma_U$ and $\delta_U$ represent its effects on the exposure and the outcome, respectively. The error terms $\varepsilon_{Xi}$ and $\varepsilon_{Yi}$ are assumed to have mean zero and to be independent of the genetic variants. The parameter of interest $\theta$ represents the causal effect of the exposure on the outcome. The coefficients $\beta_{Xj}$ represent the association between genetic variant $j$ and the exposure, while $\alpha_j$ indicates the direct effect of variant $j$ on the outcome that is not mediated through the exposure. A genetic variant with $\alpha_j = 0$ is treated as a valid instrument, whereas a variant with $\alpha_j \neq 0$ is invalid due to horizontal pleiotropy. The directed acyclic graph corresponding to the structural model in \eqref{eq:exposure}–\eqref{eq:outcome} is shown in Figure~\ref{fig:mr_dag}.
\begin{figure}[tbp]
\centering
\begin{tikzpicture}[
    node distance = 2.8cm and 3cm,
    every node/.style = {font=\small},
    snp/.style = {draw, rounded corners, minimum width=1.8cm, minimum height=1cm, align=center},
    var/.style = {draw, circle, minimum size=1.2cm, align=center},
    conf/.style = {draw, circle, dashed, minimum size=1.2cm},
    good/.style = {->, very thick},
    pleio/.style = {->, very thick, dashed},
    lab/.style = {font=\footnotesize, fill=white, inner sep=1pt}
]
\node[snp] (Z) {$Z$};
\node[var, right=of Z] (X) {$X$};
\node[var, right=of X] (Y) {$Y$};
\node[conf, below=of X] (U) {$U$};
\draw[good] (Z) -- node[lab, above] {$\beta_{Xj}$} (X);
\draw[good] (X) -- node[lab, above] {$\theta$} (Y);
\draw[good] (U) -- (X);
\draw[good] (U) -- (Y);
\draw[pleio, bend left=20] (Z) to node[lab, above] {$\alpha_j$} (Y);
\end{tikzpicture}
\caption{
Directed acyclic graph for a single genetic variant in the MR framework. Here $Z$ denotes a generic genetic variant and $U$ denotes an unobserved confounder.  The dashed edge from $Z$ and $Y$ represents possible horizontal pleiotropy.}
\label{fig:mr_dag}
\end{figure}
Identification of the causal effect $\theta$ relies on the standard instrumental variable assumptions that are commonly used in MR studies \citep{DaveySmith2003,lawlor2008mendelian}. First, the genetic variants used as instruments must be associated with the exposure, which requires that at least one of the coefficients $\beta_{Xj}$ is nonzero (relevance). Second, the genetic variants should be independent of unobserved confounders affecting both the exposure and the outcome, i.e., $Z_{ij} \perp U_i$ (independence). Third, a valid genetic variant must affect the outcome only through the exposure, which in model \eqref{eq:outcome} corresponds to $\alpha_j = 0$ (exclusion restriction). Violations of the independence or exclusion restriction assumptions can lead to biased causal estimates and invalid statistical inference \citep{burgess2013use,rees2019robust}. In practice, horizontal pleiotropy, where genetic variants influence the outcome through pathways other than the exposure, is a major concern in MR analyses. For further discussion of these assumptions and their implications, see \citet{burgess2023guidelines} and \citet{sanderson2022mendelian}.

\subsection{Summary-level data model} \label{subsec:summary_model}

In many MR applications, individual-level data are unavailable and analyses rely on summary statistics obtained from GWAS. In the two-sample MR design, the associations between genetic variants and the exposure and outcome are estimated in two independent samples; $n_X$ independent observations of $(X; Z_1,\dots,Z_J)$ from the exposure GWAS and $n_Y$ independent observations of $(Y; Z_1,\dots,Z_J)$ from the outcome GWAS. Let $\hat{\beta}_{Xj}$ and $\mathrm{SE}(\hat{\beta}_{Xj})$ denote the least squares estimate and corresponding standard error from a linear regression of $X$ on $Z_j$, respectively. Similarly, let $\hat{\beta}_{Yj}$ and $\mathrm{SE}(\hat{\beta}_{Yj})$ denote the least squares estimate and corresponding standard error from a linear regression of $Y$ on $Z_j$, respectively.
Under the structural model in \eqref{eq:exposure}--\eqref{eq:outcome}, the population-level association between SNP $j$ and the outcome satisfies
\begin{equation}
\beta_{Yj} = \theta \beta_{Xj} + \alpha_j,
\label{eq:summary_level_model}
\end{equation}
where $\alpha_j$ represents the direct pleiotropic effect of SNP $j$ on the outcome. 

\begin{assumption}
\label{assump:two-sample_summary_data}
We observe the GWAS summary associations $\hat{\beta}_{Xj}$ and $\hat{\beta}_{Yj}$ for each $j$ satisfying $\hat{\beta}_{Xj} \sim N(\beta_{Xj}, \sigma_{Xj}^2)$, and  $\hat{\beta}_{Yj} \sim N(\beta_{Yj}, \sigma_{Yj}^2),$ where $\sigma_{Xj}^2$ and $\sigma_{Yj}^2$ denote the sampling variances, estimated by $\mathrm{SE}(\hat{\beta}_{Xj})^2$ and $\mathrm{SE}(\hat{\beta}_{Yj})^2$, respectively. If the exposure and outcome GWASs are conducted in non-overlapping samples, then $\hat{\beta}_{Xj}$ and $\hat{\beta}_{Yj}$ are independent. The collection $\{(\hat{\beta}_{Xj},\hat{\beta}_{Yj})\}_{j=1}^J$ may be treated as independent after linkage disequilibrium pruning. As the GWAS sample sizes $(n_X,n_Y) \to \infty$, we assume $n_X/n_Y \to c$ for some constant $0 < c < \infty$.
\end{assumption}

Assumption \ref{assump:two-sample_summary_data} describes the standard two-sample summary data framework used in MR (e.g., \cite{zhao2020statistical,patel2024selecting}). The normal approximation is justified by the large-sample behavior of GWAS regression estimators, given the large sample sizes typically available in GWAS. When the exposure and outcome GWAS are conducted in non-overlapping samples, $\hat{\beta}_{Xj}$ and $\hat{\beta}_{Yj}$ are independent for each SNP $j$.  After linkage disequilibrium pruning, the elements of the collection $\{(\hat{\beta}_{Xj},\hat{\beta}_{Yj})\}_{j=1}^J$ are commonly treated as independent as a working approximation. Although this cross-SNP independence is not exact, \citep{zhao2020statistical} argues that cross-SNP correlations in marginal summary estimates are typically negligible because each individual variant explains only a very small proportion of phenotypic variation.

A commonly used estimator of the causal effect $\theta$ in \eqref{eq:summary_level_model} is the IVW estimator \citep{burgess2013mendelian}. The estimator can be obtained by fitting the weighted least-squares regression model
\begin{equation}
\hat{\beta}_{Yj} = \theta \hat{\beta}_{Xj} + e_j, 
\qquad e_j \sim N(0, 1/\tilde{w}_j),
\label{eq:linearmodel}
\end{equation}
where $\tilde{w}_j = 1/\sigma_{Yj}^2$ denotes the inverse-variance weight for instrument $j$. The IVW estimator is equivalently obtained by solving
$$\hat{\theta}_{\mathrm{IVW}}= \underset{\theta}{\text{argmin}} \sum_{j=1}^{J}\tilde{w}_j(\hat{\beta}_{Yj} - \theta \hat{\beta}_{Xj})^2.$$
When all genetic variants satisfy the instrumental variable assumptions so that $\alpha_j = 0$ $\forall j$, the IVW estimator is consistent for the causal effect $\theta$ \citep{grant2021pleiotropy}.

\subsection{Robust MR methods for pleiotropy}
\label{subsec:robust_methods}
In practice, some genetic variants may have direct effects on the outcome ($\alpha_j \neq 0$), a phenomenon known as horizontal pleiotropy. Balanced pleiotropy occurs when the pleiotropic effects are centered around zero, that is $E[\alpha_j]=0$, whereas directional pleiotropy occurs when $E[\alpha_j]\neq0$. The presence of pleiotropy can lead to bias in the IVW estimator and has motivated the development of different robust MR methods. Many robust MR methods rely on the instrument strength independent of direct effect (InSIDE) assumption \citep{Bowden2015}. Under this condition, the pleiotropic effects are assumed to be independent of the instrument strengths $\beta_{Xj}$. When the pleiotropic effects have mean zero and are independent of the instrument strengths, the IVW estimator remains consistent under a multiplicative random-effects model \citep{Bowden2015, rees2019robust, slob2020comparison}.

Allowing an intercept term in the regression model \eqref{eq:linearmodel} leads to the MR-Egger estimator \citep{Bowden2015},
\begin{equation}
\hat{\beta}_{Yj} = \theta_0 + \theta \hat{\beta}_{Xj} + e_j,
\label{eq:mr_egger}    
\end{equation}

where the intercept parameter $\theta_0$ represents the average pleiotropic effect across instruments. Under the InSIDE assumption, and when the SNP--exposure associations are estimated sufficiently precisely, the slope parameter $\theta$ provides a consistent estimate of the causal effect even when all instruments exhibit pleiotropy.

Distribution-based robust estimators can be described in terms of the variant-specific ratio estimates $\hat{\theta}_j = {\hat{\beta}_{Yj}}/{\hat{\beta}_{Xj}}.$ Under the summary-level model \eqref{eq:summary_level_model}, and ignoring sampling error for interpretation, these ratio estimates satisfy $\hat{\theta}_j \approx \theta + {\alpha_j}/{\beta_{Xj}}.$ Thus, for valid instruments with $\alpha_j=0$, the ratio estimates are centered around the causal effect $\theta$, but invalid instruments may yield shifted ratio estimates. The weighted median estimator is defined as the weighted median of $\{\hat{\theta}_j\}_{j=1}^J$, with weights typically based on the inverse variance of the ratio estimates. The weighted median estimator is consistent under the assumption that at least 50\% of the total weight is contributed by valid instruments \citep{bowden2016consistent}. The mode-based estimator uses
the empirical distribution of the ratio estimates and is consistent if the largest cluster of ratio estimates corresponds to valid instruments. Outlier-robust procedures instead focus on variants whose observed associations are poorly explained by the fitted IVW model. After fitting the IVW estimator, the residual for SNP $j$ can be written as, $r_j = \hat{\beta}_{Yj} - \hat{\theta}_{\mathrm{IVW}}\hat{\beta}_{Xj}.$ Instruments with large residuals have a larger impact on heterogeneity statistics such as Cochran's $Q$ statistic. 
The MR-Presso method \citep{verbanck2018detection} detects variants with unusually large residuals in the IVW regression and removes them before re-estimation. \cite{rees2019robust} proposed robust regression approaches to reduce the influence of instruments with heterogeneous effects.

An alternative strategy models the pleiotropic effects directly through penalized regression. Building on earlier work in the instrumental variables literature \citep{kang2016instrumental}, \citet{rees2019robust} proposed the MR-Lasso estimator for the summary-data setting \eqref{eq:summary_level_model}. In this approach, a separate intercept term is introduced for each genetic variant, allowing for instrument-specific pleiotropic effects. The causal effect and direct effects are estimated by solving

\begin{equation}
(\hat{\theta}_L, \hat{\alpha}_L) =
\arg\min_{\theta,\alpha}\sum_{j=1}^{J}\tilde{w}_j
(\hat{\beta}_{Yj} - \theta \hat{\beta}_{Xj} - \alpha_j)^2
+
\lambda\sum_{j=1}^{J} |\alpha_j|,
\label{eq:lasso}
\end{equation}

where $\lambda > 0$ is a regularization parameter selected through a heterogeneity stopping rule based on Cochran’s Q statistic. The $\ell_1$ penalty shrinks the direct-effect parameters toward zero and sets some of them exactly equal to zero. Variants with $\hat{\alpha}_j = 0$ are interpreted as valid instruments, whereas variants with nonzero estimates are treated as potentially invalid or outlying instruments \citep{rees2019robust}.

\section{Adaptive Penalized Mendelian Randomization}
\label{sec:method}

MR-Lasso provides a useful framework for identifying invalid instruments; however, MR-Lasso still inherits several limitations of the standard Lasso penalty. First, the penalty is applied uniformly across variants, regardless of the magnitude of their direct effects or even when the association estimates differ substantially in their levels of precision \cite{grant2021pleiotropy}. Such a penalty may not be optimal when the magnitude of pleiotropic effects varies across genetic variants. Second, the standard Lasso is known to introduce shrinkage bias in nonzero coefficients. Third, as shown by \cite{windmeijer2019use}, the Lasso estimator may fail to consistently identify invalid instruments when the so-called irrepresentable condition is violated. In MR studies, these limitations are especially relevant because the strengths of the instruments often vary substantially across variants. To address these issues, we propose an MR-ALasso estimator, which modifies the common penalty in \eqref{eq:lasso} by data-dependent weights that allow different variants to be penalized to different degrees. Specifically, we define
\begin{equation}
(\hat{\theta}_{\mathrm{AL}}, \hat{\boldsymbol{\alpha}}_{\mathrm{AL}})=
\arg\min_{\theta,\boldsymbol{\alpha}}
 \frac{1}{2}
\left(\hat{\boldsymbol{\beta}}_{Y} - \theta \hat{\boldsymbol{\beta}}_{X} - \boldsymbol{\alpha}\right)^{\top}\mathbf S
\left(\hat{\boldsymbol{\beta}}_{Y} - \theta \hat{\boldsymbol{\beta}}_{X} - \boldsymbol{\alpha}\right)
+ \lambda_n \sum_{j=1}^{J} \omega_j |\alpha_j|,
\label{eq:Alasso}
\end{equation}

where \(\omega_j>0\) are adaptive penalty weights, \(\hat{\boldsymbol{\beta}}_Y=(\hat{\beta}_{Y1},\ldots,\hat{\beta}_{YJ})^\top\), \(\hat{\boldsymbol{\beta}}_X=(\hat{\beta}_{X1},\ldots,\hat{\beta}_{XJ})^\top\), \(\boldsymbol{\alpha}=(\alpha_1,\ldots,\alpha_J)^\top\), and \(\mathbf S=\mathrm{diag}(\tilde w_1,\ldots,\tilde w_J)\). Equation \eqref{eq:Alasso} is not a standard adaptive Lasso \citep{zou2006adaptive} problem, because the causal-effect parameter $\theta$ is left unpenalized and only the direct effect parameters are penalized.

The weights are defined from an initial estimator of the causal effect. We use the median ratio estimator
$$\tilde{\theta}=\mathrm{median}
\left\{\frac{\hat{\beta}_{Yj}}{\hat{\beta}_{Xj}} :\hat{\beta}_{Xj} \neq 0
\right\},$$
which is consistent under standard regularity conditions when more than half
of the candidate instruments are valid. Using this initial estimator, we define preliminary direct-effect estimates of $\alpha_j$'s by
$$\tilde{\alpha}_j=\hat{\beta}_{Yj} - \tilde{\theta}\hat{\beta}_{Xj},
\qquad j = 1,\dots,J.$$
The adaptive weights are then defined as $\omega_j=\frac{1}{|\tilde{\alpha}_j|^\nu}$ for a given value of $\nu$ \citep{zou2006adaptive}. The optimization problem is a weighted Lasso problem in the direct-effect parameters $\alpha_j$ and can be solved using
standard Lasso algorithms. The interpretation of adaptive weights ($\omega_j$) is straightforward. If the initial direct-effect estimate $\tilde{\alpha}_j$ is small, then $\omega_j$ is large and the corresponding coefficient is penalized heavily, making it more likely that the final estimate is shrunk to zero. On the other hand, if $\tilde{\alpha}_j$ is large, then $\omega_j$ is smaller and the coefficient is penalized less strongly. In this way, the adaptive penalty aims to improve the separation between valid and invalid instruments.

To analyze the theoretical analysis of the MR-ALasso, we first profile out the causal parameter $\theta$ from the objective function in \eqref{eq:Alasso}. Theorem \ref{thm:reducedcriterion} represents that, after minimization w.r.t $\theta$, the MR-ALasso criterion can be written as a penalized quadratic function of $\boldsymbol\alpha$, with associated matrix $\hat{\mathbf C}_n$.

\begin{theorem}
Let $(\hat{\theta}_{\mathrm{AL}}, \hat{\boldsymbol{\alpha}}_{\mathrm{AL}})$ be defined by \eqref{eq:Alasso}. Then, after minimizing over $\theta$, $\hat{\boldsymbol{\alpha}}_{\mathrm{AL}}$ satisfies
\begin{equation}
\arg\min_{\boldsymbol\alpha}\frac{1}{2}
(\hat{\boldsymbol\beta}_Y-\boldsymbol\alpha)^\top
\hat{\mathbf C}_n
(\hat{\boldsymbol\beta}_Y-\boldsymbol\alpha)
+ \lambda_n \sum_{j=1}^{J} \omega_j |\alpha_j|,
\label{eq:Alasso-alpha}
\end{equation}
where $\hat{\mathbf C}_n=\mathbf S- \mathbf S\hat{\boldsymbol\beta}_X (\hat{\boldsymbol\beta}_X^\top \mathbf S\hat{\boldsymbol\beta}_X)^{-1} \hat{\boldsymbol\beta}_X^\top \mathbf S$ is the matrix obtained by profiling out the causal-effect parameter $\theta$ from the weighted least squares part of the objective.
\label{thm:reducedcriterion}
\end{theorem}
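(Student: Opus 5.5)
The plan is a standard profiling (concentration) argument. Fix $\boldsymbol\alpha\in\mathbb R^J$ and write $\mathbf r(\boldsymbol\alpha)=\hat{\boldsymbol\beta}_Y-\boldsymbol\alpha$. For this fixed $\boldsymbol\alpha$, the penalty $\lambda_n\sum_j\omega_j|\alpha_j|$ does not depend on $\theta$. Minimizing \eqref{eq:Alasso} over $\theta$ is therefore a one-dimensional weighted least squares problem in $\theta$: minimize $\tfrac12(\mathbf r-\theta\hat{\boldsymbol\beta}_X)^\top\mathbf S(\mathbf r-\theta\hat{\boldsymbol\beta}_X)$.

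First I note that $\hat{\boldsymbol\beta}_X^\top\mathbf S\hat{\boldsymbol\beta}_X=\sum_j\tilde w_j\hat\beta_{Xj}^2>0$. This holds because $\tilde w_j>0$ and, almost surely under Assumption~\ref{assump:two-sample_summary_data}, not all $\hat\beta_{Xj}$ vanish. The quadratic in $\theta$ is then strictly convex. Its unique minimizer is
\begin{equation*}
\theta^*(\boldsymbol\alpha)=(\hat{\boldsymbol\beta}_X^\top\mathbf S\hat{\boldsymbol\beta}_X)^{-1}\hat{\boldsymbol\beta}_X^\top\mathbf S\,\mathbf r(\boldsymbol\alpha),
\end{equation*}
obtained from the first-order condition.

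Second, I substitute $\theta^*(\boldsymbol\alpha)$ back into the objective. Let $\mathbf P=\hat{\boldsymbol\beta}_X(\hat{\boldsymbol\beta}_X^\top\mathbf S\hat{\boldsymbol\beta}_X)^{-1}\hat{\boldsymbol\beta}_X^\top\mathbf S$, the $\mathbf S$-weighted projection onto $\mathrm{span}(\hat{\boldsymbol\beta}_X)$. Then the residual is $\mathbf r-\theta^*\hat{\boldsymbol\beta}_X=(\mathbf I-\mathbf P)\mathbf r$. Using idempotence of $\mathbf P$ and the identity $\mathbf S\mathbf P=\mathbf P^\top\mathbf S=\mathbf P^\top\mathbf S\mathbf P$, I get
\begin{equation*}
(\mathbf I-\mathbf P)^\top\mathbf S(\mathbf I-\mathbf P)=\mathbf S-\mathbf S\mathbf P=\mathbf S-\mathbf S\hat{\boldsymbol\beta}_X(\hat{\boldsymbol\beta}_X^\top\mathbf S\hat{\boldsymbol\beta}_X)^{-1}\hat{\boldsymbol\beta}_X^\top\mathbf S=\hat{\mathbf C}_n.
\end{equation*}
Hence the profiled objective is $\tfrac12(\hat{\boldsymbol\beta}_Y-\boldsymbol\alpha)^\top\hat{\mathbf C}_n(\hat{\boldsymbol\beta}_Y-\boldsymbol\alpha)+\lambda_n\sum_j\omega_j|\alpha_j|$. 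This is exactly \eqref{eq:Alasso-alpha}. This algebra is routine; I would verify the symmetry of $\hat{\mathbf C}_n$ along the way, since $\mathbf S$ is diagonal.

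The step needing the most care is the equivalence of argmins. Write $F(\theta,\boldsymbol\alpha)$ for the objective in \eqref{eq:Alasso} and $G(\boldsymbol\alpha)=\min_\theta F(\theta,\boldsymbol\alpha)=F(\theta^*(\boldsymbol\alpha),\boldsymbol\alpha)$.
\begin{itemize}
\item If $(\hat\theta_{\mathrm{AL}},\hat{\boldsymbol\alpha}_{\mathrm{AL}})$ minimizes $F$, then for every $\boldsymbol\alpha$ we have $G(\hat{\boldsymbol\alpha}_{\mathrm{AL}})\le F(\hat\theta_{\mathrm{AL}},\hat{\boldsymbol\alpha}_{\mathrm{AL}})\le F(\theta^*(\boldsymbol\alpha),\boldsymbol\alpha)=G(\boldsymbol\alpha)$. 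So $\hat{\boldsymbol\alpha}_{\mathrm{AL}}$ minimizes $G$.
\item Conversely, any minimizer $\bar{\boldsymbol\alpha}$ of $G$, paired with $\theta^*(\bar{\boldsymbol\alpha})$, minimizes $F$.
\item Combined with uniqueness of $\theta^*$, this also gives $\hat\theta_{\mathrm{AL}}=\theta^*(\hat{\boldsymbol\alpha}_{\mathrm{AL}})$.
\end{itemize}
I will also note that the statement claims only that $\hat{\boldsymbol\alpha}_{\mathrm{AL}}$ belongs to the argmin set. Uniqueness of the $\boldsymbol\alpha$-minimizer is not asserted, and it can fail because $\hat{\mathbf C}_n$ is singular (its null space contains $\hat{\boldsymbol\beta}_X$). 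Existence of a minimizer follows from coercivity: the penalty $\lambda_n\sum_j\omega_j|\alpha_j|$ with $\omega_j>0$ grows without bound, and the quadratic part is nonnegative because $\hat{\mathbf C}_n$ is positive semidefinite.
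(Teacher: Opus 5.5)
Your proposal is correct and follows essentially the same profiling argument as the paper. You solve the first-order condition for $\theta$ at fixed $\boldsymbol\alpha$ and substitute back; the paper whitens by $\mathbf S^{1/2}$ and uses the orthogonal residual-maker $M_{\widetilde{\mathbf b}}$ (so that $\mathbf S^{1/2}M_{\widetilde{\mathbf b}}\mathbf S^{1/2}=\hat{\mathbf C}_n$), whereas you work directly with the $\mathbf S$-weighted oblique projection $\mathbf P$, and your argmin-correspondence step (including $\hat\theta_{\mathrm{AL}}=\theta^*(\hat{\boldsymbol\alpha}_{\mathrm{AL}})$) fills in a step the paper leaves implicit.
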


The proof is provided in Section \ref{proof:prop_reducedcrit} of the Supplementary Material.

\subsection{Choice of the tuning parameter}
\label{subsec:tuning}

The choice of the tuning parameter $\lambda_n$ affects the performance of MR-ALasso in estimating the causal effect. The regularization parameter $\lambda_n$ controls the level of sparsity. Larger values of $\lambda_n$ impose heavier shrinkage and hence more direct-effect estimates of the genetic variants are equal to zero. Since instruments with $\hat{\alpha}_j=0$ are treated as valid, the choice of $\lambda_n$ plays a crucial role in determining both the selected instrument set and the causal effect estimate. Although K-fold cross-validation can be used to select the tuning parameter $\lambda_n$, cross-validation is appropriate for prediction and may select models that include too many variables \citep{buhlmann2011statistics,windmeijer2019use}. In the present setting, this may lead to insufficient sparsity in the estimated direct-effect parameters and may therefore affect the
selection of valid instruments.

In the absence of an externally specified value of $\lambda_n$, we select the tuning parameter using a heterogeneity stopping rule \citep{rees2019robust}, motivated by Cochran's $Q$ statistic. To determine the value of $\lambda_n$, we fit the adaptive-lasso over a range of values along the solution path. For each value of $\lambda_n$, instruments with $\hat{\alpha}_j=0$ are identified as valid, and the IVW method is computed using only these selected instruments. We then compute the residual standard error (RSE) to assess the amount of residual heterogeneity among the selected instruments. Following \citet{rees2019robust}, we move along the solution path from weaker to stronger penalization and select the value of $\lambda_n$ immediately before the first sufficiently large increase in heterogeneity, defined by $\mathrm{RSE}>1$ and an increase in RSE between two consecutive values of $\lambda_n$ that exceeds a threshold derived from a chi-squared cutoff. If no such increase is found, we select the largest value of $\lambda_n$ along the computed solution path.

After selecting $\lambda_n$, we use a post-ALasso estimator to reduce the shrinkage bias induced by penalization \citep{belloni2012sparse}. Naive standard errors for the post-selection estimator may be obtained from the weighted regression fitted using only the valid SNPs. However, as in other post-selection procedures, these standard errors do not account for the uncertainty introduced by the selection step and may therefore underestimate the true variability. For this reason, inference based on the post-selection estimator should be interpreted carefully. To address this limitation and obtain more reliable statistical inference, we consider bootstrap smoothing for the computation of standard errors and confidence intervals in Subsection \ref{subsec:bootstrap_smoothing}.

\subsection{A selection consistency result for MR-ALasso}
\label{subsec:selection_consistency}
We now provide the selection consistency result for the proposed MR-ALasso estimator. The result is formulated for the two-sample summary-data model and shows that the adaptive penalty can recover the set of invalid instruments with probability tending to one.  We assume that the number of variants $J$ is fixed and that the exposure and outcome GWAS sample sizes increase at the same speed.

\begin{assumption}
\label{assump:pleiotropy_structure}
Suppose $\mathcal{A}=\{j:\alpha_j\neq 0\}$ and $\mathcal{V}=\mathcal A^c=\{j:\alpha_j=0\}$ denote the sets of invalid and valid genetic variants, respectively, and let $s=|\mathcal{A}|$.

(i) The number of invalid variants satisfies $s < J/2.$\\
(ii) If $\mathcal A\neq\emptyset$, the nonzero direct effects are bounded away
from zero, i.e.,
$$\min_{j\in\mathcal{A}} |\alpha_j| \ge c_\alpha$$
for some constant $c_\alpha>0$.\\
(iii) The SNP-exposure associations are bounded away from zero
$$\min_{1\le j\le J}|\beta_{Xj}|\ge c_X$$
for some constant $c_X>0$.
\end{assumption}

Assumption \ref{assump:pleiotropy_structure}(i) imposes a majority-valid condition, requiring fewer than half of the genetic variants to be invalid. This condition is related to the majority-valid assumptions used in robust MR and invalid instrument selection methods \citep{bowden2016consistent, kang2016instrumental}. Assumption \ref{assump:pleiotropy_structure}(ii) requires that the nonzero direct effects are sufficiently separated from zero, which is a standard minimum signal strength condition for consistent variable selection with penalized estimators \citep{zou2006adaptive}. Assumption \ref{assump:pleiotropy_structure}(iii) avoids weak or zero SNP-exposure associations in the construction of ratio estimates.

\begin{proposition}
Let $n=n_Y$ and suppose that $n_X/n_Y\to \kappa\in(0,\infty)$. Consider the two-sample summary-data model under Assumptions \ref{assump:two-sample_summary_data} and \ref{assump:pleiotropy_structure}. Assume that the initial estimator \(\tilde{\theta}\) is  consistent, i.e., $\sqrt n(\tilde{\theta}-\theta)=O_p(1)$. Let the adaptive weights be $$\omega_j=|\tilde{\alpha}_j|^{-\nu},
\qquad
\tilde{\alpha}_j=\hat{\beta}_{Yj}-\tilde{\theta}\hat{\beta}_{Xj},$$ with $\nu>0$. Assume further that $n^{-1}\hat{\mathbf C}_n\xrightarrow{p}\mathbf C,$ where the principal submatrix $\mathbf C_{\mathcal A\mathcal A}$ is positive definite. If the tuning parameter satisfies 
\begin{equation}
\lambda_n=o(\sqrt n),
\qquad
n^{(\nu-1)/2}\lambda_n\to\infty,
\label{eq:tuning_prop}
\end{equation}
then the MR-ALasso estimator consistently identifies the set of invalid variants: $$P(\widehat{\mathcal A}_n=\mathcal A)\to 1,$$ where $\widehat{\mathcal{A}}_n=\{j:\hat{\alpha}_{\mathrm{AL},j}\neq 0\}.$
\label{prop:MR-ALasso}
\end{proposition}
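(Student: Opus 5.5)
We follow the Zou (2006) adaptive-Lasso argument, applied to the profiled criterion of Theorem \ref{thm:reducedcriterion}. The only difference from the standard case is that the adaptive weights come from $\tilde\theta$ rather than from an unpenalized fit.

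\textbf{Scaling.} By Theorem \ref{thm:reducedcriterion}, $\hat{\boldsymbol\alpha}_{\mathrm{AL}}$ minimizes
\[
Q_n(\boldsymbol\alpha)=\tfrac12(\hat{\boldsymbol\beta}_Y-\boldsymbol\alpha)^\top\hat{\mathbf C}_n(\hat{\boldsymbol\beta}_Y-\boldsymbol\alpha)+\lambda_n\sum_j\omega_j|\alpha_j| .
\]
Since $\mathbf S$ has entries $\sigma_{Yj}^{-2}\asymp n$, we have $\hat{\mathbf C}_n\asymp n$. We also note two identities:
\[
\hat{\mathbf C}_n\hat{\boldsymbol\beta}_X=0,
\qquad
\hat{\boldsymbol\beta}_Y-\boldsymbol\alpha_0=\theta\hat{\boldsymbol\beta}_X+\boldsymbol\xi ,
\]
where $\boldsymbol\xi=(\hat{\boldsymbol\beta}_Y-\boldsymbol\beta_Y)-\theta(\hat{\boldsymbol\beta}_X-\boldsymbol\beta_X)=O_p(n^{-1/2})$ by Assumption \ref{assump:two-sample_summary_data} and $n_X/n_Y\to\kappa$.

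\textbf{Localization.} Reparametrize $\boldsymbol\alpha=\boldsymbol\alpha_0+\mathbf u/\sqrt n$ and set $V_n(\mathbf u)=Q_n(\boldsymbol\alpha_0+\mathbf u/\sqrt n)-Q_n(\boldsymbol\alpha_0)$. The quadratic part equals
\[
\tfrac12\mathbf u^\top(n^{-1}\hat{\mathbf C}_n)\mathbf u-\mathbf u^\top\bigl(n^{-1/2}\hat{\mathbf C}_n\boldsymbol\xi\bigr),
\]
whose linear term is $O_p(1)$ because $n^{-1/2}\hat{\mathbf C}_n\boldsymbol\xi=(n^{-1}\hat{\mathbf C}_n)(\sqrt n\,\boldsymbol\xi)$.

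\textbf{Weights.} We need the behaviour of $\omega_j$ on each set.
\begin{itemize}
\item For $j\in\mathcal V$: $\tilde\alpha_j=(\hat\beta_{Yj}-\theta\hat\beta_{Xj})-(\tilde\theta-\theta)\hat\beta_{Xj}=O_p(n^{-1/2})$, using $\sqrt n(\tilde\theta-\theta)=O_p(1)$. Hence $\omega_j^{-1}=O_p(n^{-\nu/2})$, i.e.\ $n^{-\nu/2}\omega_j$ is bounded away from zero in probability.
\item For $j\in\mathcal A$: $\tilde\alpha_j\to_p\alpha_j$ with $|\alpha_j|\ge c_\alpha$ (Assumption \ref{assump:pleiotropy_structure}(ii)), so $\omega_j=O_p(1)$.
\end{itemize}
Assumption \ref{assump:pleiotropy_structure}(i),(iii) enter only through the hypothesis that $\tilde\theta$ is $\sqrt n$-consistent.

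\textbf{Penalty terms.} The penalty contributes $\lambda_n\omega_j(|\alpha_{0j}+u_j/\sqrt n|-|\alpha_{0j}|)$ for each $j$.
\begin{itemize}
\item For $j\in\mathcal A$, the term is bounded by $(\lambda_n/\sqrt n)\,\omega_j|u_j|\to_p0$, since $\lambda_n=o(\sqrt n)$.
\item For $j\in\mathcal V$, the term is $(\lambda_n/\sqrt n)\,\omega_j|u_j|$, and $(\lambda_n/\sqrt n)\,\omega_j\gtrsim n^{(\nu-1)/2}\lambda_n\to\infty$. So the term diverges unless $u_j=0$.
\end{itemize}
Consequently $V_n(\mathbf u)$ converges to $V(\mathbf u)=\tfrac12\mathbf u_{\mathcal A}^\top\mathbf C_{\mathcal A\mathcal A}\mathbf u_{\mathcal A}-\mathbf u_{\mathcal A}^\top W_{\mathcal A}$ if $\mathbf u_{\mathcal V}=0$, and to $+\infty$ otherwise. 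Here $W$ is the Gaussian limit of $n^{-1/2}\hat{\mathbf C}_n\boldsymbol\xi$. Because $\mathbf C_{\mathcal A\mathcal A}\succ0$, $V$ has a unique minimizer. Convexity (Geyer / Hjort–Pollard argmin lemma) gives $\sqrt n(\hat{\boldsymbol\alpha}_{\mathcal A}-\boldsymbol\alpha_{0\mathcal A})=O_p(1)$ and $\sqrt n\,\hat{\boldsymbol\alpha}_{\mathcal V}\to_p0$.

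\textbf{Step 1: $P(\mathcal A\subseteq\widehat{\mathcal A}_n)\to1$.} From $\hat\alpha_j\to_p\alpha_j\ne0$ for $j\in\mathcal A$, since $\min_{\mathcal A}|\alpha_j|\ge c_\alpha$.

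\textbf{Step 2: $P(\hat\alpha_j\neq0)\to0$ for $j\in\mathcal V$.} This is the main obstacle. Suppose $\hat\alpha_j\ne0$. The KKT condition reads
\[
\bigl[\hat{\mathbf C}_n(\hat{\boldsymbol\beta}_Y-\hat{\boldsymbol\alpha})\bigr]_j=\lambda_n\omega_j\,\mathrm{sign}(\hat\alpha_j).
\]
Divide by $\sqrt n$.
\begin{itemize}
\item The right side is $\gtrsim n^{(\nu-1)/2}\lambda_n\to\infty$.
\item The left side equals $\bigl[(n^{-1}\hat{\mathbf C}_n)\bigl(\sqrt n\,\boldsymbol\xi-\sqrt n(\hat{\boldsymbol\alpha}-\boldsymbol\alpha_0)\bigr)\bigr]_j$, using $\hat{\mathbf C}_n\hat{\boldsymbol\beta}_X=0$. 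This is $O_p(1)$ by the localization result.
\end{itemize}
Hence the event $\{\hat\alpha_j\ne0\}$ has probability tending to zero. Since $J$ is fixed, a union bound over $\mathcal V$ finishes the proof.

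The delicate point is checking that $\hat{\mathbf C}_n$ is singular without harm. Its null space is spanned by $\hat{\boldsymbol\beta}_X$, so $\mathbf C$ is only positive semidefinite; this is why positive definiteness is assumed only for $\mathbf C_{\mathcal A\mathcal A}$. The argmin step needs that block alone, because the divergent penalty pins $\mathbf u_{\mathcal V}=0$. A secondary point is that $\mathcal V$ is nonempty by Assumption \ref{assump:pleiotropy_structure}(i), and the profiled $\theta$ is identified through the valid coordinates.
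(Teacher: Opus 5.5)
Your proposal is correct and follows essentially the same route as the paper: profile out $\theta$ via Theorem~\ref{thm:reducedcriterion}, show the weights stay bounded on $\mathcal A$ and diverge at rate $n^{\nu/2}$ on $\mathcal V$, then run Zou's localization-plus-KKT argument, which you spell out in more detail than the paper (the paper only cites it). Your formulation for $j\in\mathcal V$, that $n^{-\nu/2}\omega_j$ is bounded away from zero in probability, is the correct one; the paper's $\omega_j=O_p(n^{\nu/2})$ is only an upper bound and does not by itself give $\lambda_n\omega_j/\sqrt n\xrightarrow{p}\infty$.
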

The proof is provided in Section \ref{proof:selectionconsist} of the Supplementary Material.

The proposition \ref{prop:MR-ALasso} shows that the adaptive penalty can exploit the asymptotic separation between valid and invalid variants. As a consequence, the post-MR-ALasso estimator is asymptotically equivalent to the oracle IVW estimator based on the true valid instruments; see Corollary \ref{cor:oracle_equiv}.

\begin{corollary}
Under the assumptions of Proposition \ref{prop:MR-ALasso}, let $\widehat{\mathcal V}_n=\widehat{\mathcal A}_n^c$ be the selected valid set. Then the post-MR-ALasso estimator $\hat{\theta}_{\mathrm{post\mbox{-}AL}}$, obtained by applying IVW to the selected valid set $\widehat{\mathcal V}_n$, is asymptotically identical to the oracle IVW estimator $\hat{\theta}_{\mathrm{oracle}}$, based on the true valid set $\mathcal V$, with probability tending to one.
\label{cor:oracle_equiv}
\end{corollary}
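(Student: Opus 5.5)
The argument is an event-inclusion one built on Proposition~\ref{prop:MR-ALasso}. The post-MR-ALasso estimator is
\[
\hat\theta_{\mathrm{post\mbox{-}AL}}=\frac{\sum_{j\in\widehat{\mathcal V}_n}\tilde w_j\hat\beta_{Xj}\hat\beta_{Yj}}{\sum_{j\in\widehat{\mathcal V}_n}\tilde w_j\hat\beta_{Xj}^2},
\]
that is, a fixed measurable function $g(\widehat{\mathcal V}_n;\hat{\boldsymbol\beta}_X,\hat{\boldsymbol\beta}_Y)$ of the selected set and the summary statistics. The oracle estimator is $\hat\theta_{\mathrm{oracle}}=g(\mathcal V;\hat{\boldsymbol\beta}_X,\hat{\boldsymbol\beta}_Y)$, the same function evaluated at the true valid set $\mathcal V$.

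On the event $E_n=\{\widehat{\mathcal A}_n=\mathcal A\}$ we have $\widehat{\mathcal V}_n=\mathcal V$. Both estimators are then computed from the same data and the same index set, so $\hat\theta_{\mathrm{post\mbox{-}AL}}=\hat\theta_{\mathrm{oracle}}$ exactly. Hence
\[
P\bigl(\hat\theta_{\mathrm{post\mbox{-}AL}}=\hat\theta_{\mathrm{oracle}}\bigr)\ge P(E_n)\to1
\]
by Proposition~\ref{prop:MR-ALasso}. This is the statement ``identical with probability tending to one.''

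We also need the IVW fit on $\mathcal V$ to be well defined. By Assumption~\ref{assump:pleiotropy_structure}(i), $\mathcal V$ contains more than $J/2\ge1$ elements. By Assumption~\ref{assump:pleiotropy_structure}(iii) together with Assumption~\ref{assump:two-sample_summary_data}, $\hat\beta_{Xj}\to\beta_{Xj}\neq0$ in probability. So the denominator $\sum_{j\in\mathcal V}\tilde w_j\hat\beta_{Xj}^2$ is positive with probability tending to one, and we intersect $E_n$ with this event.

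To make ``asymptotically identical'' distributional, we show the difference is negligible at every rate. For any $\epsilon>0$ and any scaling $r_n$ (in particular $r_n=\sqrt n$),
\[
P\bigl(r_n|\hat\theta_{\mathrm{post\mbox{-}AL}}-\hat\theta_{\mathrm{oracle}}|>\epsilon\bigr)\le P(E_n^c)\to0,
\]
so $r_n(\hat\theta_{\mathrm{post\mbox{-}AL}}-\hat\theta_{\mathrm{oracle}})=o_p(1)$. By Slutsky's lemma, $\sqrt n(\hat\theta_{\mathrm{post\mbox{-}AL}}-\theta)$ then has the same limit law as the oracle quantity. If desired, that limit can be stated explicitly: the oracle IVW on valid instruments satisfies $\hat\beta_{Yj}-\theta\hat\beta_{Xj}$ with mean $0$ for $j\in\mathcal V$, and, with $\sigma^2_{Xj},\sigma^2_{Yj}\asymp n^{-1}$ and $n_X/n_Y\to\kappa$, a delta-method argument gives asymptotic normality. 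Strictly, though, this is not required by the corollary.

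There is no genuine obstacle here. The only points needing care are (a) that the set-to-estimator map is deterministic, so equality of sets forces exact equality of estimators, and (b) the well-definedness of the oracle denominator, handled above.
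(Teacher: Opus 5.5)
Your proposal is correct and follows essentially the same argument as the paper. On the event $\{\widehat{\mathcal A}_n=\mathcal A\}$, whose probability tends to one by Proposition~\ref{prop:MR-ALasso}, the selected valid set equals $\mathcal V$, so both IVW estimators are computed from the same variants and coincide exactly, which gives $P(\hat\theta_{\mathrm{post\mbox{-}AL}}=\hat\theta_{\mathrm{oracle}})\to1$. Your extra remarks go beyond what the paper spells out but do not change the route: positivity of the oracle denominator, which in fact holds almost surely because the $\hat\beta_{Xj}$ are continuously distributed, and $r_n(\hat\theta_{\mathrm{post\mbox{-}AL}}-\hat\theta_{\mathrm{oracle}})=o_p(1)$ for any rate $r_n$.
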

The proof is provided in Section \ref{proof:oracle_equiv} of the Supplementary Material.

\subsection{Bootstrap smoothing}
\label{subsec:bootstrap_smoothing}

Previous work has shown that the MR-Lasso procedure may suffer from inflated type-I error and unstable selection of invalid instruments \citep{rees2019robust,grant2021pleiotropy}. Although the proposed MR-ALasso estimator improves upon MR-Lasso by using adaptive penalization, the resulting post-selection estimator remains a discontinuous function of the summary statistics. A small perturbation of $(\hat{\boldsymbol\beta}_X,\hat{\boldsymbol\beta}_Y)$ may change the set of instruments selected as valid, and hence may change the post-selection IVW estimator. Consequently, standard errors computed after conditioning on a single selected set may fail to account for the uncertainty introduced by the data-adaptive selection step. This issue is particularly important under the null hypothesis or when some direct effects are close to the valid--invalid selection boundary. We investigate the finite-sample impact of this post-selection uncertainty in the simulation study in Section~\ref{sec:simulation}, where empirical type-I error and power are reported under several pleiotropic scenarios.

To address this problem, we propose a bootstrap-smoothed extension of MR-ALasso, denoted by MR-ALasso-B. Bootstrap smoothing is a form of model averaging \citep{efron2014estimation} and is closely related to bagging in the prediction literature \citep{buja2006observations}. It is also related to recent resampling-based approaches for improving post-selection inference in summary-data MR \citep{xie2026winner}. The goal is not to construct a nonparametric bootstrap for the individual-level exposure and outcome GWAS samples, which are unavailable in a summary-data MR analysis. Instead, the proposed procedure uses SNP-level resampling to perturb the empirical instrument set, repeatedly applies the MR-ALasso selection rule, and averages the resulting post-selection IVW estimators. Thus, MR-ALasso-B is proposed as a finite-sample smoothing procedure for reducing the instability of post-selection inference.

Let $$\mathcal W_J=\left\{\mathbf w=(w_1,\ldots,w_J)^\top\in\mathbb N_0^J: \sum_{j=1}^J w_j=J \right\}$$ denote the set of all bootstrap count vectors. For $\mathbf w\in\mathcal W_J$, define $$\mathcal I(\mathbf w)=\{j:w_j>0\}$$ as the set of instruments used in the bootstrap sample. For a generic bootstrap count vector $\mathbf w\in\mathcal W_J$, let $\widehat{\mathcal V}(\mathbf w)$ denote the set of instruments selected as valid by MR-ALasso when applied to the bootstrap instrument set determined by $\mathbf w$, mapped back to the original SNP indices. For bootstrap replicate $b=1,\ldots,B_0$, we generate  $\mathbf W_b^* = (W_{1b}^*,\ldots,W_{Jb}^*)^\top \sim \mathrm{Multinomial} \left( J;\frac{1}{J},\ldots,\frac{1}{J} \right),$ where $W_{jb}^*$ represents the number of times SNP $j$ appears in bootstrap replicate $b$. This multinomial count representation is equivalent to resampling the $J$ instruments with replacement and provides a convenient formulation for bootstrap smoothing and related delta-method calculations \citep{efron2014estimation}. For the $b$th bootstrap replicate, we apply MR-ALasso to obtain the selected set of valid instruments, denoted by $\widehat{\mathcal V}(\mathbf W_b^*)$. We then compute the post-selection IVW estimate
\begin{equation}
   \hat\theta(\mathbf W_b^*)=\frac{\sum_{j\in\widehat{\mathcal V}(\mathbf W_b^*)} W_{jb}^* \hat\beta_{Xj}\hat\beta_{Yj}\hat\sigma_{Yj}^{-2}} {\sum_{j\in\widehat{\mathcal V}(\mathbf W_b^*)} W_{jb}^* \hat\beta_{Xj}^2 \hat\sigma_{Yj}^{-2}}.
   \label{eq:bootstrap_post_ivw}
\end{equation}
Bootstrap estimates for which fewer than two instruments are selected as valid are excluded. The retained bootstrap estimates are relabelled as $b=1,\ldots,B$. Thus, $B$ denotes the number of retained bootstrap replicates. The MR-ALasso-B estimator is then defined by
\begin{equation}
    \tilde\theta_{\mathrm{AL\mbox{-}B}}=\frac{1}{B}\sum_{b=1}^B \hat\theta_b,
\label{eq:post-IVW-b}
\end{equation}
where $\hat\theta_b= \hat\theta(\mathbf W_b^*)$ for bootstrap replicates.
The estimator in \eqref{eq:post-IVW-b} is a smoothed version of the post-MR-ALasso estimator. Instead of conditioning on a single selected valid set, MR-ALasso-B averages over many bootstrap-selected valid sets. This model-averaging step is intended to reduce the instability of the post-selection estimator and to partially incorporate the additional uncertainty introduced by data-adaptive instrument selection. Bootstrap-based stabilization ideas for lasso-type procedures have also been studied previously, for example, in the Bolasso of \citet{bach2008bolasso}.  
\subsubsection{Statistical properties}

Let $P_*$ denote probability with respect to the multinomial bootstrap distribution, conditional on the observed summary statistics. For $\mathbf w\in\mathcal W_J$, define
$$
D(\mathbf w)=\sum_{j\in\widehat{\mathcal V}(\mathbf w)} w_j\hat\beta_{Xj}^2\hat\sigma_{Yj}^{-2}
$$ 
and
$$\mathcal R=\left\{ \mathbf w\in\mathcal W_J: |\widehat{\mathcal V}(\mathbf w)|\ge2,\quad D(\mathbf w)>0 \right\}.
$$ 
Thus, $\mathcal R$ is the set of bootstrap count vectors for
which the post-selection IVW in \eqref{eq:bootstrap_post_ivw} is
defined. 

The variance estimator used in MR-ALasso-B is an adaptation of the nonparametric delta-method formula of \citet{efron2014estimation} for bootstrap-smoothed estimators. In the present setting, the bootstrap counts are the SNP counts $W_j^*$, and the bootstrap replication is the post-selection IVW estimate $\hat\theta(\mathbf W^*)$. Because the post-selection IVW ratio is computed only for bootstrap draws in $\mathcal R$, the corresponding covariance is taken conditionally on $\mathcal R$.

In the following result, $\mathcal R$ is treated as a fixed subset of $\mathcal W_J$, determined by the observed summary statistics and by the MR-ALasso selection rule.

\begin{theorem}
\label{thm:mralassob_efron_delta}
Conditional on the observed summary statistics, let
$$
\mathbf W^*(p)=(W_1^*(p),\ldots,W_J^*(p))^\top\sim\mathrm{Multinomial}(J;p_1,\ldots,p_J),
$$
where $p=(p_1,\ldots,p_J)$ is a probability vector, and let $p_0=(1/J,\ldots,1/J)$. Suppose that $P_{p_0}(\mathcal R)>0$, and define the conditional bootstrap-smoothed functional
$$
S_{\mathcal R}(p)=E_p\left\{\hat \theta (\mathbf W^*(p))\mid\mathbf W^*(p)\in\mathcal R\right\}.
$$ 
Then the nonparametric delta-method variance of
$S_{\mathcal R}(p)$ at $p=p_0$ is
\begin{equation}
    \sum_{j=1}^J\left(S_j^*\right)^2,
    \label{eq:efron_var_mralassob}
\end{equation}

where $S_j^*=\mathrm{Cov}_{p_0}\left[W_j^*,
\hat \theta(\mathbf W^*)\mid\mathbf W^*\in\mathcal R\right],$
or equivalently,
$S_j^*=\mathrm{Cov}_*\left[W_j^*,\hat\theta(\mathbf W^*)\mid\mathcal R\right].$
\end{theorem}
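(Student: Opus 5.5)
The plan is to follow Efron's (2014) derivation of the nonparametric delta method for bootstrap-smoothed estimators, adapted to the conditioning event $\mathcal R$.

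\textbf{Step 1: exponential tilting.} Write the multinomial probability of a count vector $\mathbf w\in\mathcal W_J$ under $p$ as
\[
P_p(\mathbf w)=P_{p_0}(\mathbf w)\prod_{j}(Jp_j)^{w_j}.
\]
Consider the directional perturbation $p(\varepsilon)$ with $p_j(\varepsilon)\propto \exp(\varepsilon\, t_j)/J$, or equivalently $p=p_0+\varepsilon\,\mathbf u$ with $\sum_j u_j=0$. Then
\[
\frac{\partial}{\partial\varepsilon}\log P_{p(\varepsilon)}(\mathbf w)\Big|_{\varepsilon=0}=\sum_j w_j t_j - J\bar t,
\]
which is the score of the multinomial family at $p_0$.

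\textbf{Step 2: derivative of the conditional functional.} Write
\[
S_{\mathcal R}(p)=\frac{\sum_{\mathbf w\in\mathcal R}\hat\theta(\mathbf w)P_p(\mathbf w)}{\sum_{\mathbf w\in\mathcal R}P_p(\mathbf w)} .
\]
This is a finite sum, because $\mathcal W_J$ is finite and $\mathcal R$ is fixed. The denominator is positive near $p_0$ since $P_{p_0}(\mathcal R)>0$, so $S_{\mathcal R}$ is smooth in $p$. Differentiate with the quotient rule. The score from Step 1 appears in both the numerator and the denominator, and the constant $J\bar t$ cancels between them. What remains is
\[
\frac{d}{d\varepsilon}S_{\mathcal R}(p(\varepsilon))\Big|_0=\sum_j t_j\,\mathrm{Cov}_{p_0}\bigl[W_j^*,\hat\theta(\mathbf W^*)\mid\mathcal R\bigr]=\sum_j t_j S_j^*.
\]
The gradient of $S_{\mathcal R}$ with respect to $p$, restricted to the simplex, is therefore $J\,S_j^*$ up to an additive constant. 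The constant is killed because $\sum_j u_j=0$. It is also harmless because $\sum_j W_j^*=J$ is fixed, which forces $\sum_j S_j^*=0$.

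\textbf{Step 3: apply the nonparametric delta method.} Treat $p_0$ as the empirical distribution on the $J$ SNPs, so that resampling perturbs $p$ with covariance $(\mathrm{diag}(p_0)-p_0p_0^\top)/J$. The delta-method variance is $\dot S^\top\,\Sigma\,\dot S$ with $\dot S_j=J S_j^*$ and $\Sigma=(\mathrm{diag}(p_0)-p_0p_0^\top)/J$. Expanding,
\[
\dot S^\top\Sigma\,\dot S=\frac{1}{J}\Bigl[\frac1J\sum_j (JS_j^*)^2-\Bigl(\frac1J\sum_j JS_j^*\Bigr)^2\Bigr]=\sum_j (S_j^*)^2 ,
\]
using $\sum_j S_j^*=0$. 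This is exactly \eqref{eq:efron_var_mralassob}. The second form of $S_j^*$ in the statement is immediate, since $P_{p_0}$ coincides with the bootstrap law $P_*$ given the data.

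\textbf{Main obstacle.} The delicate point is the bookkeeping of the conditioning on $\mathcal R$. One must verify that differentiating a ratio of tilted sums produces a \emph{conditional} covariance, and not an unconditional one plus a correction term. This rests on $\mathcal R$ being held fixed as $p$ varies, which the statement grants, and on the score being linear in $\mathbf w$. A related subtlety is the normalization convention of the influence function, namely the factor $J$ versus $1/J$. I would settle this by checking the formula against a linear statistic $\hat\theta(\mathbf w)=\sum_j w_j a_j/J$ with $\mathcal R=\mathcal W_J$, where both sides equal $\sum_j(a_j-\bar a)^2/J^2$.
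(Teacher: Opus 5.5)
Your proposal is correct and follows essentially the same route as the paper. Both arguments differentiate the ratio $A(p)/B(p)$ by the quotient rule, use the multinomial score at $p_0$ to obtain $J\,\mathrm{Cov}_{p_0}[W_j^*,\hat\theta(\mathbf W^*)\mid\mathcal R]$, and then apply the nonparametric delta method. The differences are cosmetic: the paper perturbs along the directions $\delta_j-p_0$ and invokes Efron's formula $J^{-2}\sum_j\dot S_{\mathcal R,j}^2$ directly, whereas you use a general tilting direction and derive that formula from the covariance $(\mathrm{diag}(p_0)-p_0p_0^\top)/J$ together with $\sum_j S_j^*=0$.
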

The proof is provided in the Supplementary Material, Section \ref{supp:mralassob_efron_delta}.

The estimated variance of $\tilde\theta_{\mathrm{AL\mbox{-}B}}$ is the analogue of \eqref{eq:efron_var_mralassob},
\begin{equation}
    \widehat{\mathrm{Var}}\left(\tilde\theta_{\mathrm{AL\mbox{-}B}} \right)= \sum_{j=1}^J \widehat S_j^2,
    \label{eq:est_var_mralassob}    
\end{equation}
where $\widehat S_j=\frac{1}{B}\sum_{b=1}^B\left(W_{jb}^*-\bar W_j^*\right)\left\{\hat\theta(\mathbf W_b^*) - \tilde\theta_{\mathrm{AL\mbox{-}B}} \right\}$ and $\bar W_j^*=\frac{1}{B}\sum_{b=1}^BW_{jb}^*$. A Wald-type confidence interval is then constructed as
$$
\tilde\theta_{\mathrm{AL\mbox{-}B}}\pm z_{1-\alpha/2}\left\{\widehat{\mathrm{Var}}\left(\tilde\theta_{\mathrm{AL\mbox{-}B}}\right) \right\}^{1/2}.$$
The variance estimator measures the sensitivity of the smoothed post-selection estimator to perturbations of the empirical instrument distribution. It is therefore used to capture variation arising from both the post-selection IVW estimates and the instability of the selected valid sets across bootstrap replicates. It should not be interpreted as a nonparametric bootstrap variance for the individual-level GWAS samples. The implementation of the MR-ALasso-B framework is provided in Algorithm \ref{alg:mr_alasso_b}.
\begin{algorithm}[t]
\small
\DontPrintSemicolon
\SetAlgoLined
\caption{MR-ALasso-B}
\label{alg:mr_alasso_b}
\KwIn{
Summary data
$\{(\hat\beta_{Xj},\hat\sigma_{Xj},
\hat\beta_{Yj},\hat\sigma_{Yj})\}_{j=1}^J$, bootstrap replicates, threshold $\tau$
}
\KwOut{
$\tilde\theta_{\mathrm{AL\mbox{-}B}}$, $\widehat\sigma_{\mathrm{AL\mbox{-}B}}$, confidence interval, aggregated valid set $\widehat{\mathcal V}_{\mathrm{agg}}$}

\For{$b=1,\dots,B$}{
Draw
$$
\mathbf W_b^*=(W_{1b}^*,\dots,W_{Jb}^*)^\top\sim\mathrm{Multinomial}(J;1/J,\dots,1/J)
$$\;
Construct the bootstrap instrument set by resampling SNPs according to $\mathbf W_b^*$\;
Apply MR-ALasso to the bootstrap instrument set\;
Map the selected valid instruments back to the original SNP indices and obtain $\widehat{\mathcal V}(\mathbf W_b^*)$\;
\If{$|\widehat{\mathcal V}(\mathbf W_b^*)|\ge2$}{
Compute the weighted post-selection IVW estimate
$\hat\theta(\mathbf W_b^*)$ using \eqref{eq:bootstrap_post_ivw}\;
}
\Else{
Set $\hat\theta(\mathbf W_b^*)$ to missing\;
}
}
Compute the causal estimator $\tilde\theta_{\mathrm{AL\mbox{-}B}}=B^{-1}\sum_{b=1}^B \hat\theta(\mathbf W_b^*)$\;
Compute selection frequencies $\hat\pi_j=B^{-1}\sum_{b=1}^B I\{j\in\widehat{\mathcal V}(\mathbf W_b^*)\} $ and define $\widehat{\mathcal V}_{\mathrm{agg}}=\{j:\hat\pi_j\ge\tau\}$\;
Compute $\widehat S_j=B^{-1}\sum_{b=1}^B(W_{jb}^*-\bar W_j^*)\left\{\hat\theta(\mathbf W_b^*)-\tilde\theta_{\mathrm{AL\mbox{-}B}}\right\}$\;
Estimate $\widehat\sigma_{\mathrm{AL\mbox{-}B}}^2=\sum_{j=1}^J\widehat S_j^2$\;
Construct the confidence interval
$$\tilde\theta_{\mathrm{AL\mbox{-}B}} \pm z_{1-\alpha/2} \sqrt{\widehat\sigma_{\mathrm{AL\mbox{-}B}}^2} $$
\end{algorithm}

\begin{proposition}
\label{prop:efron_variance_limit}
Conditional on the observed summary statistics, suppose that $P_*(\mathcal R)>0$ and that 
$$E_*\left[\hat\theta(\mathbf W^*)^2\mid\mathcal R\right]<\infty.$$ 
Then, as $B\to\infty$, $ \widehat S_j \xrightarrow{P_*} S_j^*$, where $ S_j^* = \mathrm{Cov}_* \left[ W_j^*, \hat\theta(\mathbf W^*) \mid \mathcal R \right].$ Consequently, since $J$ is fixed,
$$
\sum_{j=1}^J \widehat S_j^2 \xrightarrow{P_*} \sum_{j=1}^J (S_j^*)^2.
$$
\end{proposition}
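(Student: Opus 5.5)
The plan is to treat the retained bootstrap replicates as an i.i.d.\ sample from the conditional distribution of $(\mathbf W^*,\hat\theta(\mathbf W^*))$ given $\mathbf W^*\in\mathcal R$. Then $\widehat S_j$ is an empirical covariance, and the weak law of large numbers together with the continuous mapping theorem gives the result. First I will justify the i.i.d.\ structure. The draws $\mathbf W_1^*,\ldots,\mathbf W_{B_0}^*$ are i.i.d.\ multinomial under $P_*$, and $\mathcal R$ is a fixed subset of the finite set $\mathcal W_J$, since it is determined by the observed summary statistics. The retained draws are those falling in $\mathcal R$. Because $P_*(\mathcal R)>0$, the number retained satisfies $B/B_0\to P_*(\mathcal R)$ almost surely, so $B\to\infty$ exactly when $B_0\to\infty$. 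Conditionally on $B$ (or via a standard rejection-sampling argument), the retained pairs $(\mathbf W_b^*,\hat\theta(\mathbf W_b^*))$ are i.i.d.\ with the law of $(\mathbf W^*,\hat\theta(\mathbf W^*))$ given $\mathcal R$.

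Next I will expand the empirical covariance as
\[
\widehat S_j=\frac1B\sum_{b=1}^B W_{jb}^*\hat\theta_b-\bar W_j^*\,\tilde\theta_{\mathrm{AL\mbox{-}B}} .
\]
This identity holds because $\tilde\theta_{\mathrm{AL\mbox{-}B}}$ is the mean of the $\hat\theta_b$, so the centering cross terms cancel. The moment conditions for the law of large numbers are then checked one by one.
\begin{itemize}
\item $0\le W_j^*\le J$, so every moment of $W_j^*$ is finite.
\item $E_*[\hat\theta^2\mid\mathcal R]<\infty$ implies $E_*[|\hat\theta|\mid\mathcal R]<\infty$.
\item $E_*[|W_j^*\hat\theta|\mid\mathcal R]\le J\,E_*[|\hat\theta|\mid\mathcal R]<\infty$.
\end{itemize}
In fact $\mathcal W_J$ is finite and $\hat\theta$ takes finitely many finite values on $\mathcal R$, since $D(\mathbf w)>0$ there. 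The moment assumption is therefore automatic, but I will cite it as stated. The weak law of large numbers then gives
\[
\frac1B\sum_b W_{jb}^*\hat\theta_b\to E_*[W_j^*\hat\theta\mid\mathcal R],\qquad \bar W_j^*\to E_*[W_j^*\mid\mathcal R],\qquad \tilde\theta_{\mathrm{AL\mbox{-}B}}\to E_*[\hat\theta\mid\mathcal R],
\]
all in $P_*$-probability. By Slutsky's theorem, $\widehat S_j\to E_*[W_j^*\hat\theta\mid\mathcal R]-E_*[W_j^*\mid\mathcal R]\,E_*[\hat\theta\mid\mathcal R]=S_j^*$.

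Finally, $J$ is fixed, so the vector $(\widehat S_1,\ldots,\widehat S_J)$ converges jointly in probability to $(S_1^*,\ldots,S_J^*)$, because componentwise convergence in probability implies joint convergence. The map $x\mapsto\sum_j x_j^2$ is continuous, so the continuous mapping theorem gives $\sum_j\widehat S_j^2\to\sum_j(S_j^*)^2$.

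The main obstacle is the first step: making precise that conditioning on retention preserves the i.i.d.\ structure when the number $B$ of retained replicates is itself random. I would handle it by indexing on $B_0$. I would write each average over retained draws as a ratio of averages over all $B_0$ draws, for example
\[
\frac{B_0^{-1}\sum_{b\le B_0} W_{jb}^*\hat\theta_b\,\mathbf 1\{\mathbf W_b^*\in\mathcal R\}}{B_0^{-1}\sum_{b\le B_0}\mathbf 1\{\mathbf W_b^*\in\mathcal R\}} .
\]
I would then apply the ordinary law of large numbers to numerator and denominator, whose denominator limit $P_*(\mathcal R)$ is positive. This sidesteps the random sample size entirely.
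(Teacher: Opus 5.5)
Your proposal is correct and follows the paper's proof. Both expand $\widehat S_j=B^{-1}\sum_b W_{jb}^*\hat\theta(\mathbf W_b^*)-\bar W_j^*\tilde\theta_{\mathrm{AL\mbox{-}B}}$, apply the law of large numbers to the three averages under the law conditional on $\mathcal R$, and finish with the continuous mapping theorem. Your extra points are sound: the ratio-of-averages argument over all $B_0$ draws properly justifies the i.i.d.\ structure despite the random retained count $B$, which the paper only asserts, and on the finite set $\mathcal W_J$ the moment condition holds automatically.
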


The proof is provided in the Supplementary Material, Section~\ref{supp:variance_limit}.

Proposition \ref{prop:efron_variance_limit} shows that the empirical variance estimator in \eqref{eq:est_var_mralassob}  converges to the SNP-level delta-method variance target in \eqref{eq:efron_var_mralassob}. This is the component of uncertainty targeted by MR-ALasso-B and is directly related to the instability of post-selection inference observed in finite samples.

MR-ALasso-B also summarizes the stability of instrument selection. For each SNP $j$, define the bootstrap selection frequency
$$
\hat\pi_j = \frac{1}{B} \sum_{b=1}^B I\{j\in\widehat{\mathcal V}(\mathbf W_b^*)\}.
$$
The aggregated valid set is $\widehat{\mathcal V}_{\mathrm{agg}} = \{j:\hat\pi_j\ge\tau\},$ where $\tau\in(0,1)$ is a pre-specified threshold. In our implementation, we use $\tau=0.5$, corresponding to a majority-vote rule. SNPs not contained in $\widehat{\mathcal V}_{\mathrm{agg}}$ are reported as invalid. This aggregation rule is also related to stability selection \citep{meinshausen2010stability,shah2013variable}, but the formal error-control guarantees from stability selection do not apply directly here. The present procedure uses SNP-level bootstrap perturbations and summarizes the selection of valid instruments rather than active regression variables. Therefore, $\widehat{\mathcal V}_{\mathrm{agg}}$ is used as a descriptive measure of classification stability. As in the stability-selection literature, the threshold can be viewed as a tuning parameter controlling the trade-off between classification stability and the risk of misclassifying instruments.

\section{Simulation Study}
\label{sec:simulation}
We considered two simulation designs to evaluate the performance of the proposed methods. Model-I is based on an individual-level structural model and is used to assess estimation and inference under pleiotropic settings, including both balanced and directional pleiotropy, and both InSIDE-satisfying and InSIDE-violating scenarios. Model-II is a summary-data design motivated by Proposition~\ref{prop:MR-ALasso} and Corollary~\ref{cor:oracle_equiv}, and is used to examine whether the finite-sample behavior of the estimator is consistent with the corresponding theoretical arguments, particularly with respect to invalid-instrument identification.

\subsection{Model-I}
The Model-I simulation study is based on an individual-level structural model and assesses the performance of the competing MR methods across different types of pleiotropic cases. The  data-generating process is similar to those used by \citet{slob2020comparison} and \citet{grant2021pleiotropy}. For each individual $i=1,\dots,n$, we generated $J=100$ independent genetic variants $Z_{ij}$, an unobserved confounder $U_i$, an exposure $X_i$, and an outcome $Y_i$ according to
\begin{align*}
U_i &= \sum_{j=1}^J \phi_j Z_{ij} + \varepsilon_{Ui}, \\
X_i &= \sum_{j=1}^J \gamma_j Z_{ij} + U_i + \varepsilon_{Xi}, \\
Y_i &= \sum_{j=1}^J \alpha_j Z_{ij} + \theta X_i + U_i + \varepsilon_{Yi}.
\end{align*}
The error terms $\varepsilon_{Ui}$, $\varepsilon_{Xi}$, and $\varepsilon_{Yi}$ were generated independently from standard normal distributions. The genetic variants were simulated as $Z_{ij} \sim \mathrm{Binomial}(2,\mathrm{maf}_j),$ $\mathrm{maf}_j \sim \mathrm{Uniform}(0.1,0.5),$ independently across variants. Two independent samples were generated an exposure sample of size $n_X$ and an outcome sample of size $n_Y$, with $n_Y=n_X=20000$. In the exposure sample, SNP-exposure associations were obtained by fitting linear regressions of $X$ on each variant separately. In the outcome sample, SNP-outcome associations were obtained by fitting linear regressions of $Y$ on each variant separately. This yielded the two-sample summary data $(\hat{\beta}_{Xj},\hat{\beta}_{Yj})$ and their standard errors. We considered $p_I\in\{0.15,0.30,0.45\},$ where $p_I$ denotes the proportion of invalid instruments, to investigate settings with increasing proportions of invalid instruments. We also considered two instrument strength designs. In the equal-strength design, valid and invalid instruments had the same average SNP-exposure association, $\gamma_{\mathcal A}=\gamma_{\mathcal V}$. In the invalid-stronger design, invalid instruments had stronger SNP-exposure associations, $\gamma_{\mathcal A}=3\gamma_{\mathcal V}$, following \citet{windmeijer2019use}. Further, the vector of exposure effects $\gamma=(\gamma_1,\dots,\gamma_J)$ was then adjusted so that the variants explained a fixed 12$\%$ of the variance in the exposure.

We considered the following four pleiotropic scenarios:
\begin{enumerate}
\item {Balanced pleiotropy, InSIDE satisfied}: The pleiotropic effects for invalid variants were generated from $\alpha_j \sim N(0,0.2^2)$. The parameter $\phi_j$ was set to zero for all variants so that the InSIDE assumption was satisfied; 
\item {Directional pleiotropy, InSIDE satisfied}: Directional pleiotropy was introduced by generating $\alpha_j \sim N(0.1,0.2^2)$ for the invalid variants. The parameter $\phi_j$ was set to zero for all variants so that the InSIDE assumption is satisfied; 

\item {Directional pleiotropy, InSIDE violated}: The pleiotropic effects follow $\alpha_j \sim N(0.1,0.2^2)$, while $\phi_j$ is generated from $\phi_j \sim U(0,0.1)$. The nonzero values of $\phi_j$ induce
a violation of the InSIDE assumption;
\item {Balanced pleiotropy, InSIDE violated}: The $\alpha_j$'s corresponding to the invalid instruments were generated from $\alpha_j \sim N(0,0.2^2)$ so that they remain centered at zero, whereas $\phi_j \sim U(-0.05,0.05)$ introduces dependence that violates the InSIDE assumption.
\end{enumerate}

In scenarios 1 and 2, we assess the performance of the methods when pleiotropy is present, but the InSIDE assumption remains valid. In scenarios 3 and 4, the InSIDE assumption is violated because of the direct association of the genetic variants with the confounder \citep{grant2021pleiotropy}. These scenarios are useful for studying how the methods perform when pleiotropic effects operate partly through pathways involving the unobserved confounder. Inference was conducted at the 5\% significance level. We considered two values for the true causal effect, $\theta \in \{0,\,0.2\}$, corresponding to the null case and a moderate causal effect, respectively. The simulation scenario was replicated 2000 times.


For comparison, we considered the following MR methods: the standard inverse-variance weighted estimator (MR-IVW), a robust regression method based on MM-estimation with Tukey's bisquare objective function (MR-Robust), MR-Egger regression (MR-Egger), the weighted median estimator (MR-Median), the weighted mode estimator (MR-Mode), and the summary-data Lasso procedure (MR-Lasso). These methods were implemented using the R package \texttt{MendelianRandomization} \citep{yavorska2017mendelianrandomization,broadbent2020mendelianrandomization}. The proposed MR-ALasso and MR-ALasso-B methods were implemented in the \texttt{MRAlasso} package \citep{qasim_mralasso_2026}. The performance of each method was assessed using several criteria, including bias, the standard deviation (SD) of the causal effect estimates, root mean squared error (RMSE), empirical coverage probability (CP), and empirical power. For $\theta=0$, we report empirical type I error. For $\theta=0.20$, we report empirical coverage probability and empirical power. We used RMSE as the primary summary measure to compare methods, since it combines information on both bias and variability. However, the relative importance of each performance measure depends on the aim of the analysis.

Figure \ref{fig:model-I-boxplot} shows the distribution of the estimated causal effects for the MR methods across the four pleiotropy scenarios when the invalid instruments are stronger than the valid instruments for $p_I=0.30$. The boxplots summarize the empirical distribution of the estimates across simulation replicates. Outlying points were omitted from the display to improve readability, although all simulated estimates were retained when computing the performance measures. Figure \ref{fig:model-I-boxplot} shows that MR-ALasso-B and MR-ALasso had the smallest RMSE in all four scenarios. The boxplots also indicate that MR-ALasso-B estimates had the less dispersion than those from the other methods, notably under directional pleiotropy and violation of the InSIDE assumption (Scenario 3).

\begin{figure}[htbp]
    \centering 
    \begin{subfigure}{\textwidth}
        \centering
        \includegraphics[width=\textwidth,height=0.44\textheight,keepaspectratio]{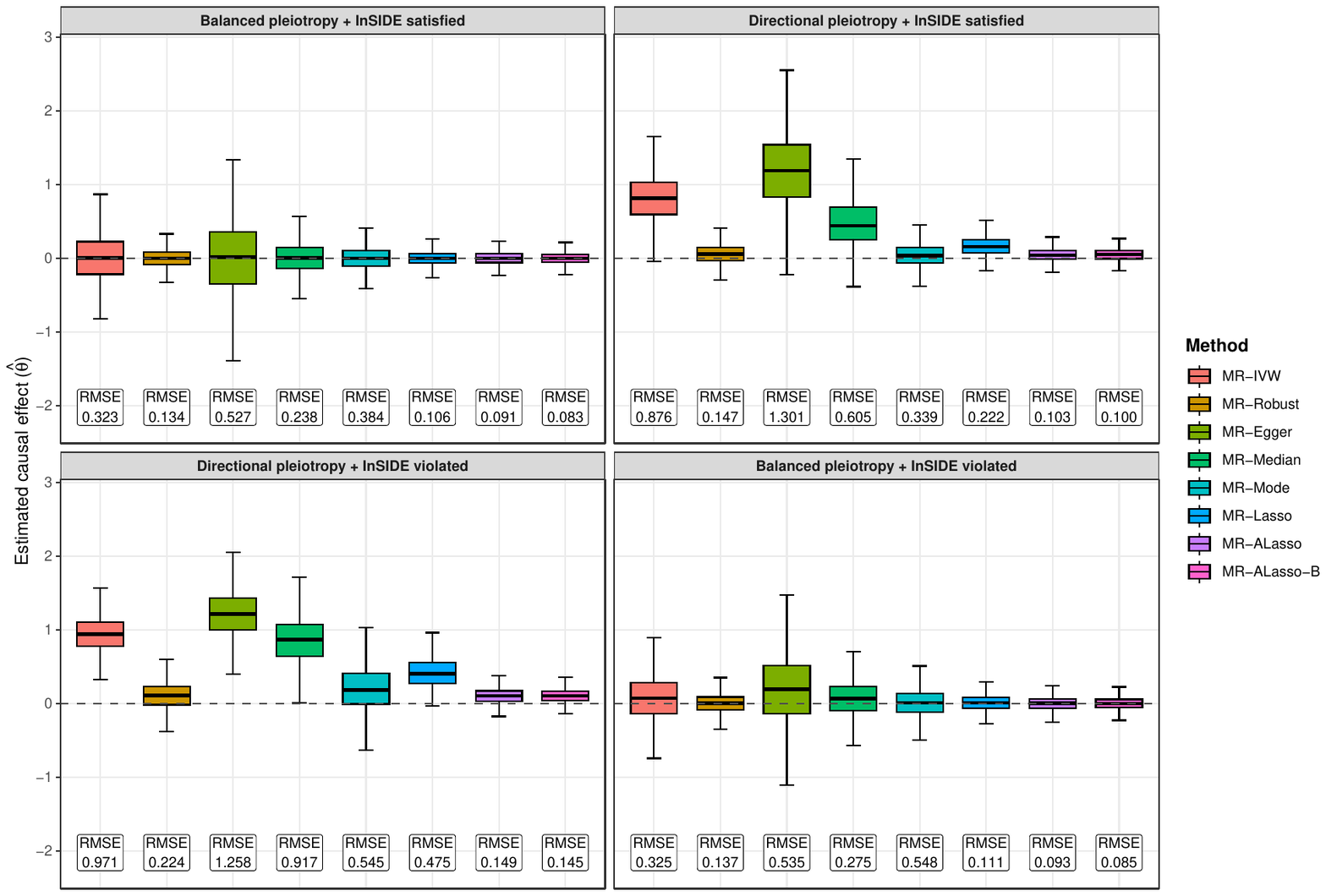}
        \caption{True causal effect $\theta = 0$.}
        \label{fig:theta0}
    \end{subfigure}
    
    \begin{subfigure}{\textwidth}
        \centering
        \includegraphics[width=\textwidth,height=0.45\textheight,keepaspectratio]{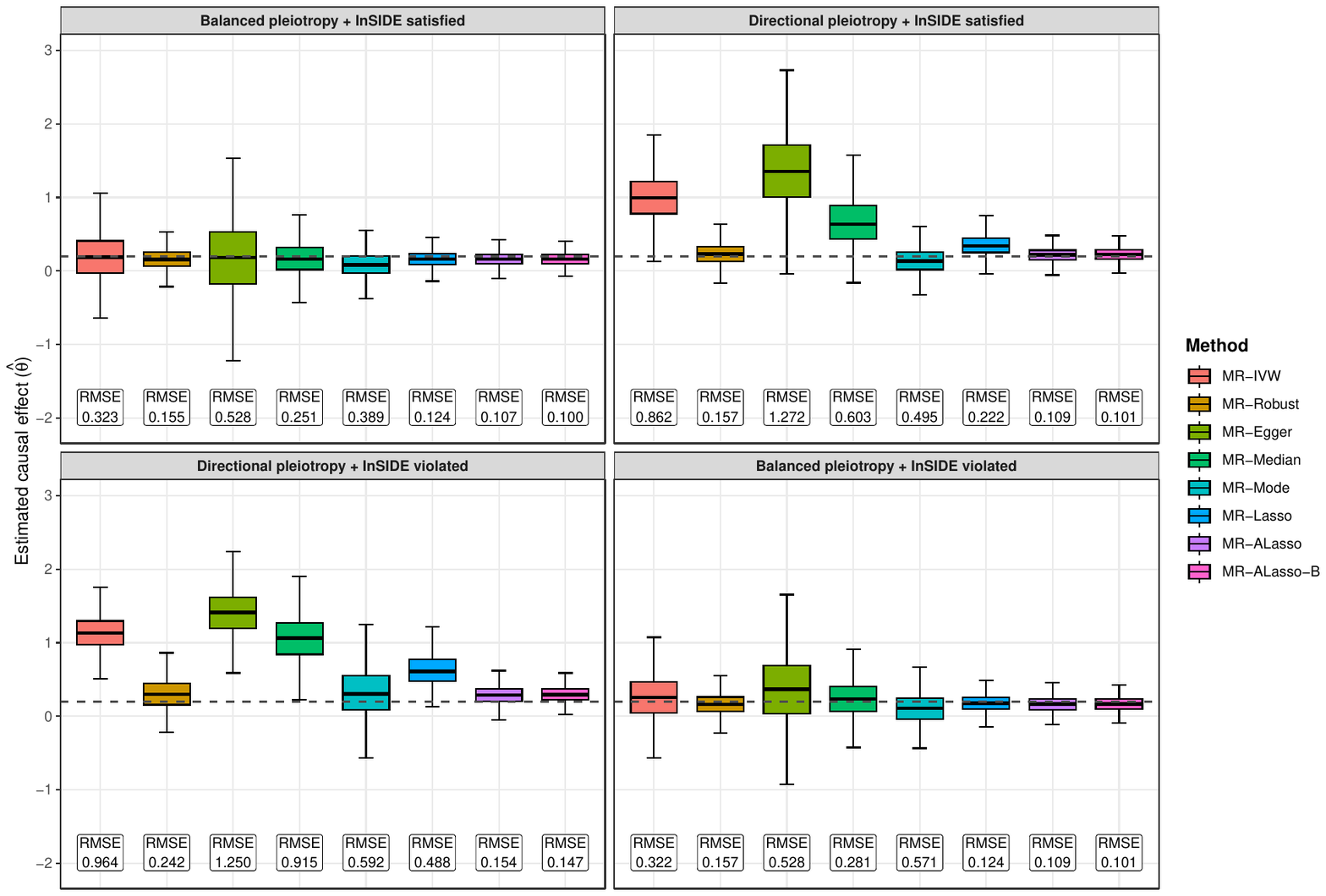}
        \caption{True causal effect $\theta = 0.2$.}
        \label{fig:theta02}
    \end{subfigure}
    
    \caption{Distribution of estimated causal effects under different pleiotropy scenarios in Model-I. The dashed horizontal line indicates the true causal effect. RMSE values are shown below the corresponding boxplots.}
    \label{fig:model-I-boxplot}
\end{figure}

Tables \ref{tab:modelI_theta0_stronger_invalid} and \ref{tab:model_I-theta0.2} report the simulation results for the invalid-stronger design under the null hypothesis $\theta=0$ and the equal-strength design under the alternative hypothesis $\theta=0.20$, respectively. The remaining results are given in the Supplementary Material. The simulation results show that the proposed MR-ALasso improved upon MR-Lasso in estimation accuracy, whereas MR-ALasso-B improved statistical inference after instrument selection. These gains were most noticeable in the more difficult settings with directional pleiotropy, violation of the InSIDE assumption, and stronger invalid instruments.

Table \ref{tab:modelI_theta0_stronger_invalid} shows that the standard MR-IVW estimator performed poorly in Scenarios 2 and 3. In these settings, MR-IVW had a large bias and severe inflation of type-I error. MR-Egger also performed poorly, with large RMSE and unstable estimation. MR-Median and MR-Mode methods were more stable than MR-IVW in some settings, but their behavior was not uniform across all scenarios. MR-Robust performed well in the milder settings, but its performance weakened as the invalid-instrument structure became more difficult. Scenario 3 was the most challenging null setting, since pleiotropic effects were directional and the InSIDE assumption was violated. In this case, MR-IVW, MR-Egger, and MR-Median all performed poorly, with very large bias and type-I error close to one in several cases. MR-Robust was less biased than those methods, but its type-I error still became too large when the proportion of invalid instruments increased.

In addition, these results also showed a clear distinction between the proposed methods. MR-ALasso generally had smaller SD and RMSE than MR-Lasso, indicating that the adaptive penalty improved the estimation step. This was especially clear when invalid instruments were stronger than valid instruments, where ordinary Lasso penalization was more affected by the presence of strong invalid variants. However, both MR-Lasso and MR-ALasso remained anti-conservative under the null, with empirical coverage probabilities below the nominal level and type-I error rates that increased as the proportion of invalid instruments increased. On the other hand, MR-ALasso-B had much better type-I error control and coverage than the lasso-based procedures. This pattern was consistent across all four pleiotropy scenarios. In the invalid-stronger design, MR-ALasso-B also achieved the smallest RMSE among the proposed methods in Table \ref{tab:modelI_theta0_stronger_invalid}, which indicates that the gain in inference was not obtained at the cost of poor estimation. Similar findings were observed in the equal-strength design (Table \ref{tab:model-I-theta0}), where, for example, at 45\% invalid instruments, MR-ALasso-B had a type-I error of 6\%, which was close to the nominal significance level. MR-Robust gave the second smallest type-I error at 10\%, but its RMSE was significantly higher than that of the proposed methods.

Table \ref{tab:model_I-theta0.2} reports the simulation results when the true causal effect was positive. In Scenario 1, MR-Robust, MR-Lasso, MR-ALasso, and MR-ALasso-B all produced small bias and small RMSE, whereas MR-Egger showed large bias and MR-IVW yielded higher RMSE than the robust and Lasso-based methods. MR-Median also performed reasonably well, while MR-Mode had lower power and larger RMSE. Under directional pleiotropy with InSIDE satisfied (Scenario 2), MR-IVW became increasingly biased as the proportion of invalid instruments increased, and the MR-Median estimator, despite high power, showed increasing bias at 30\% and 45\% invalid instruments. In this setting, MR-ALasso improved upon MR-Lasso at all invalid-instrument proportions through smaller bias and smaller RMSE, which is consistent with the adaptive penalty providing better downweighting of variants with larger direct effects. MR-ALasso-B had similar RMSE to MR-ALasso and gave higher empirical coverage probabilities, although with some reduction in power at the largest proportion. In Scenario 3, where pleiotropic effects were directional and the InSIDE assumption was violated. In this case, MR-IVW, MR-Egger, and MR-Median showed large bias and very large RMSE, whereas MR-Robust, although much better than these methods, also showed poor performance as the proportion of invalid instruments increased. MR-Lasso retained full or nearly full power, but exhibited large bias and poor coverage. While MR-ALasso substantially reduced both bias and RMSE relative to MR-Lasso. MR-ALasso-B had slightly larger bias than MR-ALasso in some cases, but it provided higher coverage probabilities and lower RMSE than standard methods. Finally, under balanced pleiotropy with violation of InSIDE (Scenario 4), Lasso-based procedures performed well relative to the other MR methods. In addition, MR-ALasso improved slightly upon MR-Lasso, and MR-ALasso-B provided the highest coverage probabilities. Overall, Model-I shows that MR-ALasso improves estimation relative to MR-Lasso, while MR-ALasso-B improves post-selection inference by increasing coverage and reducing anti-conservative behavior. 

\begin{table}[H]
\caption{Simulation results for Model-I under different pleiotropy scenarios when $\theta=0$ and $\gamma_{\mathcal{A}} = 3\gamma_{\mathcal{V}}$.}
\centering
\resizebox{\textwidth}{!}{
\begin{tabular}{lcccc cccc cccc}
\toprule
\multirow{3}{*}{Method}
& \multicolumn{4}{c}{15\% Invalid}
& \multicolumn{4}{c}{30\% Invalid}
& \multicolumn{4}{c}{45\% Invalid} \\
\cmidrule(lr){2-5} \cmidrule(lr){6-9} \cmidrule(lr){10-13}
& $\hat{\theta}$ & SD & RMSE & Type-I
& $\hat{\theta}$ & SD & RMSE & Type-I
& $\hat{\theta}$ & SD & RMSE & Type-I \\
\midrule

\multicolumn{13}{l}{\textit{Scenario 1: Balanced pleiotropy, InSIDE satisfied}} \\

MR-IVW      & 0.000 & 0.281 & 0.281 & 0.35 & 0.004 & 0.323 & 0.323 & 0.25 & -0.003 & 0.339 & 0.339 & 0.17 \\
MR-Robust   & 0.000 & 0.062 & 0.062 & 0.06 & 0.001 & 0.134 & 0.134 & 0.06 & 0.008 & 0.289 & 0.289 & 0.08 \\
MR-Egger    & -0.002 & 0.540 & 0.540 & 0.39 & 0.004 & 0.527 & 0.527 & 0.21 & -0.008 & 0.522 & 0.522 & 0.09 \\
MR-Median   & 0.003 & 0.158 & 0.158 & 0.36 & 0.002 & 0.238 & 0.238 & 0.44 & 0.001 & 0.300 & 0.300 & 0.49 \\
MR-Mode     & -0.002 & 0.161 & 0.161 & 0.06 & 0.002 & 0.384 & 0.384 & 0.05 & -0.018 & 0.833 & 0.833 & 0.08 \\
MR-Lasso    & 0.000 & 0.063 & 0.063 & 0.19 & 0.001 & 0.106 & 0.106 & 0.32 & -0.002 & 0.164 & 0.164 & 0.44 \\
MR-ALasso   & 0.000 & 0.058 & 0.058 & 0.16 & 0.000 & 0.091 & 0.091 & 0.26 & -0.001 & 0.134 & 0.134 & 0.36 \\
MR-ALasso-B & 0.001 & 0.055 & 0.055 & 0.03 & 0.000 & 0.083 & 0.083 & 0.03 & -0.002 & 0.126 & 0.126 & 0.05 \\

\midrule
\multicolumn{13}{l}{\textit{Scenario 2: Directional pleiotropy, InSIDE satisfied}} \\

MR-IVW      & 0.506 & 0.282 & 0.579 & 0.81 & 0.814 & 0.325 & 0.876 & 0.91 & 1.042 & 0.341 & 1.096 & 0.95 \\
MR-Robust   & 0.009 & 0.063 & 0.063 & 0.06 & 0.059 & 0.135 & 0.147 & 0.07 & 0.320 & 0.326 & 0.457 & 0.17 \\
MR-Egger    & 0.922 & 0.543 & 1.069 & 0.80 & 1.187 & 0.531 & 1.301 & 0.82 & 1.312 & 0.527 & 1.414 & 0.80 \\
MR-Median   & 0.214 & 0.198 & 0.292 & 0.65 & 0.497 & 0.345 & 0.605 & 0.85 & 0.763 & 0.407 & 0.865 & 0.94 \\
MR-Mode     & 0.015 & 0.161 & 0.162 & 0.06 & 0.043 & 0.336 & 0.339 & 0.07 & 0.071 & 0.665 & 0.669 & 0.09 \\
MR-Lasso    & 0.038 & 0.067 & 0.077 & 0.26 & 0.172 & 0.141 & 0.222 & 0.66 & 0.411 & 0.257 & 0.485 & 0.90 \\
MR-ALasso   & 0.011 & 0.060 & 0.061 & 0.18 & 0.045 & 0.093 & 0.103 & 0.31 & 0.137 & 0.146 & 0.200 & 0.55 \\
MR-ALasso-B & 0.013 & 0.055 & 0.057 & 0.03 & 0.048 & 0.088 & 0.100 & 0.04 & 0.148 & 0.137 & 0.202 & 0.09 \\

\midrule
\multicolumn{13}{l}{\textit{Scenario 3: Directional pleiotropy, InSIDE violated}} \\

MR-IVW      & 0.691 & 0.248 & 0.734 & 0.97 & 0.941 & 0.242 & 0.971 & 1.00 & 1.085 & 0.226 & 1.108 & 1.00 \\
MR-Robust   & 0.020 & 0.073 & 0.076 & 0.11 & 0.118 & 0.190 & 0.224 & 0.21 & 0.542 & 0.385 & 0.665 & 0.55 \\
MR-Egger    & 1.046 & 0.382 & 1.114 & 0.97 & 1.215 & 0.327 & 1.258 & 0.99 & 1.294 & 0.294 & 1.327 & 1.00 \\
MR-Median   & 0.543 & 0.321 & 0.631 & 0.94 & 0.859 & 0.319 & 0.917 & 0.99 & 1.025 & 0.296 & 1.067 & 1.00 \\
MR-Mode     & 0.110 & 0.712 & 0.720 & 0.23 & 0.223 & 0.498 & 0.545 & 0.27 & 0.288 & 1.503 & 1.530 & 0.29 \\
MR-Lasso    & 0.110 & 0.095 & 0.146 & 0.61 & 0.426 & 0.210 & 0.475 & 0.96 & 0.746 & 0.255 & 0.789 & 1.00 \\
MR-ALasso   & 0.025 & 0.066 & 0.071 & 0.29 & 0.104 & 0.107 & 0.149 & 0.58 & 0.277 & 0.164 & 0.322 & 0.87 \\
MR-ALasso-B & 0.025 & 0.060 & 0.065 & 0.04 & 0.114 & 0.099 & 0.151 & 0.10 & 0.283 & 0.141 & 0.317 & 0.26 \\

\midrule
\multicolumn{13}{l}{\textit{Scenario 4: Balanced pleiotropy, InSIDE violated}} \\

MR-IVW      & 0.036 & 0.280 & 0.282 & 0.36 & 0.069 & 0.317 & 0.325 & 0.26 & 0.096 & 0.323 & 0.337 & 0.19 \\
MR-Robust   & 0.003 & 0.064 & 0.064 & 0.06 & 0.003 & 0.137 & 0.137 & 0.08 & 0.036 & 0.293 & 0.295 & 0.12 \\
MR-Egger    & 0.107 & 0.523 & 0.533 & 0.40 & 0.185 & 0.502 & 0.535 & 0.25 & 0.263 & 0.486 & 0.552 & 0.18 \\
MR-Median   & 0.034 & 0.178 & 0.181 & 0.41 & 0.073 & 0.265 & 0.275 & 0.51 & 0.117 & 0.316 & 0.337 & 0.55 \\
MR-Mode     & 0.058 & 1.904 & 1.904 & 0.09 & 0.009 & 0.548 & 0.548 & 0.10 & 0.038 & 0.919 & 0.919 & 0.08 \\
MR-Lasso    & 0.004 & 0.064 & 0.065 & 0.21 & 0.014 & 0.110 & 0.111 & 0.35 & 0.037 & 0.169 & 0.173 & 0.47 \\
MR-ALasso   & 0.003 & 0.060 & 0.060 & 0.19 & 0.002 & 0.093 & 0.093 & 0.29 & 0.006 & 0.134 & 0.134 & 0.38 \\
MR-ALasso-B & 0.000 & 0.055 & 0.055 & 0.04 & 0.002 & 0.083 & 0.083 & 0.04 & 0.005 & 0.122 & 0.122 & 0.03 \\

\bottomrule
\end{tabular}
}
\begin{tablenotes}[flushleft]
\scriptsize
\item Note: $\hat{\theta}$ is the empirical mean of the estimated causal effect; SD is the standard deviation; RMSE is the root mean squared error; and Type-I error is the empirical rejection probability under the null hypothesis $\theta=0$; and $\gamma_{\mathcal{A}} = 3\gamma_{\mathcal{V}}$ is the case when invalid instruments are stronger than valid instruments.
\end{tablenotes}
\label{tab:modelI_theta0_stronger_invalid}
\end{table}

\begin{table}[H]
\centering
\caption{Simulation results for Model-I under different pleiotropy scenarios when $\theta=0.20$ and $\gamma_{\mathcal{A}} = \gamma_{\mathcal{V}}$.}
\resizebox{\textwidth}{!}{
\begin{tabular}{lccccc ccccc ccccc}
\toprule
\multirow{2}{*}{Method}
& \multicolumn{5}{c}{15\% Invalid}
& \multicolumn{5}{c}{30\% Invalid}
& \multicolumn{5}{c}{45\% Invalid} \\
\cmidrule(lr){2-6} \cmidrule(lr){7-11} \cmidrule(lr){12-16}
& $\hat{\theta}$ & SD & RMSE & CP & Power
& $\hat{\theta}$ & SD & RMSE & CP & Power
& $\hat{\theta}$ & SD & RMSE & CP & Power \\
\midrule

\multicolumn{16}{l}{\textit{Scenario 1: Balanced pleiotropy, InSIDE satisfied}} \\
MR-IVW       & 0.184 & 0.146 & 0.147 & 0.94 & 0.30 & 0.187 & 0.206 & 0.206 & 0.93 & 0.18 & 0.183 & 0.249 & 0.249 & 0.94 & 0.13 \\
MR-Robust    & 0.184 & 0.038 & 0.041 & 0.92 & 1.00 & 0.184 & 0.052 & 0.054 & 0.94 & 0.94 & 0.187 & 0.092 & 0.093 & 0.97 & 0.53 \\
MR-Egger     & 0.048 & 0.429 & 0.455 & 0.92 & 0.06 & 0.065 & 0.609 & 0.624 & 0.94 & 0.06 & 0.054 & 0.752 & 0.765 & 0.93 & 0.06 \\
MR-Median    & 0.175 & 0.050 & 0.056 & 0.94 & 0.94 & 0.176 & 0.061 & 0.065 & 0.92 & 0.85 & 0.177 & 0.078 & 0.082 & 0.89 & 0.73 \\
MR-Mode      & 0.151 & 0.095 & 0.106 & 0.97 & 0.34 & 0.148 & 0.261 & 0.266 & 0.97 & 0.35 & 0.148 & 0.242 & 0.247 & 0.96 & 0.34 \\
MR-Lasso     & 0.185 & 0.040 & 0.043 & 0.89 & 1.00 & 0.184 & 0.048 & 0.050 & 0.88 & 0.99 & 0.184 & 0.062 & 0.064 & 0.84 & 0.94 \\
MR-ALasso    & 0.185 & 0.039 & 0.042 & 0.90 & 1.00 & 0.185 & 0.047 & 0.049 & 0.89 & 0.99 & 0.186 & 0.060 & 0.062 & 0.86 & 0.94 \\
MR-ALasso-B  & 0.186 & 0.039 & 0.041 & 0.97 & 0.99 & 0.186 & 0.046 & 0.048 & 0.97 & 0.93 & 0.187 & 0.058 & 0.060 & 0.97 & 0.80 \\

\midrule
\multicolumn{16}{l}{\textit{Scenario 2: Directional pleiotropy, InSIDE satisfied}} \\
MR-IVW       & 0.434 & 0.149 & 0.278 & 0.68 & 0.86 & 0.684 & 0.210 & 0.528 & 0.35 & 0.92 & 0.929 & 0.254 & 0.772 & 0.16 & 0.97 \\
MR-Robust    & 0.187 & 0.039 & 0.041 & 0.94 & 1.00 & 0.201 & 0.052 & 0.052 & 0.95 & 0.96 & 0.280 & 0.101 & 0.129 & 0.97 & 0.72 \\
MR-Egger     & 0.059 & 0.475 & 0.495 & 0.93 & 0.07 & 0.075 & 0.659 & 0.671 & 0.94 & 0.06 & 0.083 & 0.808 & 0.816 & 0.94 & 0.06 \\
MR-Median    & 0.199 & 0.051 & 0.051 & 0.96 & 0.98 & 0.236 & 0.063 & 0.073 & 0.91 & 0.97 & 0.296 & 0.088 & 0.131 & 0.74 & 0.97 \\
MR-Mode      & 0.152 & 0.102 & 0.113 & 0.97 & 0.34 & 0.152 & 0.318 & 0.322 & 0.97 & 0.36 & 0.150 & 0.247 & 0.252 & 0.96 & 0.36 \\
MR-Lasso     & 0.191 & 0.041 & 0.041 & 0.91 & 1.00 & 0.207 & 0.051 & 0.051 & 0.89 & 1.00 & 0.250 & 0.072 & 0.088 & 0.73 & 0.99 \\
MR-ALasso    & 0.188 & 0.040 & 0.041 & 0.91 & 1.00 & 0.194 & 0.048 & 0.049 & 0.90 & 0.99 & 0.216 & 0.065 & 0.067 & 0.84 & 0.97 \\
MR-ALasso-B  & 0.190 & 0.040 & 0.041 & 0.97 & 0.99 & 0.198 & 0.047 & 0.047 & 0.97 & 0.95 & 0.222 & 0.064 & 0.068 & 0.98 & 0.87 \\

\midrule
\multicolumn{16}{l}{\textit{Scenario 3: Directional pleiotropy, InSIDE violated}} \\
MR-IVW       & 0.731 & 0.202 & 0.568 & 0.07 & 0.99 & 1.033 & 0.221 & 0.862 & 0.01 & 1.00 & 1.220 & 0.216 & 1.043 & 0.00 & 1.00 \\
MR-Robust    & 0.194 & 0.043 & 0.044 & 0.94 & 0.99 & 0.230 & 0.075 & 0.080 & 0.96 & 0.87 & 0.453 & 0.188 & 0.315 & 0.88 & 0.52 \\
MR-Egger     & 1.486 & 0.566 & 1.405 & 0.11 & 0.94 & 1.764 & 0.518 & 1.647 & 0.05 & 0.97 & 1.798 & 0.461 & 1.663 & 0.04 & 0.99 \\
MR-Median    & 0.315 & 0.089 & 0.146 & 0.52 & 1.00 & 0.574 & 0.226 & 0.437 & 0.10 & 1.00 & 0.902 & 0.303 & 0.765 & 0.01 & 1.00 \\
MR-Mode      & 0.150 & 0.218 & 0.224 & 0.96 & 0.33 & 0.160 & 0.095 & 0.103 & 0.93 & 0.36 & 0.184 & 0.229 & 0.229 & 0.89 & 0.45 \\
MR-Lasso     & 0.210 & 0.046 & 0.047 & 0.86 & 1.00 & 0.294 & 0.081 & 0.124 & 0.44 & 1.00 & 0.502 & 0.161 & 0.342 & 0.06 & 1.00 \\
MR-ALasso    & 0.194 & 0.043 & 0.043 & 0.89 & 1.00 & 0.214 & 0.058 & 0.060 & 0.81 & 0.99 & 0.275 & 0.085 & 0.113 & 0.56 & 0.99 \\
MR-ALasso-B  & 0.196 & 0.042 & 0.042 & 0.97 & 0.99 & 0.219 & 0.056 & 0.059 & 0.97 & 0.94 & 0.287 & 0.081 & 0.119 & 0.94 & 0.81 \\

\midrule
\multicolumn{16}{l}{\textit{Scenario 4: Balanced pleiotropy, InSIDE violated}} \\
MR-IVW       & 0.224 & 0.157 & 0.158 & 0.92 & 0.40 & 0.257 & 0.214 & 0.221 & 0.90 & 0.31 & 0.285 & 0.246 & 0.260 & 0.91 & 0.26 \\
MR-Robust    & 0.186 & 0.039 & 0.041 & 0.92 & 1.00 & 0.187 & 0.054 & 0.056 & 0.94 & 0.92 & 0.201 & 0.099 & 0.099 & 0.97 & 0.53 \\
MR-Egger     & 0.291 & 0.572 & 0.579 & 0.79 & 0.26 & 0.426 & 0.653 & 0.690 & 0.82 & 0.24 & 0.533 & 0.677 & 0.755 & 0.82 & 0.25 \\
MR-Median    & 0.184 & 0.052 & 0.055 & 0.94 & 0.95 & 0.193 & 0.071 & 0.072 & 0.90 & 0.87 & 0.207 & 0.095 & 0.095 & 0.84 & 0.80 \\
MR-Mode      & 0.156 & 0.412 & 0.415 & 0.99 & 0.20 & 0.175 & 2.384 & 2.384 & 0.99 & 0.10 & 0.120 & 1.139 & 1.141 & 0.99 & 0.05 \\
MR-Lasso     & 0.187 & 0.040 & 0.042 & 0.90 & 1.00 & 0.188 & 0.050 & 0.051 & 0.86 & 0.99 & 0.194 & 0.063 & 0.063 & 0.84 & 0.95 \\
MR-ALasso    & 0.187 & 0.040 & 0.042 & 0.90 & 1.00 & 0.186 & 0.049 & 0.051 & 0.88 & 0.99 & 0.188 & 0.061 & 0.062 & 0.86 & 0.95 \\
MR-ALasso-B  & 0.187 & 0.039 & 0.041 & 0.96 & 0.99 & 0.187 & 0.048 & 0.050 & 0.96 & 0.93 & 0.189 & 0.058 & 0.059 & 0.97 & 0.78 \\
\bottomrule
\end{tabular}
}
\begin{tablenotes}[flushleft]
\scriptsize
\item Note: $\hat{\theta}$ denotes the empirical mean of the estimated causal effect; SD is the empirical standard deviation; RMSE is the root mean squared error; CP is the empirical coverage probability of the nominal 95\% confidence interval; and Power is the empirical probability of rejecting the null hypothesis $H_0:\theta=0$ when the true causal effect is $\theta=0.20$. The condition $\gamma_{\mathcal{A}} = \gamma_{\mathcal{V}}$ indicates that valid and invalid instruments have equal strength on average.
\end{tablenotes}

\label{tab:model_I-theta0.2}
\end{table}

\subsection{Model-II}

Model-II was considered as a summary-data simulation study to examine the finite-sample implications of Proposition \ref{prop:MR-ALasso} and Corollary \ref{cor:oracle_equiv}. Specifically, it was used to assess whether adaptive penalization improves the identification of invalid instruments relative to MR-Lasso, and whether the resulting post-selection estimator behaves similarly to the oracle IVW benchmark when the majority-valid condition holds. Unlike Model-I, which was created primarily to assess overall estimation and inference under a more realistic individual-level data-generating process, Model-II was considered to study invalid instrument identification and oracle-type post-selection behavior under directly simulated summary-data associations. Following \citet{zhao2020statistical}, we generated summary statistics directly under the independence condition corresponding to their Assumption 1. Summary associations were generated as $\hat{\beta}_{Xj} \sim N(\beta_{Xj},\mathrm{SE}(\hat{\beta}_{Xj})^2)$ and $\hat{\beta}_{Yj} \sim N(\beta_{Yj},\mathrm{SE}(\hat{\beta}_{Yj})^2)$, where the SNP--exposure effects were generated as $\beta_{Xj}=\gamma_j\sim |N(\mu_\gamma,\sigma_\gamma^2)|$. We set $\mu_\gamma=0.1$ and $\sigma_\gamma=0.02$. The SNP--outcome associations followed the structural model $\beta_{Yj}=\theta\beta_{Xj}+\alpha_j$, where $\theta \in\{0, 0.2\}$ denotes the true causal effect and $\alpha_j$ represents the direct (pleiotropic) effect of variant $j$ on the outcome.

For each genetic variant $j=1,\dots,100$, the minor allele frequency (maf) was generated independently from a uniform distribution $\text{maf}_j \sim \mathrm{Uniform}(0.1,0.5)$ \citep{slob2020comparison}. The variance of variant $j$ was therefore $\mathrm{Var}(Z_j)=2\,\text{maf}_j(1-\text{maf}_j)$. The standard errors of the summary association estimates were set to $\mathrm{SE}(\hat{\beta}_{Xj})=(\sigma_X^2/(n_X\,\mathrm{Var}(Z_j)))^{1/2}$ and $\mathrm{SE}(\hat{\beta}_{Yj})=(\sigma_Y^2/(n_Y\,\mathrm{Var}(Z_j)))^{1/2}$, where we set $\sigma_X^2=\sigma_Y^2=1$ and considered two sample sizes, $n_X=n_Y\in\{10000, 20000\}$. A proportion $p_I$ of the variants was assumed to be invalid instruments, for which the direct effect was set to $\alpha_j = 0.2$, while the remaining variants were valid instruments. We considered $p_I\in\{0.15,0.30,0.45\}$ to investigate scenarios with increasing levels of invalid instruments. We also considered two designs for the instrument strengths; (i) in the \textit{equal strength design}, all variants $\gamma_j$ were generated from the same distribution, (ii) in the \textit{invalid stronger design}, the invalid variants were made stronger than valid instruments, $\gamma_{\mathcal{A}} = 3\gamma_{\mathcal{V}}$.

This simulation design was chosen to represent the assumptions underlying Proposition \ref{prop:MR-ALasso}. First, the proportion of invalid instruments was set to $p_I\in\{0.15,0.30,0.45\}$, so that fewer than half of the candidate instruments were invalid and the majority-valid condition held. This design is also consistent with the sparse invalid instrument setting commonly considered in the instrumental variable literature. In the individual-level instrumental variable model, \citet{kang2016instrumental} showed that identification and estimation are possible when fewer than half of the candidate instruments are invalid, and \citet{windmeijer2019use} considered an ALasso procedure under the same majority-valid condition. Second, the direct effects of invalid variants were fixed at a nonzero value, which ensured the identifiability of valid and invalid instruments, corresponding to the minimum signal condition in the proposition.

Table \ref{tab:modelII_main_30} reports the Model-II results for the 30\% invalid-instrument setting, which corresponds most closely to the motivating design in \citet{windmeijer2019use}. The full set of Model-II results, including the 15\% and 45\% invalid instrument settings, is provided in the supplementary materials, see Tables \ref{tab:modelII_theta0.0} and \ref{tab:modelII_theta0.2}. In addition to the usual estimation summaries, this simulation design also reports the average number of variants identified as invalid, $\bar{s}_{\mathrm{sel}}$, and the proportion of replicates in which all invalid variants were correctly identified, ``All invalid''. These two measures were included because Model-II was designed to examine the selection behavior of the Lasso-based procedures more directly than Model-I. It can be seen that in the equal strength design, MR-ALasso had a bias close to zero, and low RMSE under both $\theta=0$ and $\theta=0.20$. Moreover, $\bar{s}_{\mathrm{sel}}$ was close to the true number of invalid instruments and the frequency of all invalid instruments selected was equal to one when $p_I=0.30$. On the other hand, MR-Lasso was less stable. Even when its bias was small, it often excluded more instruments than were truly invalid, indicating that valid variants were often incorrectly removed. This contrast became more significant in the invalid stronger design, where MR-Lasso selected too many variants as invalid and, in some settings, exhibited very large bias and RMSE, whereas MR-ALasso showed much more stability. These findings also support Corollary \ref{cor:oracle_equiv}. In several settings, the post-MR-ALasso estimator was numerically very close to the Oracle IVW benchmark in both bias and RMSE, especially in the equal strength design and in the 30\% invalid-instrument setting shown in Table \ref{tab:modelII_main_30}. This pattern is consistent with the theoretical conclusion that, when the invalid set is correctly identified, post-selection IVW should behave similarly to the oracle estimator based on the true valid set. The naive MR-IVW estimator, which used all instruments without a selection step, showed substantial bias when invalid variants were included. The simulation results for MR-ALasso-B were more mixed in this design. Because Model-II was constructed primarily to examine invalid-instrument identification and oracle-type post-selection behavior, the inferential performance of MR-ALasso-B is assessed mainly under Model-I.

\begin{table}[t]
\caption{Simulation results for Model-II when $p_I=0.30$.}
\centering
\resizebox{\textwidth}{!}{
\begin{tabular}{lll ccccc ccccc}
\toprule
\multirow{2}{*}{$\theta$} & \multirow{2}{*}{Design} & \multirow{2}{*}{Method}
& \multicolumn{5}{c}{$n=10000$}
& \multicolumn{5}{c}{$n=20000$} \\
\cmidrule(lr){4-8} \cmidrule(lr){9-13}
& & & $\hat{\theta}$ & SD & RMSE & $\bar{s}_{\mathrm{sel}}$ & All invalid
& $\hat{\theta}$ & SD & RMSE & $\bar{s}_{\mathrm{sel}}$ & All invalid \\
\midrule

\multirow{10}{*}{$0$}
& \multirow{5}{*}{$\gamma_{\mathcal A}=\gamma_{\mathcal V}$}
& MR-IVW       & 0.563 & 0.037 & 0.564 &   --   &  --    & 0.570 & 0.034 & 0.571 &  --    &  --    \\
& & Oracle IVW  & 0.000 & 0.018 & 0.018 & 30 & 1 & 0.000 & 0.013 & 0.013 & 30 & 1 \\
& & MR-Lasso    & 0.005 & 0.020 & 0.021 & 30.9 & 1 & 0.003 & 0.014 & 0.015 & 30.8 & 1 \\
& & MR-ALasso   & 0.002 & 0.019 & 0.019 & 30.6 & 1 & 0.001 & 0.013 & 0.014 & 30.1 & 1 \\
& & MR-ALasso-B & 0.005 & 0.019 & 0.019 & 30.7 & 1 & 0.003 & 0.013 & 0.014 & 30.7 & 1 \\
\cmidrule(lr){4-13}
& \multirow{5}{*}{$\gamma_{\mathcal A}=3\gamma_{\mathcal V}$}
& MR-IVW       & 0.506 & 0.016 & 0.507 &   --   &  --    & 0.508 & 0.015 & 0.508 &   --   &  --    \\
& & Oracle IVW  & 0.000 & 0.018 & 0.018 & 30 & 1 & 0.000 & 0.013 & 0.013 & 30 & 1 \\
& & MR-Lasso    & 0.509 & 0.035 & 0.510 & 79.6 & 0 & 0.520 & 0.033 & 0.521 & 86.6 & 0 \\
& & MR-ALasso   & 0.003 & 0.019 & 0.020 & 30.7 & 1 & 0.002 & 0.013 & 0.014 & 30.1 & 1 \\
& & MR-ALasso-B & 0.008 & 0.020 & 0.022 & 31.0 & 1 & 0.004 & 0.014 & 0.014 & 30.7 & 1 \\

\midrule

\multirow{10}{*}{$0.20$}
& \multirow{5}{*}{$\gamma_{\mathcal A}=\gamma_{\mathcal V}$}
& MR-IVW       & 0.758 & 0.038 & 0.559 &   --   &   --   & 0.767 & 0.035 & 0.568 &   --   &   --   \\
& & Oracle IVW  & 0.195 & 0.018 & 0.019 & 30 & 1 & 0.198 & 0.013 & 0.013 & 30 & 1 \\
& & MR-Lasso    & 0.202 & 0.022 & 0.022 & 31.3 & 1 & 0.202 & 0.016 & 0.016 & 31.1 & 1 \\
& & MR-ALasso   & 0.198 & 0.020 & 0.020 & 30.8 & 1 & 0.199 & 0.014 & 0.014 & 30.9 & 1 \\
& & MR-ALasso-B & 0.202 & 0.019 & 0.019 & 31.0 & 1 & 0.202 & 0.014 & 0.014 & 31.0 & 1 \\
\cmidrule(lr){4-13}
& \multirow{5}{*}{$\gamma_{\mathcal A}=3\gamma_{\mathcal V}$}
& MR-IVW       & 0.705 & 0.017 & 0.505 &  --    &  --    & 0.707 & 0.015 & 0.508 &  --    &   --   \\
& & Oracle IVW  & 0.195 & 0.018 & 0.019 & 30 & 1 & 0.198 & 0.013 & 0.013 & 30 & 1 \\
& & MR-Lasso    & 0.706 & 0.035 & 0.507 & 78.5 & 0.001 & 0.718 & 0.034 & 0.519 & 86.2 & 0 \\
& & MR-ALasso   & 0.200 & 0.021 & 0.021 & 30.5 & 1 & 0.200 & 0.014 & 0.014 & 30.1 & 1 \\
& & MR-ALasso-B & 0.207 & 0.021 & 0.022 & 31.3 & 1 & 0.203 & 0.014 & 0.014 & 31.0 & 1 \\

\bottomrule
\end{tabular}
}
\begin{tablenotes}[flushleft]
\scriptsize
\item Note: Oracle IVW is the MR-IVW estimator computed using the true valid set of instruments. $\hat{\theta}$ denotes the empirical mean of the estimated causal effect; SD is the empirical standard deviation; RMSE is the root mean squared error; $\bar{s}_{\mathrm{sel}}$ is the average number of instruments identified as invalid; and ``All invalid'' is the empirical proportion of replicates in which the selected invalid set contains all truly invalid instruments, i.e., $\mathcal A\subseteq\widehat{\mathcal A}_n$.
\end{tablenotes}
\label{tab:modelII_main_30}
\end{table}

The benefit of MR-ALasso is also evident in the invalid-stronger design $(\gamma_{\mathcal A}=3\gamma_{\mathcal V})$. In this setting, MR-Lasso often performed poorly, especially when $p_I$ was moderate or large. By comparison, MR-ALasso showed a more stable performance. When $p_I=0.15$ and $p_I=0.30$, MR-ALasso had bias close to zero, smaller RMSE than MR-Lasso, and nearly perfect recovery of the invalid set. When $p_I=0.45$, where performance became more difficult for all methods, MR-ALasso remained much more accurate than MR-Lasso, especially at the larger sample size. These results are consistent with the role of the adaptive penalty, which penalizes variants with small apparent direct effects more strongly while allowing variants with clear evidence of nonzero direct effects to be selected as invalid. This improves the separation between valid and invalid instruments and is in line with Proposition \ref{prop:MR-ALasso}.

\section{Real-Data Application}
\label{sec:application}

To evaluate the practical performance of the proposed MR methods, we conducted large-scale bidirectional pairwise MR analyses across 22 complex traits \citep{hu2022mendelian}. The selected traits span a broad spectrum of anthropometric, cardiometabolic, psychiatric, behavioral, and social phenotypes that have been extensively studied in complex-trait human genetics. spanning cardiovascular, anthropometric, immune, neurological/psychiatric, and social domains \citep{hu2022mendelian}. For each trait, candidate IVs were selected by applying a GWAS p-value threshold $(5 \times 10^{-8})$ and subsequently pruned for linkage disequilibrium using an $r^2$ threshold of 0.001 to ensure instrument independence. Bidirectional MR analyses were conducted for all pairs using MR-IVW, MR-Robust, MR-Egger, MR-Median, MR-Mode, MR-Lasso, MR-ALasso, and MR-ALasso-B. 

\begin{figure}[H]
    \centering
    \includegraphics[width=\textwidth]{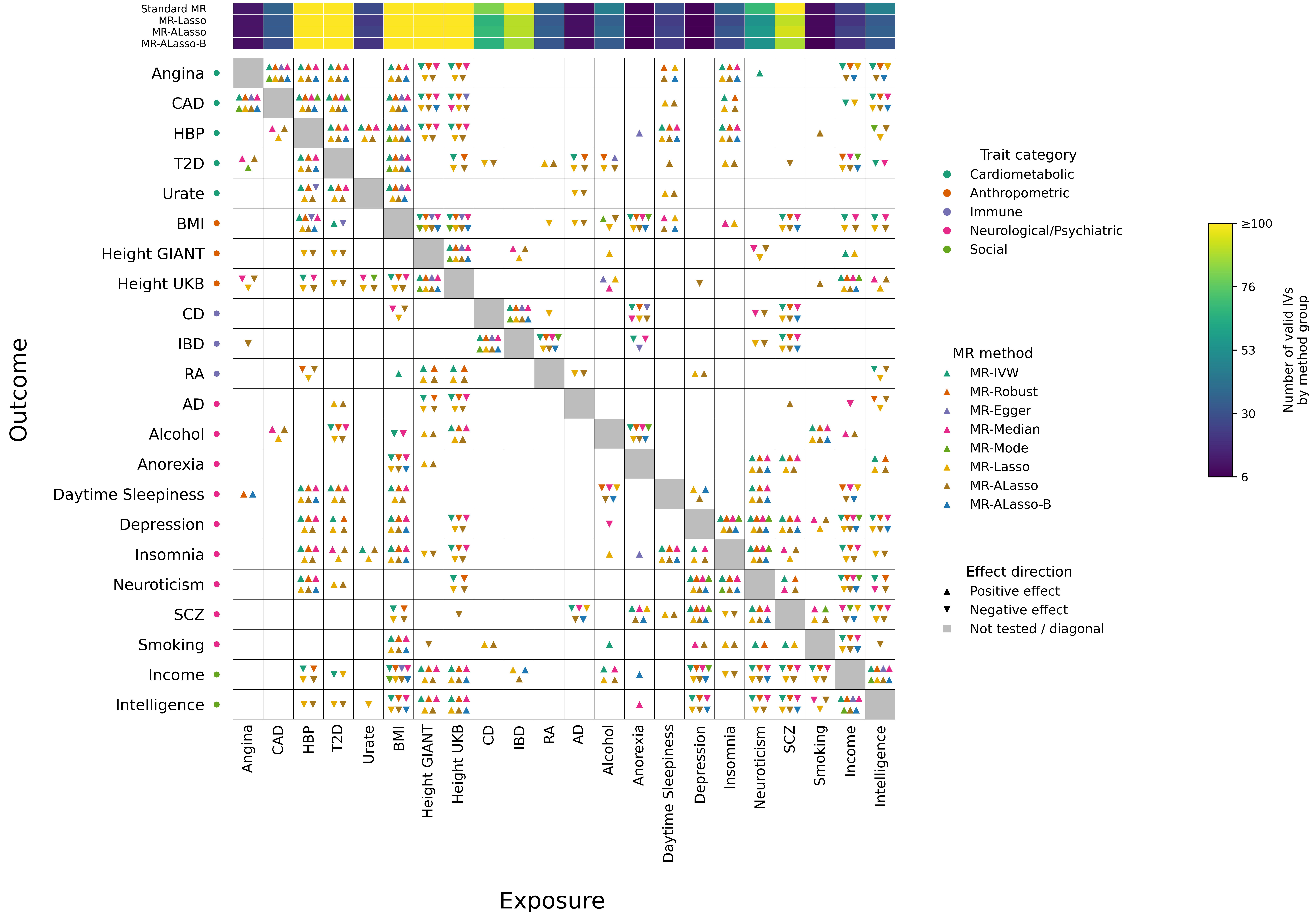}
    \caption{Bidirectional MR matrix summarizing results for 22 complex traits using multiple MR methods. The matrix presents pairwise MR results between exposure (rows) and outcome (columns) traits. Traits are categorized into cardiometabolic, anthropometric, immune, neurological/psychiatric, and social groups. Each cell shows colored triangles for MR methods with an FDR-adjusted $p$-value less than 0.05. Upward and downward triangles indicate positive and negative causal effects, respectively. The horizontal color bars indicate the number of instrumental variables used in the analysis. Trait abbreviations: CAD, coronary artery disease; HBP, high blood pressure; T2D, type 2 diabetes; BMI, body mass index; CD, Crohn’s disease; IBD, inflammatory bowel disease; RA, rheumatoid arthritis; AD, Alzheimer’s disease; SCZ, schizophrenia; Height GIANT, height from the GIANT consortium GWAS; Height UKB, height from the UK Biobank GWAS.}
    \label{fig:Bidirectional_MR_matrix.}
\end{figure}

\begin{figure}[H]
    \centering
    \includegraphics[width=\textwidth]{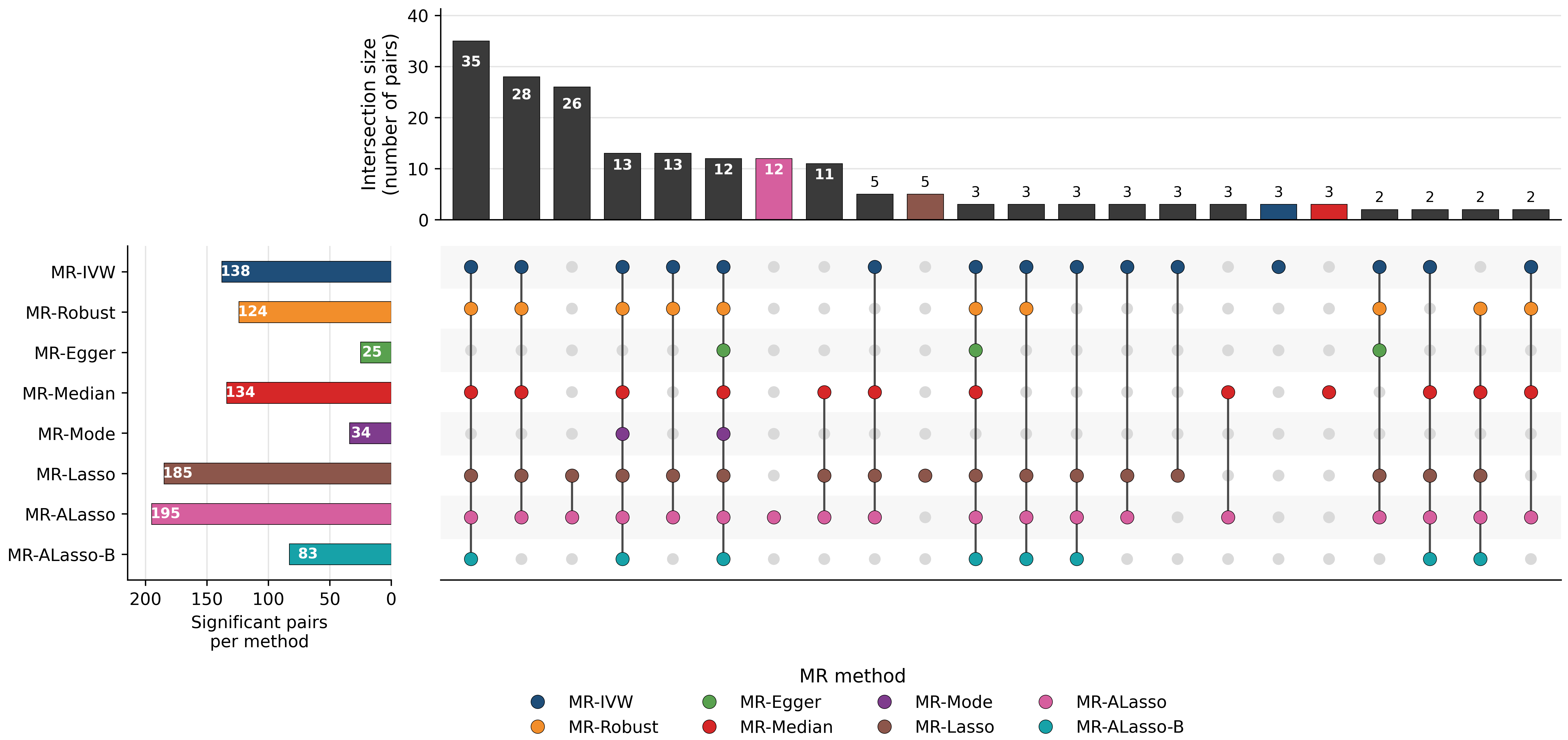}
    \caption{UpSet plot summarizing overlap of significant associations $( p < 0.05)$ across MR methods. Horizontal bars show the number of significant exposure--outcome pairs identified by each method, whereas vertical bars show the size of each intersection. Connected dots indicate the specific MR methods contributing to each overlap pattern. MR-Lasso and MR-ALasso identified more significant pairs, whereas MR-ALasso-B yielded a smaller set of associations.}
    \label{fig:UpSet_plot}
\end{figure}

Figures \ref{fig:Bidirectional_MR_matrix.} and \ref{fig:UpSet_plot} summarize the bidirectional pairwise MR results across all methods. Figure \ref{fig:Bidirectional_MR_matrix.} presents an exposure-outcome matrix, with each cell indicating the direction of association across MR methods for each exposure-outcome pair. Figure \ref{fig:UpSet_plot} summarizes the number of significant associations detected by each method and the agreement among them. Overall, MR-ALasso and MR-Lasso identified the largest number of significant associations, but MR-ALasso-B yielded a noticeably smaller discovery set. Importantly, a large proportion of MR-ALasso-B findings were replicated by multiple MR methods. For example, 35 of the 83 significant pairs detected by MR-ALasso-B ($42\%$) were also detected by IVW, MR-Robust, MR-Median, MR-LASSO, and MR-ALasso. Furthermore, twelve pairs were significant across all methods, including MR-Egger and MR-Mode, indicating that a core subset of associations identified by MR-ALasso-B is strongly supported by multiple MR frameworks. These results indicate that MR-ALasso-B filters out many findings unique to individual estimators while retaining those most reproducible across analytic frameworks. MR-ALasso and MR-Lasso identified the largest numbers of significant pairs, including several associations that were not detected by the standard MR methods. This pattern may reflect a combination of improved sensitivity and the anti-conservative behavior observed for these methods in the simulation study.

The smaller overlap of MR-ALasso-B findings with MR-Egger and MR-Mode is consistent with the different identifying assumptions and empirical behavior of these estimators. MR-Egger can provide valid causal estimates in the presence of directional pleiotropy only under the InSIDE assumption \citep{Bowden2015}. In addition, the interpretation of directional pleiotropy in MR-Egger depends on the allele-coding convention, and MR-Egger estimates can be sensitive to this coding choice \citep{dudbridge2025getting}. The MR-Mode estimator is robust under the zero modal pleiotropy assumption \citep{hartwig2017robust}, but mode-based estimation may have reduced sensitivity when the largest cluster of ratio estimates is weakly separated from competing clusters or when the causal signal is diffuse. This interpretation is consistent with the simulation results, where MR-Egger and MR-Mode often produced lower power than the proposed methods. 

We observed a core set of associations supported by multiple MR methods, alongside method-specific findings. The agreement between conventional MR methods and Lasso-based methods, together with the recovery of well-established relationships among cardiometabolic and anthropometric traits, suggests that several of the detected associations are biologically plausible. MR-Lasso and MR-ALasso identified a larger number of significant associations than the conventional MR methods, and many of these associations were directionally consistent with estimates from standard methods. This pattern suggests that Lasso-based selection of valid instruments may improve sensitivity for detecting potential causal relationships. At the same time, MR-Lasso and MR-ALasso also identified several associations that were not consistently supported by the conventional MR methods, which is consistent with the anti-conservative behavior observed for these methods in the simulation studies.

Several associations retained by MR-ALasso-B are consistent with evidence from randomized trials and large-scale clinical studies. MR-ALasso-B preserved the positive effect of BMI on T2D risk, as demonstrated by an intensive lifestyle intervention targeting modest weight loss, which reduced the incidence of T2D by 58\% \citep{diabetes2002reduction}. Similarly, MR-ALasso-B successfully identified the effect of BMI on blood pressure, consistent with the meta-analysis of 35 randomized weight-loss trials, which showed that a mean reduction of \(2.27\,\mathrm{kg}/\mathrm{m}^2\) in BMI lowered systolic and diastolic blood pressure by 5.79 mm Hg and 3.36 mm Hg, respectively \citep{yang2023effect}. Additionally, MR-ALasso-B retained causal effects of lipid metabolism on coronary artery disease, consistent with the large body of literature supporting the causal effect of elevated LDL cholesterol on cardiovascular disease \citep{mortensen2023low}. Reducing LDL cholesterol with simvastatin therapy can significantly lower all-cause mortality and major coronary events compared to placebo \citep{scandinavian1994randomised}. Consistently, MR-ALasso-B retained the harmful effects of hypertension on cardiovascular traits, consistent with the results of the antihypertensive trials, where blood pressure-lowering therapies reduce the risk of stroke by about 30\% and all-cause mortality by 13\% \citep{ettehad2016blood}. These observations suggest that the associations retained by MR-ALasso-B are consistent with established epidemiological and randomized-trial evidence.

MR-ALasso is expected to improve instrument selection over MR Lasso because the adaptive penalty allows differential shrinkage across variants, thereby improving the separation between valid and invalid instruments. MR-ALasso-B further stabilizes post-selection inference by averaging over bootstrap perturbations of the instrument set and by using a variance estimator that reflects sensitivity to the selected valid set. Taken together, the real-data analysis is consistent with the simulation results, MR-ALasso-B yields a smaller and more reproducible set of associations than MR-Lasso and MR-ALasso, while retaining several biologically plausible relationships supported by external evidence. These findings suggest that MR-ALasso may improve sensitivity for detecting potential causal relationships, whereas MR-ALasso-B provides a more conservative and stable post selection analysis.

\section{Discussion}
\label{sec:discussion}

In this paper, we proposed two Lasso-type procedures for summary-data MR that are robust to pleiotropy. The first method, MR-ALasso, replaces the ordinary lasso penalty with an adaptive lasso penalty in order to improve the correct identification of valid and invalid instruments. The second method, MR-ALasso-B, applies bootstrap smoothing to the post-selection estimator to improve finite-sample inference. The theoretical results, simulation studies, and real-data application suggest that these two procedures address related but distinct aspects of the invalid instrument problem in the two-sample MR framework.

The standard MR-Lasso procedure may provide reasonable point estimation in some settings, but its post-selection inference can be unreliable. In the simulation study, MR-Lasso often had small bias and high power under the alternative, especially in moderate settings. However, under the null, its confidence intervals frequently undercovered, and its type-I error was inflated. This problem became more severe as the proportion of invalid instruments increased and when invalid instruments were stronger than valid instruments. These findings suggest that variable selection alone is not sufficient for valid uncertainty quantification after selection. MR-ALasso improved upon MR-Lasso at the estimation and selection stage. Across a broad range of settings, MR-ALasso generally had smaller bias and smaller RMSE than MR-Lasso, and this advantage was especially clear under directional pleiotropy and in designs where invalid instruments were stronger than valid instruments. The summary-data simulation design also showed that MR-ALasso more accurately identified invalid instruments than MR-Lasso, which is consistent with Proposition \ref{prop:MR-ALasso}. These findings suggest that adaptive penalization improves the separation of variants with stronger evidence of invalidity from those that are closer to valid instruments.

At the same time, improved selection did not by itself resolve the inference problem. Although MR-ALasso improved estimation relative to MR-Lasso, naive post-selection confidence intervals based on the selected model were still too narrow in many settings, and the type-I error remained above the nominal level. This motivates the second contribution of the paper, namely the bootstrap-smoothed estimator MR-ALasso-B. In many of the settings considered, MR-ALasso-B produced much better type-I error control and improved coverage relative to both MR-Lasso and MR-ALasso, especially under the null hypothesis and in more difficult pleiotropic designs. This idea is in line with a broader post-selection inference literature, where bootstrap smoothing has been used to account for the additional variability introduced by data-adaptive selection \citep{efron2014estimation,wang2014discussion}. In the present setting, MR-ALasso-B was particularly useful when reliable interval estimation and hypothesis testing were of primary interest. The gains from MR-ALasso-B should, however, be interpreted with some care. In the main simulation design, the bootstrap-smoothed procedure provided a favorable trade-off between estimation error and inferential reliability, with only moderate losses in power in the harder settings. In the second simulation design, which was introduced to study selection behavior under directly simulated summary-data associations, the results were more mixed. In that setting, MR-ALasso-B often improved coverage, but in the hardest cases with a large proportion of invalid instruments, it could become overly conservative and more biased than MR-ALasso. Thus, the strongest evidence from this paper is that MR-ALasso improves estimation and instrument classification relative to MR-Lasso, whereas the main benefit of MR-ALasso-B lies in improved post-selection inference rather than uniformly improved point estimation.

The present work has some limitations. First, like MR-Lasso and many other selection-based MR procedures, the proposed methods rely on a sparse invalid-instrument framework. If pleiotropic effects are widespread and do not admit a useful separation between approximately valid and invalid instruments, then selection-based procedures may be less effective. Second, the methods depend on tuning choices, including the heterogeneity-based stopping rule and the threshold used to define the aggregated valid set. Third, the bootstrap-smoothed procedure proposed here is not exact selective inference, and should therefore be viewed as a practical inferential correction rather than a full solution to post-selection inference. Fourth, the methods do not explicitly model sample structure, such as population stratification, cryptic relatedness, or sample overlap, which may remain important in some applications \citep{hu2022mendelian}. Finally, although the results in this
paper provide theoretical support for the adaptive penalization step and an Efron-type justification for the bootstrap-smoothed variance estimator, further work is needed to develop a full asymptotic theory for the bootstrap-smoothed estimator as a causal MR procedure. Several extensions would be of interest. One direction is to study alternative aggregation rules for defining the selected valid set, including thresholds motivated by the stability-selection literature. Another is to investigate whether bootstrap smoothing can be combined with other robust MR estimators beyond lasso-based procedures. Extensions to multivariable MR would also be valuable, since instrument selection and post-selection inference are likely to be even more difficult in that setting. A further direction is to develop stronger theoretical results for the asymptotic behavior of MR-ALasso-B under weak instruments, correlated instruments, or sample overlap.

In summary, the results of this study support three main conclusions. First, MR-ALasso improves upon MR-Lasso in estimation accuracy and instrument selection through adaptive penalization. Second, naive post-selection inference remains problematic for both MR-Lasso and MR-ALasso. Third, bootstrap smoothing provides a practical way to improve inference after adaptive-lasso selection. MR-ALasso-B often achieved a favorable combination of improved coverage, better type-I error control, and acceptable estimation error. For applied Mendelian randomization studies in which invalid instruments are a serious concern, MR-ALasso and MR-ALasso-B provide useful additions to the current set of robust MR tools and may serve as useful components of a broader sensitivity analysis in the two-sample MR framework.

\subsection*{Code availability}

The R package \texttt{MRAlasso} is publicly available at
\url{https://github.com/Qasim-stat/MRAlasso}. The version used in this paper
has been archived on Zenodo with DOI:
\href{https://doi.org/10.5281/zenodo.20843253}{10.5281/zenodo.20843253}.

\newpage
\bibliography{references}

@article{kang2016instrumental,
  title={Instrumental variables estimation with some invalid instruments and its application to Mendelian randomization},
  author={Kang, Hyunseung and Zhang, Anru and Cai, T Tony and Small, Dylan S},
  journal={Journal of the American statistical Association},
  volume={111},
  number={513},
  pages={132--144},
  year={2016},
  publisher={Taylor \& Francis}
}

@article{windmeijer2019use,
  title={On the use of the lasso for instrumental variables estimation with some invalid instruments},
  author={Windmeijer, Frank and Farbmacher, Helmut and Davies, Neil and Davey Smith, George},
  journal={Journal of the American Statistical Association},
  volume={114},
  number={527},
  pages={1339--1350},
  year={2019},
  publisher={Taylor \& Francis}
}

@article{qasim2025lasso,
  title={LASSO-type instrumental variable selection methods with an application to Mendelian randomization},
  author={Qasim, Muhammad and M{\aa}nsson, Kristofer and Balakrishnan, Narayanaswamy},
  journal={Statistical Methods in Medical Research},
  volume={34},
  number={2},
  pages={201--223},
  year={2025},
  publisher={Sage Publications Sage UK: London, England}
}

@article{bowden2017framework,
  title={A framework for the investigation of pleiotropy in two-sample summary data Mendelian randomization},
  author={Bowden, Jack and Del Greco M, Fabiola and Minelli, Cosetta and Davey Smith, George and Sheehan, Nuala and Thompson, John},
  journal={Statistics in medicine},
  volume={36},
  number={11},
  pages={1783--1802},
  year={2017},
  publisher={Wiley Online Library}
}

@article{burgess2013mendelian,
  title={Mendelian randomization analysis with multiple genetic variants using summarized data},
  author={Burgess, Stephen and Butterworth, Adam and Thompson, Simon G},
  journal={Genetic epidemiology},
  volume={37},
  number={7},
  pages={658--665},
  year={2013}
  }

@article{bowden2016consistent,
  title={Consistent estimation in Mendelian randomization with some invalid instruments using a weighted median estimator},
  author={Bowden, Jack and Davey Smith, George and Haycock, Philip C and Burgess, Stephen},
  journal={Genetic epidemiology},
  volume={40},
  number={4},
  pages={304--314},
  year={2016},
  publisher={Wiley Online Library}
}

@article{hartwig2017robust,
  title={Robust inference in summary data Mendelian randomization via the zero modal pleiotropy assumption},
  author={Hartwig, Fernando Pires and Davey Smith, George and Bowden, Jack},
  journal={International journal of epidemiology},
  volume={46},
  number={6},
  pages={1985--1998},
  year={2017},
  publisher={Oxford University Press}
}

@article{DaveySmith2003,
  title={‘Mendelian randomization’: can genetic epidemiology contribute to understanding environmental determinants of disease?},
  author={Davey Smith, George and Ebrahim, Shah},
  journal={International journal of epidemiology},
  volume={32},
  number={1},
  pages={1--22},
  year={2003},
  publisher={Oxford University Press}
}

@article{lawlor2008mendelian,
  title={Mendelian randomization: using genes as instruments for making causal inferences in epidemiology},
  author={Lawlor, Debbie A and Harbord, Roger M and Sterne, Jonathan AC and Timpson, Nic and Davey Smith, George},
  journal={Statistics in medicine},
  volume={27},
  number={8},
  pages={1133--1163},
  year={2008},
  publisher={Wiley Online Library}
}

@article{burgess2013use,
  title={Use of allele scores as instrumental variables for Mendelian randomization},
  author={Burgess, Stephen and Thompson, Simon G},
  journal={International journal of epidemiology},
  volume={42},
  number={4},
  pages={1134--1144},
  year={2013},
  publisher={Oxford University Press}
}

@article{Bowden2015,
  author = {Bowden, Jack and Davey Smith, George and Burgess, Stephen},
  title = {Mendelian randomization with invalid instruments: effect estimation and bias detection through Egger regression},
  journal = {International Journal of Epidemiology},
  year = {2015},
  volume = {44},
  number = {2},
  pages = {512--525}
}

@article{Hartwig2017,
  author = {Hartwig, Fernando P. and Davey Smith, George and Bowden, Jack},
  title = {Robust inference in summary data Mendelian randomization via the zero modal pleiotropy assumption},
  journal = {International Journal of Epidemiology},
  year = {2017},
  volume = {46},
  number = {6},
  pages = {1985--1998}
}

@article{rees2019robust,
  title={Robust methods in Mendelian randomization via penalization of heterogeneous causal estimates},
  author={Rees, Jessica MB and Wood, Angela M and Dudbridge, Frank and Burgess, Stephen},
  journal={PloS one},
  volume={14},
  number={9},
  pages={e0222362},
  year={2019},
  publisher={Public Library of Science San Francisco, CA USA}
}

@article{burgess2023guidelines,
  title={Guidelines for performing Mendelian randomization investigations: update for summer 2023},
  author={Burgess, Stephen and Smith, George Davey and Davies, Neil M and Dudbridge, Frank and Gill, Dipender and Glymour, M Maria and Hartwig, Fernando P and Kutalik, Zolt{\'a}n and Holmes, Michael V and Minelli, Cosetta and others},
  journal={Wellcome open research},
  volume={4},
  pages={186},
  year={2023}
}

@article{sanderson2022mendelian,
  title={Mendelian randomization},
  author={Sanderson, Eleanor and Glymour, M Maria and Holmes, Michael V and Kang, Hyunseung and Morrison, Jean and Munaf{\`o}, Marcus R and Palmer, Tom and Schooling, C Mary and Wallace, Chris and Zhao, Qingyuan and others},
  journal={Nature reviews Methods primers},
  volume={2},
  number={1},
  pages={6},
  year={2022},
  publisher={Nature Publishing Group UK London}
}

@article{slob2020comparison,
  title={A comparison of robust Mendelian randomization methods using summary data},
  author={Slob, Eric AW and Burgess, Stephen},
  journal={Genetic epidemiology},
  volume={44},
  number={4},
  pages={313--329},
  year={2020},
  publisher={Wiley Online Library}
}

@article{verbanck2018detection,
  title={Detection of widespread horizontal pleiotropy in causal relationships inferred from Mendelian randomization between complex traits and diseases},
  author={Verbanck, Marie and Chen, Chia-Yen and Neale, Benjamin and Do, Ron},
  journal={Nature genetics},
  volume={50},
  number={5},
  pages={693--698},
  year={2018},
  publisher={Nature Publishing Group US New York}
}

@article{grant2021pleiotropy,
  title={Pleiotropy robust methods for multivariable Mendelian randomization},
  author={Grant, Andrew J and Burgess, Stephen},
  journal={Statistics in medicine},
  volume={40},
  number={26},
  pages={5813--5830},
  year={2021},
  publisher={Wiley Online Library}
}

@article{cheng2015select,
  title={Select the valid and relevant moments: An information-based LASSO for GMM with many moments},
  author={Cheng, Xu and Liao, Zhipeng},
  journal={Journal of Econometrics},
  volume={186},
  number={2},
  pages={443--464},
  year={2015},
  publisher={Elsevier}
}

@article{zou2006adaptive,
  title={The adaptive lasso and its oracle properties},
  author={Zou, Hui},
  journal={Journal of the American statistical association},
  volume={101},
  number={476},
  pages={1418--1429},
  year={2006},
  publisher={Taylor \& Francis}
}

@book{buhlmann2011statistics,
  title={Statistics for high-dimensional data: methods, theory and applications},
  author={B{\"u}hlmann, Peter and Van De Geer, Sara},
  year={2011},
  publisher={Springer Science \& Business Media}
}

@article{belloni2012sparse,
  title={Sparse models and methods for optimal instruments with an application to eminent domain},
  author={Belloni, Alexandre and Chen, Daniel and Chernozhukov, Victor and Hansen, Christian},
  journal={Econometrica},
  volume={80},
  number={6},
  pages={2369--2429},
  year={2012},
  publisher={Wiley Online Library}
}

@article{zhao2020statistical,
  title={Statistical inference in two-sample summary-data Mendelian randomization using robust adjusted profile score},
  author={Zhao, Qingyuan and Wang, Jingshu and Hemani, Gibran and Bowden, Jack and Small, Dylan S},
  journal={Annals of Statistics},
   volume={48},
  pages={1742--1769},
  year={2020},
  publisher={Institute of Mathematical Statistics}
}

@article{patel2024selecting,
  title={Selecting invalid instruments to improve Mendelian randomization with two-sample summary data},
  author={Patel, Ashish and DiTraglia, Francis J and Zuber, Verena and Burgess, Stephen},
  journal={The annals of applied statistics},
  volume={18},
  number={2},
  pages={23--aoas1856},
  year={2024}
}

@article{bowden2019improving,
  title={Improving the accuracy of two-sample summary-data Mendelian randomization: moving beyond the NOME assumption},
  author={Bowden, Jack and Del Greco M, Fabiola and Minelli, Cosetta and Zhao, Qingyuan and Lawlor, Debbie A and Sheehan, Nuala A and Thompson, John and Davey Smith, George},
  journal={International journal of epidemiology},
  volume={48},
  number={3},
  pages={728--742},
  year={2019},
  publisher={Oxford University Press}
}

@article{xue2021constrained,
  title={Constrained maximum likelihood-based Mendelian randomization robust to both correlated and uncorrelated pleiotropic effects},
  author={Xue, Haoran and Shen, Xiaotong and Pan, Wei},
  journal={The American Journal of Human Genetics},
  volume={108},
  number={7},
  pages={1251--1269},
  year={2021},
  publisher={Elsevier}
}

@article{burgess2020robust,
  title={A robust and efficient method for Mendelian randomization with hundreds of genetic variants},
  author={Burgess, Stephen and Foley, Christopher N and Allara, Elias and Staley, James R and Howson, Joanna MM},
  journal={Nature communications},
  volume={11},
  number={1},
  pages={376},
  year={2020},
  publisher={Nature Publishing Group UK London}
}

@article{bowden2018improving,
  title={Improving the visualization, interpretation and analysis of two-sample summary data Mendelian randomization via the Radial plot and Radial regression},
  author={Bowden, Jack and Spiller, Wesley and Del Greco M, Fabiola and Sheehan, Nuala and Thompson, John and Minelli, Cosetta and Davey Smith, George},
  journal={International journal of epidemiology},
  volume={47},
  number={4},
  pages={1264--1278},
  year={2018},
  publisher={Oxford University Press}
}

@article{xie2026winner,
  title={Winner’s Curse Free Robust Mendelian Randomization with Summary Data},
  author={Xie, Zhongming and Zhang, Wanheng and Wang, Jingshen and Wu, Chong},
  journal={Journal of the American Statistical Association},
  pages={1--13},
  year={2026},
  publisher={Taylor \& Francis}
}

@article{meinshausen2010stability,
  title={Stability selection},
  author={Meinshausen, Nicolai and B{\"u}hlmann, Peter},
  journal={Journal of the Royal Statistical Society Series B: Statistical Methodology},
  volume={72},
  number={4},
  pages={417--473},
  year={2010},
  publisher={Oxford University Press}
}

@article{shah2013variable,
  title={Variable selection with error control: another look at stability selection},
  author={Shah, Rajen D and Samworth, Richard J},
  journal={Journal of the Royal Statistical Society Series B: Statistical Methodology},
  volume={75},
  number={1},
  pages={55--80},
  year={2013},
  publisher={Oxford University Press}
}

@inproceedings{bach2008bolasso,
  title={Bolasso: model consistent lasso estimation through the bootstrap},
  author={Bach, Francis R},
  booktitle={Proceedings of the 25th international conference on Machine learning},
  pages={33--40},
  year={2008}
}

@article{wang2014discussion,
  title={Discussion of “Estimation and Accuracy after Model Selection” by Brad Efron},
  author={Wang, Lan and Sherwood, Ben and Li, Runze},
  journal={Journal of the American Statistical Association},
  volume={109},
  number={507},
  pages={1007},
  year={2014}
}

@article{efron2014estimation,
  title={Estimation and accuracy after model selection},
  author={Efron, Bradley},
  journal={Journal of the American Statistical Association},
  volume={109},
  number={507},
  pages={991--1007},
  year={2014},
  publisher={Taylor \& Francis}
}

@article{buja2006observations,
  title={Observations on bagging},
  author={Buja, Andreas and Stuetzle, Werner},
  journal={Statistica Sinica},
  pages={323--351},
  year={2006},
  publisher={JSTOR}
}

@article{hu2022mendelian,
  title={Mendelian randomization for causal inference accounting for pleiotropy and sample structure using genome-wide summary statistics},
  author={Hu, Xianghong and Zhao, Jia and Lin, Zhixiang and Wang, Yang and Peng, Heng and Zhao, Hongyu and Wan, Xiang and Yang, Can},
  journal={Proceedings of the National Academy of Sciences},
  volume={119},
  number={28},
  pages={e2106858119},
  year={2022},
  publisher={National Academy of Sciences}
}

@article{yavorska2017mendelianrandomization,
  title={MendelianRandomization: an R package for performing Mendelian randomization analyses using summarized data},
  author={Yavorska, Olena O and Burgess, Stephen},
  journal={International journal of epidemiology},
  volume={46},
  number={6},
  pages={1734--1739},
  year={2017},
  publisher={Oxford University Press}
}

@software{qasim_mralasso_2026,
  author    = {Qasim, Muhammad},
  title     = {{MRAlasso}: Adaptive Penalization and Bootstrap-Smoothed Inference in Two-Sample Mendelian Randomization},
  year      = {2026},
  publisher = {Zenodo},
  version   = {v0.1.0},
  doi       = {10.5281/zenodo.20843253},
  url       = {https://doi.org/10.5281/zenodo.20843253}
}

@article{broadbent2020mendelianrandomization,
  title   = {{MendelianRandomization} v0.5.0: updates to an R package for performing Mendelian randomization analyses using summarized data},
  author  = {Broadbent, Jim R. and Foley, Christopher N. and Grant, Andrew J. and Mason, Amy M. and Staley, James R. and Burgess, Stephen},
  year    = {2020},
  journal = {Wellcome Open Research},
  doi     = {10.12688/wellcomeopenres.16374.2}
}

@article{diabetes2002reduction,
  title   = {Reduction in the incidence of type 2 diabetes with lifestyle intervention or metformin},
  author  = {Knowler, William C. and Barrett-Connor, Elizabeth and Fowler, Sarah E. and Hamman, Richard F. and Lachin, John M. and Walker, Elizabeth A. and Nathan, David M. and {Diabetes Prevention Program Research Group}},
journal={New England journal of medicine},
  volume={346},
  number={6},
  pages={393--403},
  year={2002},
  publisher={Mass Medical Soc}
}

@article{yang2023effect,
  title={Effect of weight loss on blood pressure changes in overweight patients: a systematic review and meta-analysis},
  author={Yang, Shijie and Zhou, Zhanyang and Miao, Huanhuan and Zhang, Yuqing},
  journal={The Journal of Clinical Hypertension},
  volume={25},
  number={5},
  pages={404--415},
  year={2023},
  publisher={Wiley Online Library}
}

@article{mortensen2023low,
  title={Low-density lipoprotein cholesterol is predominantly associated with atherosclerotic cardiovascular disease events in patients with evidence of coronary atherosclerosis: the Western Denmark Heart Registry},
  author={Mortensen, Martin B{\o}dtker and Dzaye, Omar and B{\o}tker, Hans Erik and Jensen, Jesper M{\o}ller and Maeng, Michael and Bentzon, Jacob Fog and Kanstrup, Helle and S{\o}rensen, Henrik Toft and Leipsic, Jonathon and Blankstein, Ron and others},
  journal={Circulation},
  volume={147},
  number={14},
  pages={1053--1063},
  year={2023},
  publisher={Lippincott Williams \& Wilkins Hagerstown, MD}
}

@article{scandinavian1994randomised,
  title={Randomised trial of cholesterol lowering in 4444 patients with coronary heart disease: the Scandinavian Simvastatin Survival Study (4S)},
  author={{Scandinavian Simvastatin Survival Study Group and others}},
  journal={The Lancet},
  volume={344},
  number={8934},
  pages={1383--1389},
  year={1994},
  publisher={Elsevier}
}

@article{ettehad2016blood,
  title={Blood pressure lowering for prevention of cardiovascular disease and death: a systematic review and meta-analysis},
  author={Ettehad, Dena and Emdin, Connor A and Kiran, Amit and Anderson, Simon G and Callender, Thomas and Emberson, Jonathan and Chalmers, John and Rodgers, Anthony and Rahimi, Kazem},
  journal={The Lancet},
  volume={387},
  number={10022},
  pages={957--967},
  year={2016},
  publisher={Elsevier}
}

@article{dudbridge2025getting,
  title={Getting to GRIPS with MR-Egger: Modelling directional pleiotropy independently of allele coding},
  author={Dudbridge, Frank and Voller, Bethany and Woodward, Ruby M and Saxby, Katie L and Frayling, Timothy M and Pilling, Luke C and Bowden, Jack},
  journal={PLoS genetics},
  volume={21},
  number={12},
  pages={e1011967},
  year={2025},
  publisher={Public Library of Science San Francisco, CA USA}
}

\newpage
\appendix
\renewcommand{\theequation}{A\arabic{equation}}
\setcounter{equation}{0}

\renewcommand{\thetable}{A\arabic{table}}
\setcounter{table}{0}

\section*{Supplementary Materials}
\section{MR-ALasso procedure} \label{app:Alasso}

The objective function in \eqref{eq:Alasso} is a weighted least-squares criterion with partial penalization: the causal effect parameter $\theta$ is left unpenalized, whereas the genetic variant-specific direct effects $\alpha_j$ are penalized. For computational and theoretical purposes, it is useful to rewrite the criterion in matrix form. Suppose 
$$\hat{\boldsymbol{\beta}}_Y =
(\hat{\beta}_{Y1},\dots,\hat{\beta}_{YJ})^\top,\qquad\hat{\boldsymbol{\beta}}_X = (\hat{\beta}_{X1},\dots,\hat{\beta}_{XJ})^\top,
\qquad \bm S = \mathrm{diag}(\tilde{w}_1,\dots,\tilde{w}_J),$$
and denote by $\boldsymbol{\alpha} = (\alpha_1,\dots,\alpha_J)^\top$ the vector of direct effects. Then \eqref{eq:Alasso} can be written as
\begin{equation}
(\hat{\theta}_{\mathrm{AL}},\hat{\boldsymbol{\alpha}}_{\mathrm{AL}})=
\arg\min_{\theta,\boldsymbol{\alpha}}\frac{1}{2}
\left\|\bm S^{1/2}\left(
\hat{\boldsymbol{\beta}}_Y - \theta \hat{\boldsymbol{\beta}}_X - \boldsymbol{\alpha}\right)\right\|_2^2+
\lambda_n\sum_{j=1}^{J}\omega_j |\alpha_j|.
\label{eq:Alasso_matrix}
\end{equation}
This is not a usual adaptive Lasso problem because not all regression parameters are penalized. However, as in the instrumental variables formulation of \citet{windmeijer2019use}, the unpenalized parameter can be removed by projection. To see this, define
$$\widetilde{\mathbf{b}} = \bm S^{1/2}\hat{\boldsymbol{\beta}}_X,
\qquad
P_{\widetilde{\mathbf{b}}} =\widetilde{\mathbf{b}}(\widetilde{\mathbf{b}}^\top \widetilde{\mathbf{b}})^{-1}\widetilde{\mathbf{b}}^\top,
\qquad
M_{\widetilde{\mathbf{b}}} = \bm I_J - P_{\widetilde{\mathbf{b}}},$$
where $P_{\widetilde{\mathbf{b}}}$ is the projection matrix onto the span of $\widetilde{\mathbf{b}}$ and $M_{\widetilde{\mathbf{b}}}$ is the corresponding residual-maker matrix. Premultiplying the weighted model by $M_{\widetilde{\mathbf{b}}}$ removes the component along $\hat{\boldsymbol{\beta}}_X$, since $M_{\widetilde{\mathbf{b}}} \widetilde{\mathbf{b}} = \mathbf{0}.$
Hence
$$M_{\widetilde{\mathbf{b}}} \bm S^{1/2}\hat{\boldsymbol{\beta}}_Y=
M_{\widetilde{\mathbf{b}}} \bm S^{1/2}\boldsymbol{\alpha}+
M_{\widetilde{\mathbf{b}}} \bm S^{1/2}\boldsymbol{\varepsilon}.$$
If we define
$$\widetilde{\mathbf{y}}_{\mathrm{AL}} = M_{\widetilde{\mathbf{b}}} \bm S^{1/2}\hat{\boldsymbol{\beta}}_Y, \qquad
\widetilde{\mathbf{X}}_{\mathrm{AL}} = M_{\widetilde{\mathbf{b}}} \bm S^{1/2},$$
then \eqref{eq:Alasso_matrix} is equivalent to the adaptive Lasso problem
\begin{equation}
\hat{\boldsymbol{\alpha}}_{\mathrm{AL}}=
\arg\min_{\boldsymbol{\alpha}}\frac{1}{2}
\left\|
\widetilde{\mathbf{y}}_{\mathrm{AL}} - \widetilde{\mathbf{X}}_{\mathrm{AL}} \boldsymbol{\alpha}
\right\|_2^2+\lambda_n\sum_{j=1}^{J}\omega_j |\alpha_j|,
\label{eq:adalasso_projected}
\end{equation}
after which the causal effect can be recovered from the fitted direct effects. 

\section{Proofs} \label{app:proofs}
\subsection{Proof of Theorem \ref{thm:reducedcriterion}} \label{proof:prop_reducedcrit}

\begin{proof}
 From \eqref{eq:Alasso_matrix}, we can write 
 $$Q(\theta,\bm{\alpha})=
\frac{1}{2}\left\|
\bm{u}(\bm{\alpha})-\theta \widetilde{\bm{b}}\right\|_2^2+
\lambda_n \sum_{j=1}^J \omega_{j} |\alpha_j|,$$
where $\bm{u}(\bm{\alpha})=\bm{S}^{1/2} \bigl(\hat{\bm{\beta}}_Y-\bm{\alpha}\bigr)$. Then for fixed $\bm{\alpha}$, the first term of the above expression depends on $\theta$, and we differentiate w.r.t $\theta$ and setting the derivative to zero gives
$$\hat{\theta}(\bm{\alpha})=
\frac{\widetilde{\bm{b}}^\top \bm{u}(\bm{\alpha})
}{\widetilde{\bm{b}}^\top \widetilde{\bm{b}}
}.$$
Substituting $\bm{u}(\bm{\alpha})$ and $\widetilde{\bm{b}}$ in the above expression, we obtain
$$\hat{\theta}(\bm{\alpha})=
\bigl(
\hat{\bm{\beta}}_X^\top \bm{S}\hat{\bm{\beta}}_X\bigr)^{-1}
\hat{\bm{\beta}}_X^\top \bm{S}\bigl(\hat{\bm{\beta}}_Y-\bm{\alpha} \bigr).$$
So the minimization criterion is
$$Q(\bm{\alpha})=\min_{\theta} Q(\theta,\bm{\alpha})=
\frac{1}{2}\left\|M_{\widetilde{\mathbf{b}}}\bm{u}(\bm{\alpha})\right\|_2^2
+\lambda_n \sum_{j=1}^J \omega_{j} |\alpha_j|,$$
where $M_{\widetilde{\mathbf{b}}} = I_J - P_{\widetilde{\mathbf{b}}}$ is symmetric and idempotent matrix, and therefore
$$\left\|M_{\widetilde{\mathbf{b}}}\bm{u}(\bm{\alpha})\right\|_2^2
= \bm{u}(\bm{\alpha})^\top M_{\widetilde{\mathbf{b}}}
\bm{u}(\bm{\alpha}) =\bigl( \hat{\bm{\beta}}_Y-\bm{\alpha} \bigr)^\top \bm{S}^{1/2}
M_{\widetilde{\mathbf{b}}}
\bm{S}^{1/2}\bigl(
\hat{\bm{\beta}}_Y-\bm{\alpha} \bigr).$$
Using $\bm{S}^{1/2}P_{\widetilde{\mathbf{b}}}\bm{S}^{1/2}=\bm{S}\hat{\bm{\beta}}_X
\bigl(\hat{\bm{\beta}}_X^\top \bm{S}\hat{\bm{\beta}}_X
\bigr)^{-1}
\hat{\bm{\beta}}_X^\top \bm{S},$ 
we get
$$\bm{S}^{1/2}M_{\widetilde{\mathbf{b}}}\bm{S}^{1/2}=
\bm{S}-\bm{S}\hat{\bm{\beta}}_X
\bigl(
\hat{\bm{\beta}}_X^\top \bm{S}\hat{\bm{\beta}}_X
\bigr)^{-1}
\hat{\bm{\beta}}_X^\top \bm{S}=
\hat{\bm{C}}_n.$$
Hence
$$Q(\bm{\alpha})=
L_n(\boldsymbol\alpha)+\lambda_n \sum_{j=1}^J \omega_{j} |\alpha_j|,
$$
with 
$$L_n(\boldsymbol\alpha)=
\frac{1}{2}
(\hat{\boldsymbol\beta}_Y-\boldsymbol\alpha)^\top
\hat{\mathbf C}_n
(\hat{\boldsymbol\beta}_Y-\boldsymbol\alpha).$$
This is Eq. \eqref{eq:Alasso-alpha} in the main text. Evaluating $\hat{\theta}(\bm{\alpha})$ at $\bm{\alpha}=\hat{\bm{\alpha}}$ gives the corresponding estimator of the causal effect. 
\end{proof}

\subsection{Proof of Proposition \ref{prop:MR-ALasso}} \label{proof:selectionconsist}

\begin{proof}
Define the initial estimator as $\tilde{\boldsymbol\alpha} = \hat{\boldsymbol\beta}_Y-\tilde{\theta}\hat{\boldsymbol\beta}_X$. Under the population summary-data model,
$$\boldsymbol\beta_Y=\theta\boldsymbol\beta_X+\boldsymbol\alpha.$$
Subtracting $\boldsymbol\alpha$,
\begin{align}
\tilde{\boldsymbol\alpha}-\boldsymbol\alpha
&=
\hat{\boldsymbol\beta}_Y-\tilde{\theta}\hat{\boldsymbol\beta}_X-
(\boldsymbol\beta_Y-\theta\boldsymbol\beta_X)
\nonumber\\
&=
(\hat{\boldsymbol\beta}_Y-\boldsymbol\beta_Y)-
(\tilde{\theta}-\theta)\hat{\boldsymbol\beta}_X-
\theta(\hat{\boldsymbol\beta}_X-\boldsymbol\beta_X).
\label{eq:alpha_minus_atilde}
\end{align}

By Assumption \ref{assump:two-sample_summary_data}, the summary associations, $\sqrt n(\hat{\boldsymbol\beta}_Y-\boldsymbol\beta_Y)=O_p(1)$ and  $\sqrt n(\hat{\boldsymbol\beta}_X-\boldsymbol\beta_X)=O_p(1)$. Moreover, we consider an initial consistent estimator by following the adaptive Lasso procedure of \cite{zou2006adaptive}. We build on the result of \cite{windmeijer2019use}, they show that the median of $\hat{\beta}_{Yj}/\hat{\beta}_{Xj}$, denoted  $\tilde{\theta}$, is a consistent estimator for $\theta$ under Assumption \ref{assump:pleiotropy_structure}(i). Since \(\hat{\boldsymbol\beta}_X=O_p(1)\), it follows from \eqref{eq:alpha_minus_atilde} that $\sqrt n(\tilde{\boldsymbol\alpha}-\boldsymbol\alpha)=O_p(1)$.

Now consider the coordinates separately. If $j\in\mathcal A$, then $\alpha_j\neq 0$, so $\tilde{\alpha}_j\xrightarrow{p}\alpha_j\neq 0.$ If $j\in\mathcal V$, then $\alpha_j=0$, and therefore $\tilde{\alpha}_j=O_p(n^{-1/2}).$ From these results, we show the asymptotic behavior of the adaptive weights
\begin{equation}
\omega_{j}=|\tilde{\alpha}_j|^{-\nu}\xrightarrow{p}|\alpha_j|^{-\nu}, \qquad j\in\mathcal A,
\label{eq:weights_active_op}
\end{equation}
and 
\begin{equation}
\omega_{j}=|\tilde{\alpha}_j|^{-\nu}=O_p(n^{\nu/2}), \qquad  j\in\mathcal V.
\label{eq:weights_inactive}
\end{equation}
Thus the adaptive weights remain bounded on truly invalid variants and diverge on truly valid variants.
For $j\in\mathcal A$, combining \eqref{eq:weights_active_op} with \eqref{eq:tuning_prop} yields $\lambda_n\omega_{j} = \lambda_n O_p(1),$ so
\begin{equation}
\frac{\lambda_n\omega_{j}}{\sqrt n}\xrightarrow{p}0,
\qquad j\in\mathcal A.
\label{eq:penalty_active_prop}
\end{equation}
On the other hand, for $j\in\mathcal V$, using \eqref{eq:weights_inactive}, $\lambda_n\omega_{j}=\lambda_n O_p(n^{\nu/2}),$ and therefore ${\lambda_n\omega_{j}}/{\sqrt n}=\lambda_n O_p(n^{(\nu-1)/2}).$
By \eqref{eq:tuning_prop},
\begin{equation}
\frac{\lambda_n\omega_{j}}{\sqrt n}\xrightarrow{p}\infty,
\qquad j\in\mathcal V.
\label{eq:penalty_inactive_prop}
\end{equation}
Thus, on the inactive coordinates, the penalty term dominates the local score contribution from the loss, $L_n(\boldsymbol\alpha)$.

We now show the selection consistency of MR-ALasso. Eq. \eqref{eq:penalty_inactive_prop} implies that the adaptive penalty is asymptotically dominant relative to the local score of the loss function, $L_n(\boldsymbol\alpha)$. Therefore, by the KKT conditions and the standard ALasso argument \citep{zou2006adaptive} 
\begin{equation}
P\bigl(\hat{\alpha}_{\mathrm{AL},j}=0 \ \forall j\in\mathcal V\bigr)\to 1.
\label{eq:no_false_inclusions}
\end{equation}

On the active set $\mathcal A$, the penalty is asymptotically negligible by \eqref{eq:penalty_active_prop}, so the MR-ALasso estimator is asymptotically equivalent to the oracle estimator based on the true active set. Because the weights in $\mathbf S$ are positive, Assumption \ref{assump:pleiotropy_structure}(iii) implies $\beta_{Xj}\neq0$ for all $j$, and $|\mathcal A|<J$. Therefore, the oracle submatrix $\mathbf C_{\mathcal A\mathcal A}$ is positive definite. Consequently, the oracle loss restricted to the active coordinates is locally strictly convex, and its sample minimizer is consistent for $\boldsymbol\alpha_{\mathcal A}$. By Assumption \ref{assump:pleiotropy_structure}(ii), which ensures that $\min_{j\in\mathcal A}|\alpha_j|\ge c_\alpha>0,$ it follows that 
\begin{equation}
P\bigl(\hat{\alpha}_{\mathrm{AL},j}\neq 0 \ \forall j\in\mathcal A\bigr)\to 1.
\label{eq:no_false_exclusions}
\end{equation}

Combining \eqref{eq:no_false_inclusions} and \eqref{eq:no_false_exclusions}, we obtain
$$P(\widehat{\mathcal A}_n=\mathcal A)\to 1.$$
\end{proof}

\subsection{Proof of Corollary \ref{cor:oracle_equiv}} \label{proof:oracle_equiv}
\begin{proof}
Let $\widehat{\mathcal V}_n=\{j:\hat{\alpha}_{\mathrm{AL},j}=0\}$ be the set of variants as valid by MR-ALasso. Define the post-MR-ALasso estimator by $\hat{\theta}_{\mathrm{post\mbox{-}AL}}=\underset{\theta}{\arg\min} \sum_{j\in \widehat{\mathcal V}_n}\tilde{w}_j \bigl(\hat{\beta}_{Yj}-\theta \hat{\beta}_{Xj}\bigr)^2$, and let $\hat{\theta}_{\mathrm{oracle}}$ denote the the oracle IVW estimator based on the true valid set $\mathcal V$.
By Proposition \ref{prop:MR-ALasso}, $P(\widehat{\mathcal A}_n=\mathcal A)\to 1.$ Since $\widehat{\mathcal V}_n=\widehat{\mathcal A}_n^c$ and $\mathcal V=\mathcal A^c$, it follows that $P(\widehat{\mathcal V}_n=\mathcal V)\to 1.$ On the event $\{\widehat{\mathcal V}_n=\mathcal V\}$, then $\hat{\theta}_{\mathrm{post\mbox{-}AL}}$ and $\hat{\theta}_{\mathrm{oracle}}$ are computed from the same set of variants and therefore are identical. Hence $P\!\left( \hat{\theta}_{\mathrm{post\mbox{-}AL}} = \hat{\theta}_{\mathrm{oracle}} \right)\to 1,$ which establishes the asymptotic equivalence of the two estimators.
\end{proof}

\subsection{Proof of Theorem \ref{thm:mralassob_efron_delta}}
\label{supp:mralassob_efron_delta}
\begin{proof}
The proof follows the nonparametric delta-method argument of \citet{efron2014estimation}, adapted to the SNP-level bootstrap and to the set $\mathcal R$.

For $j=1,\ldots,J$, consider the one-dimensional perturbation
$$p_\epsilon=p_0+\epsilon(\delta_j-p_0),$$
where $\delta_j$ is the $j$th coordinate vector. Thus,
$$
p_{\epsilon j}=\frac{1}{J}+\epsilon\left(1-\frac{1}{J}\right),
\qquad
p_{\epsilon k}=\frac{1}{J}(1-\epsilon),\quad k\neq j.
$$
For the purposes of this proof, we define $$\hat\theta_{\mathcal R}(\mathbf w)=
\begin{cases}
\hat\theta(\mathbf w), & \mathbf w\in\mathcal R,\\
0, & \mathbf w\notin\mathcal R.
\end{cases}$$
This convention allows expectations to be written over the full bootstrap count space $\mathcal W_J$, while the original estimator on the set $\mathcal R$.

To simplify the notation, let
$$ A(p) =E_p\{\hat\theta_{\mathcal R}(\mathbf W^*(p))\}= \sum_{\mathbf w\in\mathcal R} \hat\theta(\mathbf w) P_p\{\mathbf W^*(p)=\mathbf w\}$$
and
$$ B(p) = P_p\{\mathbf W^*(p)\in\mathcal R\}.$$
Then $$S_{\mathcal R}(p)=\frac{A(p)}{B(p)}.$$ 

For a fixed count vector $\mathbf w\in\mathcal W_J$, the likelihood ratio of the multinomial probability under $p_\epsilon$ relative to that under $p_0$ is
$$
L_\epsilon(\mathbf w) = \prod_{k=1}^J \left( \frac{p_{\epsilon k}}{1/J} \right)^{w_k}.
$$
Using the form of $p_\epsilon$, this becomes
$$
L_\epsilon(\mathbf w) = \{1+(J-1)\epsilon\}^{w_j} (1-\epsilon)^{\sum_{k\neq j}w_k}.
$$
Since $\sum_{k=1}^J w_k=J$, differentiating at $\epsilon=0$ gives
$$
\left. \frac{d}{d\epsilon} L_\epsilon(\mathbf w) \right|_{\epsilon=0} = J(w_j-1).
$$
Therefore,
$$
\left. \frac{d}{d\epsilon} A(p_\epsilon) \right|_{\epsilon=0} = J E_{p_0} \left[ \hat\theta_{\mathcal R}(\mathbf W^*) (W_j^*-1) \right],
$$
and
$$
\left.
\frac{d}{d\epsilon}B(p_\epsilon)\right|_{\epsilon=0}=
J E_{p_0} \left[ I(\mathbf W^*\in\mathcal R) (W_j^*-1)
\right].
$$
Using the quotient rule,
$
\dot S_{\mathcal R,j}= \left. \frac{d}{d\epsilon} S_{\mathcal R}(p_\epsilon) \right|_{\epsilon=0}
$
is
$$
\dot S_{\mathcal R,j} = J \left\{ E_{p_0} \left[ \hat \theta(\mathbf W^*)(W_j^*-1) \mid
\mathbf W^*\in\mathcal R \right] - E_{p_0} \left[ \hat \theta (\mathbf W^*) \mid
\mathbf W^*\in\mathcal R \right] E_{p_0} \left[ W_j^*-1 \mid
\mathbf W^*\in\mathcal R \right]
\right\}.
$$
Since subtracting a constant does not change covariance,
$$
\dot S_{\mathcal R,j} = J \mathrm{Cov}_{p_0} \left[ W_j^*, \hat \theta (\mathbf W^*) \mid
\mathbf W^*\in\mathcal R \right] = J S_j^*.
$$

The nonparametric delta-method variance for a smooth functional of the empirical instrument distribution assigns variance
$$
\frac{1}{J^2} \sum_{j=1}^J \dot S_{\mathcal R,j}^2.
$$
Substituting \(\dot S_{\mathcal R,j}=J S_j^*\) gives
$$
\frac{1}{J^2} \sum_{j=1}^J (J S_j^*)^2 = \sum_{j=1}^J (S_j^*)^2.
$$
This proves the theorem.
\end{proof}
\subsection{Proof of Proposition \ref{prop:efron_variance_limit}}
\label{supp:variance_limit}
\begin{proof} 
For each SNP $j$, recall that
$\bar W_j^*=\frac{1}{B}\sum_{b=1}^BW_{jb}^*$ and
$\widehat S_j=\frac{1}{B}\sum_{b=1}^B \left(W_{jb}^*-\bar W_j^*\right) \left\{ \hat\theta(\mathbf W_b^*)-\tilde\theta_{\mathrm{AL\mbox{-}B}} \right\}.
$
Expanding this expression gives
$$
\widehat S_j=\frac{1}{B}\sum_{b=1}^BW_{jb}^*\hat\theta(\mathbf W_b^*)-\bar W_j^* \tilde\theta_{\mathrm{AL\mbox{-}B}}.
$$
Conditional on $\mathcal R$, the pairs $\left(W_{jb}^* \hat\theta(\mathbf W_b^*)\right)$, are i.i.d from the bootstrap distribution conditional on $\mathbf W^*\in\mathcal R$. Therefore, by the law of large numbers,
$$
\frac{1}{B}\sum_{b=1}^BW_{jb}^*\xrightarrow{P_*}E_*\left[W_j^*\mid\mathcal R\right],
$$
$$
\frac{1}{B}\sum_{b=1}^B\hat\theta(\mathbf W_b^*)\xrightarrow{P_*}E_*\left[\hat\theta(\mathbf W^*)\mid\mathcal R\right],
$$
and
$$
\frac{1}{B}\sum_{b=1}^BW_{jb}^*\hat\theta(\mathbf W_b^*)\xrightarrow{P_*}E_*\left[W_j^*\hat\theta(\mathbf W^*)\mid\mathcal R\right].
$$
Combining these limits gives
$$
\widehat S_j\xrightarrow{P_*}E_*\left[W_j^*\hat\theta(\mathbf W^*)\mid\mathcal R\right]-E_*\left[W_j^*\mid\mathcal R\right]E_*\left[\hat\theta(\mathbf W^*)\mid\mathcal R\right].
$$
The right-hand side is $\mathrm{Cov}_*\left[W_j^*,\hat\theta(\mathbf W^*)\mid\mathcal R\right]=S_j^*.$
Thus, for each $j=1,\ldots,J$,
$$
\widehat S_j
\xrightarrow{P_*}
S_j^*.
$$
Since $J$ is fixed, the continuous mapping theorem yields
$$
\sum_{j=1}^J
\widehat S_j^2
\xrightarrow{P_*}
\sum_{j=1}^J
(S_j^*)^2.
$$
This proves Proposition \ref{prop:efron_variance_limit}.
\end{proof}
\section{Additional Simulation Results} \label{app:Simulation}

The additional simulation results in Tables \ref{tab:model-I-theta0} and \ref{tab:modelI_theta0.2_stronger_invalid} were consistent with the main findings reported in the manuscript. Under the null hypothesis in the equal strength design, MR-ALasso generally performed better as compared to MR-Lasso in terms of SD and RMSE, but both lasso-based procedures remained anti-conservative. MR-ALasso-B showed much better type-I error control than MR-Lasso and MR-ALasso across the four pleiotropy scenarios. When the true causal effect was positive for a design with invalid instruments that were stronger than the valid ones, MR-ALasso reduced bias and RMSE relative to MR-Lasso, especially in the challenging settings with directional pleiotropy and violation of the InSIDE assumption. MR-ALasso-B provided higher coverage probabilities than the naive lasso-based procedures. In summary, these additional results support conclusions discussed in Section \ref{sec:simulation}. MR-ALasso improved estimation, whereas MR-ALasso-B improved post-selection statistical inference. The complete Model-II simulation results for all data-generating scenarios considered are reported in Tables \ref{tab:modelII_theta0.0} and \ref{tab:modelII_theta0.2}. The performance of the MR-ALasso method was similar to that of oracle IVW in terms of bias and RMSE, providing finite-sample support for Corollary \ref{cor:oracle_equiv}, which states that the post MR-ALasso method is asymptotically identical to the oracle IVW estimator when the invalid set is correctly identified. When the proportion of invalid instruments was smaller, both MR-Lasso and MR-ALasso performed well, although MR-ALasso was slightly more stable. When $p_I$ increased to 45\%, the problem became substantially more difficult, especially in the invalid stronger design. In that case, MR-Lasso often removed many more instruments than were truly invalid, whereas MR-ALasso consistently showed better selection and estimation performance, particularly at the larger sample size. MR-ALasso-B often became overly conservative and could show bias when the proportion of invalid instruments was very high. Model-II supports the conclusion that adaptive penalization improves invalid instrument identification and yields a post-selection estimator that can closely approximate the oracle IVW benchmark under the conditions of Proposition \ref{prop:MR-ALasso}.  

\begin{table}[H]
\caption{Simulation results for Model-I under different pleiotropy scenarios when $\theta=0$ and $\gamma_{\mathcal{A}} = \gamma_{\mathcal{V}}$.}
\centering
\resizebox{\textwidth}{!}{
\begin{tabular}{lcccc cccc cccc}
\toprule
\multirow{3}{*}{Method}
& \multicolumn{4}{c}{15\% Invalid}
& \multicolumn{4}{c}{30\% Invalid}
& \multicolumn{4}{c}{45\% Invalid} \\
\cmidrule(lr){2-5} \cmidrule(lr){6-9} \cmidrule(lr){10-13}
& $\hat{\theta}$ & SD & RMSE & Type-I
& $\hat{\theta}$ & SD & RMSE & Type-I
& $\hat{\theta}$ & SD & RMSE & Type-I \\
\midrule
\multicolumn{13}{l}{\textit{Scenario 1: Balanced pleiotropy, InSIDE satisfied}} \\
MR-IVW        & 0.000 & 0.146 & 0.146 & 0.06 & 0.002 & 0.206 & 0.206 & 0.07 & -0.002 & 0.249 & 0.249 & 0.07 \\
MR-Robust  & 0.000 & 0.034 & 0.034 & 0.05 & 0.000 & 0.046 & 0.046 & 0.05 & 0.002  & 0.084 & 0.084 & 0.02 \\
MR-Egger   & -0.013 & 0.428 & 0.428 & 0.06 & 0.003 & 0.609 & 0.609 & 0.06 & -0.009 & 0.751 & 0.751 & 0.06 \\
MR-Median     & 0.000 & 0.044 & 0.044 & 0.04 & 0.001 & 0.055 & 0.055 & 0.06 & 0.001  & 0.072 & 0.072 & 0.08 \\
MR-Mode       & 0.004 & 0.159 & 0.159 & 0.00 & 0.006 & 0.297 & 0.297 & 0.00 & -0.004 & 0.258 & 0.258 & 0.01 \\
MR-Lasso   & 0.000 & 0.035 & 0.035 & 0.08 & -0.001 & 0.043 & 0.043 & 0.10 & -0.001 & 0.055 & 0.055 & 0.12 \\
MR-ALasso  & 0.000 & 0.035 & 0.035 & 0.07 & 0.000 & 0.042 & 0.042 & 0.09 & 0.000  & 0.054 & 0.054 & 0.11 \\
MR-ALasso-B   & 0.000 & 0.035 & 0.035 & 0.03 & -0.001 & 0.041 & 0.041 & 0.03 & -0.001 & 0.053 & 0.053 & 0.03 \\

\midrule
\multicolumn{13}{l}{\textit{Scenario 2: Directional pleiotropy, InSIDE satisfied}} \\
MR-IVW        & 0.251 & 0.149 & 0.291 & 0.36 & 0.501 & 0.209 & 0.543 & 0.69 & 0.746 & 0.253 & 0.787 & 0.85 \\
MR-Robust  & 0.002 & 0.035 & 0.035 & 0.05 & 0.013 & 0.046 & 0.048 & 0.06 & 0.081 & 0.091 & 0.122 & 0.04 \\
MR-Egger   & -0.003 & 0.474 & 0.474 & 0.06 & 0.014 & 0.658 & 0.658 & 0.05 & 0.021 & 0.807 & 0.807 & 0.06 \\
MR-Median     & 0.022 & 0.045 & 0.050 & 0.06 & 0.055 & 0.056 & 0.079 & 0.17 & 0.111 & 0.080 & 0.137 & 0.39 \\
MR-Mode       & 0.007 & 0.212 & 0.212 & 0.01 & -0.002 & 0.323 & 0.323 & 0.01 & 0.001 & 0.208 & 0.208 & 0.01 \\
MR-Lasso   & 0.005 & 0.036 & 0.037 & 0.09 & 0.018 & 0.044 & 0.048 & 0.13 & 0.055 & 0.064 & 0.084 & 0.30 \\
MR-ALasso  & 0.003 & 0.035 & 0.035 & 0.07 & 0.007 & 0.043 & 0.044 & 0.09 & 0.024 & 0.057 & 0.062 & 0.16 \\
MR-ALasso-B   & 0.002 & 0.035 & 0.035 & 0.03 & 0.009 & 0.042 & 0.043 & 0.03 & 0.025 & 0.056 & 0.062 & 0.03 \\

\midrule
\multicolumn{13}{l}{\textit{Scenario 3: Directional pleiotropy, InSIDE violated}} \\
MR-IVW        & 0.544 & 0.202 & 0.580 & 0.94 & 0.843 & 0.221 & 0.872 & 0.99 & 1.029 & 0.216 & 1.051 & 1.00 \\
MR-Robust  & 0.007 & 0.038 & 0.039 & 0.06 & 0.034 & 0.064 & 0.072 & 0.05 & 0.225 & 0.171 & 0.282 & 0.10 \\
MR-Egger   & 1.351 & 0.565 & 1.464 & 0.91 & 1.610 & 0.517 & 1.691 & 0.96 & 1.638 & 0.461 & 1.702 & 0.97 \\
MR-Median     & 0.125 & 0.078 & 0.148 & 0.62 & 0.369 & 0.221 & 0.430 & 0.94 & 0.703 & 0.303 & 0.765 & 0.99 \\
MR-Mode       & 0.004 & 0.159 & 0.159 & 0.02 & 0.013 & 0.089 & 0.090 & 0.04 & 0.046 & 0.249 & 0.253 & 0.13 \\
MR-Lasso   & 0.020 & 0.040 & 0.045 & 0.17 & 0.092 & 0.070 & 0.116 & 0.61 & 0.284 & 0.154 & 0.323 & 0.95 \\
MR-ALasso  & 0.007 & 0.038 & 0.039 & 0.11 & 0.022 & 0.051 & 0.056 & 0.20 & 0.070 & 0.073 & 0.101 & 0.43 \\
MR-ALasso-B   & 0.007 & 0.037 & 0.037 & 0.03 & 0.025 & 0.048 & 0.054 & 0.04 & 0.079 & 0.069 & 0.105 & 0.06 \\

\midrule
\multicolumn{13}{l}{\textit{Scenario 4: Balanced pleiotropy, InSIDE violated}} \\
MR-IVW        & 0.039 & 0.156 & 0.161 & 0.09 & 0.071 & 0.213 & 0.225 & 0.11 & 0.098 & 0.246 & 0.265 & 0.10 \\
MR-Robust  & 0.002 & 0.035 & 0.035 & 0.05 & 0.002 & 0.048 & 0.048 & 0.05 & 0.014 & 0.091 & 0.092 & 0.03 \\
MR-Egger   & 0.191 & 0.572 & 0.603 & 0.23 & 0.302 & 0.654 & 0.720 & 0.20 & 0.393 & 0.678 & 0.784 & 0.20 \\
MR-Median     & 0.008 & 0.046 & 0.047 & 0.05 & 0.016 & 0.064 & 0.066 & 0.11 & 0.028 & 0.086 & 0.091 & 0.17 \\
MR-Mode       & 0.008 & 0.408 & 0.408 & 0.00 & 0.031 & 2.353 & 2.353 & 0.00 & -0.020 & 1.062 & 1.062 & 0.00 \\
MR-Lasso   & 0.002 & 0.035 & 0.035 & 0.08 & 0.003 & 0.045 & 0.045 & 0.11 & 0.007 & 0.056 & 0.057 & 0.16 \\
MR-ALasso  & 0.002 & 0.035 & 0.035 & 0.07 & 0.000 & 0.044 & 0.044 & 0.10 & 0.001 & 0.054 & 0.054 & 0.13 \\
MR-ALasso-B   & 0.001 & 0.034 & 0.034 & 0.03 & 0.000 & 0.043 & 0.043 & 0.03 & 0.002 & 0.052 & 0.052 & 0.03 \\

\bottomrule
\end{tabular}
}
\begin{tablenotes}[flushleft]
\scriptsize
\item Note: $\hat{\theta}$ is the empirical mean of the estimated causal effect; SD is the standard deviation; RMSE is the root mean squared error; Type-I error is the empirical rejection probability under the null hypothesis $\theta=0$; and $\gamma_{\mathcal{A}} = \gamma_{\mathcal{V}}$ indicates that $E(\gamma_j \mid j\in\mathcal{A}) =
E(\gamma_j \mid j\in\mathcal{V})$, so that valid and invalid variants have equal instrument strength.
\end{tablenotes}
\label{tab:model-I-theta0}
\end{table}

\begin{table}[H]
\centering
\caption{Simulation results for Model-I under different pleiotropy scenarios when $\theta=0.20$ and $\gamma_{\mathcal{A}} = 3\gamma_{\mathcal{V}}$.}
\resizebox{\textwidth}{!}{
\begin{tabular}{lccccc ccccc ccccc}
\toprule
\multirow{2}{*}{Method}
& \multicolumn{5}{c}{15\% Invalid}
& \multicolumn{5}{c}{30\% Invalid}
& \multicolumn{5}{c}{45\% Invalid} \\
\cmidrule(lr){2-6} \cmidrule(lr){7-11} \cmidrule(lr){12-16}
& $\hat{\theta}$ & SD & RMSE & CP & Power
& $\hat{\theta}$ & SD & RMSE & CP & Power
& $\hat{\theta}$ & SD & RMSE & CP & Power \\
\midrule

\multicolumn{16}{l}{\textit{Scenario 1: Balanced pleiotropy, InSIDE satisfied}} \\

MR-IVW      & 0.185 & 0.280 & 0.280 & 0.65 & 0.44 & 0.190 & 0.323 & 0.323 & 0.75 & 0.32 & 0.183 & 0.338 & 0.339 & 0.82 & 0.23 \\
MR-Robust   & 0.169 & 0.070 & 0.077 & 0.91 & 0.67 & 0.160 & 0.149 & 0.155 & 0.91 & 0.24 & 0.168 & 0.303 & 0.304 & 0.92 & 0.14 \\
MR-Egger    & 0.161 & 0.539 & 0.540 & 0.61 & 0.40 & 0.175 & 0.527 & 0.528 & 0.79 & 0.23 & 0.162 & 0.523 & 0.524 & 0.91 & 0.12 \\
MR-Median   & 0.168 & 0.170 & 0.173 & 0.62 & 0.59 & 0.168 & 0.249 & 0.251 & 0.57 & 0.56 & 0.166 & 0.307 & 0.309 & 0.53 & 0.55 \\
MR-Mode     & 0.116 & 0.191 & 0.209 & 0.90 & 0.10 & 0.077 & 0.369 & 0.389 & 0.90 & 0.08 & 0.022 & 0.928 & 0.944 & 0.88 & 0.09 \\
MR-Lasso    & 0.171 & 0.071 & 0.077 & 0.75 & 0.88 & 0.162 & 0.118 & 0.124 & 0.65 & 0.69 & 0.157 & 0.175 & 0.180 & 0.54 & 0.62 \\
MR-ALasso   & 0.172 & 0.067 & 0.073 & 0.78 & 0.89 & 0.162 & 0.100 & 0.107 & 0.70 & 0.70 & 0.154 & 0.148 & 0.155 & 0.59 & 0.60 \\
MR-ALasso-B & 0.174 & 0.062 & 0.068 & 0.96 & 0.66 & 0.164 & 0.093 & 0.100 & 0.95 & 0.28 & 0.157 & 0.135 & 0.141 & 0.96 & 0.15 \\

\midrule
\multicolumn{16}{l}{\textit{Scenario 2: Directional pleiotropy, InSIDE satisfied}} \\

MR-IVW      & 0.690 & 0.282 & 0.565 & 0.21 & 0.94 & 0.998 & 0.325 & 0.862 & 0.10 & 0.97 & 1.226 & 0.342 & 1.081 & 0.05 & 0.99 \\
MR-Robust   & 0.181 & 0.071 & 0.074 & 0.92 & 0.72 & 0.234 & 0.153 & 0.157 & 0.93 & 0.35 & 0.523 & 0.351 & 0.477 & 0.84 & 0.35 \\
MR-Egger    & 1.082 & 0.543 & 1.035 & 0.23 & 0.86 & 1.355 & 0.532 & 1.272 & 0.20 & 0.89 & 1.479 & 0.528 & 1.384 & 0.23 & 0.87 \\
MR-Median   & 0.397 & 0.215 & 0.292 & 0.46 & 0.93 & 0.686 & 0.357 & 0.603 & 0.20 & 0.96 & 0.950 & 0.414 & 0.857 & 0.09 & 0.98 \\
MR-Mode     & 0.134 & 0.175 & 0.187 & 0.92 & 0.12 & 0.152 & 0.493 & 0.495 & 0.93 & 0.11 & 0.122 & 1.038 & 1.040 & 0.93 & 0.13 \\
MR-Lasso    & 0.217 & 0.076 & 0.078 & 0.77 & 0.96 & 0.357 & 0.157 & 0.222 & 0.41 & 0.96 & 0.604 & 0.273 & 0.488 & 0.14 & 0.98 \\
MR-ALasso   & 0.184 & 0.068 & 0.070 & 0.81 & 0.92 & 0.218 & 0.107 & 0.109 & 0.73 & 0.85 & 0.318 & 0.166 & 0.204 & 0.50 & 0.87 \\
MR-ALasso-B & 0.187 & 0.063 & 0.065 & 0.96 & 0.71 & 0.225 & 0.098 & 0.101 & 0.97 & 0.43 & 0.334 & 0.156 & 0.206 & 0.95 & 0.37 \\

\midrule
\multicolumn{16}{l}{\textit{Scenario 3: Directional pleiotropy, InSIDE violated}} \\

MR-IVW      & 0.881 & 0.249 & 0.724 & 0.03 & 1.00 & 1.133 & 0.242 & 0.964 & 0.00 & 1.00 & 1.278 & 0.226 & 1.101 & 0.00 & 1.00 \\
MR-Robust   & 0.195 & 0.085 & 0.085 & 0.89 & 0.67 & 0.310 & 0.216 & 0.242 & 0.80 & 0.49 & 0.786 & 0.390 & 0.704 & 0.45 & 0.73 \\
MR-Egger    & 1.230 & 0.382 & 1.098 & 0.04 & 0.99 & 1.406 & 0.327 & 1.250 & 0.01 & 1.00 & 1.487 & 0.294 & 1.321 & 0.00 & 1.00 \\
MR-Median   & 0.743 & 0.325 & 0.633 & 0.08 & 0.99 & 1.057 & 0.321 & 0.915 & 0.01 & 1.00 & 1.221 & 0.297 & 1.063 & 0.00 & 1.00 \\
MR-Mode     & 0.237 & 0.681 & 0.682 & 0.79 & 0.29 & 0.338 & 0.576 & 0.592 & 0.77 & 0.31 & 0.424 & 0.703 & 0.737 & 0.77 & 0.35 \\
MR-Lasso    & 0.303 & 0.109 & 0.150 & 0.46 & 0.99 & 0.635 & 0.222 & 0.488 & 0.06 & 1.00 & 0.950 & 0.260 & 0.794 & 0.01 & 1.00 \\
MR-ALasso   & 0.203 & 0.076 & 0.076 & 0.74 & 0.94 & 0.290 & 0.125 & 0.154 & 0.48 & 0.94 & 0.479 & 0.180 & 0.332 & 0.15 & 0.98 \\
MR-ALasso-B & 0.206 & 0.069 & 0.069 & 0.97 & 0.72 & 0.296 & 0.111 & 0.147 & 0.94 & 0.56 & 0.490 & 0.159 & 0.330 & 0.78 & 0.62 \\

\midrule
\multicolumn{16}{l}{\textit{Scenario 4: Balanced pleiotropy, InSIDE violated}} \\

MR-IVW      & 0.222 & 0.280 & 0.281 & 0.65 & 0.48 & 0.256 & 0.317 & 0.322 & 0.75 & 0.39 & 0.283 & 0.323 & 0.333 & 0.81 & 0.32 \\
MR-Robust   & 0.172 & 0.072 & 0.077 & 0.90 & 0.67 & 0.165 & 0.153 & 0.157 & 0.90 & 0.26 & 0.201 & 0.307 & 0.307 & 0.88 & 0.18 \\
MR-Egger    & 0.273 & 0.522 & 0.527 & 0.60 & 0.45 & 0.362 & 0.502 & 0.528 & 0.76 & 0.34 & 0.443 & 0.485 & 0.543 & 0.83 & 0.27 \\
MR-Median   & 0.204 & 0.191 & 0.191 & 0.61 & 0.67 & 0.247 & 0.277 & 0.281 & 0.51 & 0.66 & 0.290 & 0.325 & 0.337 & 0.48 & 0.68 \\
MR-Mode     & 0.155 & 1.096 & 1.096 & 0.88 & 0.16 & 0.108 & 0.563 & 0.571 & 0.87 & 0.13 & 0.111 & 1.016 & 1.020 & 0.91 & 0.09 \\
MR-Lasso    & 0.176 & 0.072 & 0.076 & 0.76 & 0.89 & 0.179 & 0.122 & 0.124 & 0.63 & 0.72 & 0.198 & 0.181 & 0.181 & 0.54 & 0.70 \\
MR-ALasso   & 0.173 & 0.067 & 0.072 & 0.77 & 0.89 & 0.165 & 0.103 & 0.109 & 0.67 & 0.70 & 0.163 & 0.147 & 0.152 & 0.58 & 0.63 \\
MR-ALasso-B & 0.175 & 0.063 & 0.068 & 0.95 & 0.66 & 0.167 & 0.095 & 0.101 & 0.95 & 0.31 & 0.166 & 0.133 & 0.138 & 0.96 & 0.16 \\

\bottomrule
\end{tabular}
}
\begin{tablenotes}[flushleft]
\scriptsize
\item Note: $\hat{\theta}$ denotes the empirical mean of the estimated causal effect; SD is the empirical standard deviation; RMSE is the root mean squared error; CP is the empirical coverage probability of the nominal 95\% confidence interval; Power is the empirical probability of rejecting the null hypothesis $H_0:\theta=0$ when the true causal effect is $\theta=0.20$; and $\gamma_{\mathcal{A}} = 3\gamma_{\mathcal{V}}$ denotes the setting in which invalid instruments are stronger than valid instruments.
\end{tablenotes}

\label{tab:modelI_theta0.2_stronger_invalid}
\end{table}

\begin{table}[H]
\caption{Simulation results for different cases when the true causal effect is $\theta=0$ for Model-II.}
\centering
\resizebox{\textwidth}{!}{
\begin{tabular}{lcl ccccc ccccc}
\toprule
\multirow{2}{*}{Design} & \multirow{2}{*}{$p_I$} & \multirow{2}{*}{Method} 
& \multicolumn{5}{c}{$n=10000$} & \multicolumn{5}{c}{$n=20000$} \\
\cmidrule(lr){4-8} \cmidrule(lr){9-13}
& & & $\hat{\theta}$ & SD & RMSE & $\bar{s}_{\text{sel}}$ & All invalid
& $\hat{\theta}$ & SD & RMSE & $\bar{s}_{\text{sel}}$ & All invalid \\
\midrule

\multirow{15}{*}{$\gamma_{\mathcal A}=\gamma_{\mathcal V}$}
& \multirow{5}{*}{15\%}
& MR-IVW        & 0.281 & 0.029 & 0.283 &  --   &   --   & 0.285 & 0.026 & 0.286 &  --   &  --   \\
&               & Oracle IVW     & 0.000 & 0.017 & 0.017 & 15 & 1 & 0.000 & 0.012 & 0.012 & 15 & 1 \\
&               & MR-Lasso      & 0.003 & 0.018 & 0.018 & 16.1 & 1 & 0.002 & 0.013 & 0.013 & 16.1 & 1 \\
&               & MR-ALasso     & 0.001 & 0.017 & 0.017 & 15.8 & 1 & 0.001 & 0.012 & 0.012 & 15.1 & 1 \\
&               & MR-ALasso-B   & 0.002 & 0.017 & 0.017 & 15.9 & 1 & 0.001 & 0.012 & 0.012 & 15.9 & 1 \\
\cmidrule(lr){2-13}

& \multirow{5}{*}{30\%}
& MR-IVW        & 0.563 & 0.037 & 0.564 &  --   &   --   & 0.570 & 0.034 & 0.571 &  --   &  --   \\
&               & Oracle IVW     & 0.000 & 0.018 & 0.018 & 30 & 1 & 0.000 & 0.013 & 0.013 & 30 & 1 \\
&               & MR-Lasso      & 0.005 & 0.020 & 0.021 & 30.9 & 1 & 0.003 & 0.014 & 0.015 & 30.8 & 1 \\
&               & MR-ALasso     & 0.002 & 0.019 & 0.019 & 30.6 & 1 & 0.001 & 0.013 & 0.014 & 30.1 & 1 \\
&               & MR-ALasso-B   & 0.005 & 0.019 & 0.019 & 30.7 & 1 & 0.003 & 0.013 & 0.014 & 30.7 & 1 \\
\cmidrule(lr){2-13}

& \multirow{5}{*}{45\%}
& MR-IVW        & 0.844 & 0.043 & 0.845 &  --   &   --   & 0.854 & 0.039 & 0.855 &  --   &  --   \\
&               & Oracle IVW     & 0.000 & 0.021 & 0.021 & 45 & 1 & 0.000 & 0.015 & 0.015 & 45 & 1 \\
&               & MR-Lasso      & 0.077 & 0.043 & 0.088 & 62.6 & 1 & 0.037 & 0.025 & 0.045 & 58.9 & 1 \\
&               & MR-ALasso     & 0.002 & 0.022 & 0.022 & 45.4 & 1 & 0.002 & 0.015 & 0.016 & 45.1 & 1 \\
&               & MR-ALasso-B   & 0.233 & 0.052 & 0.238 & 45.8 & 1 & 0.218 & 0.045 & 0.222 & 45.7 & 1 \\

\midrule

\multirow{15}{*}{$\gamma_{\mathcal A}=3\gamma_{\mathcal V}$}
& \multirow{5}{*}{15\%}
& MR-IVW        & 0.389 & 0.018 & 0.389 &  --   &  --   & 0.391 & 0.016 & 0.392 &  --   &  --   \\
&               & Oracle IVW     & 0.000 & 0.017 & 0.017 & 15 & 1 & 0.000 & 0.012 & 0.012 & 15 & 1 \\
&               & MR-Lasso      & 0.019 & 0.022 & 0.029 & 19.5 & 0.998 & 0.006 & 0.014 & 0.015 & 16.9 & 1 \\
&               & MR-ALasso     & 0.001 & 0.017 & 0.017 & 15.7 & 1 & 0.001 & 0.012 & 0.012 & 15.1 & 1 \\
&               & MR-ALasso-B   & 0.003 & 0.017 & 0.017 & 15.9 & 1 & 0.002 & 0.012 & 0.012 & 15.9 & 1 \\
\cmidrule(lr){2-13}

& \multirow{5}{*}{30\%}
& MR-IVW        & 0.506 & 0.016 & 0.507 &  --   &  --   & 0.508 & 0.015 & 0.508 &  --   &  --   \\
&               & Oracle IVW     & 0.000 & 0.018 & 0.018 & 30 & 1 & 0.000 & 0.013 & 0.013 & 30 & 1 \\
&               & MR-Lasso      & 0.509 & 0.035 & 0.510 & 79.6 & 0 & 0.520 & 0.033 & 0.521 & 86.6 & 0 \\
&               & MR-ALasso     & 0.003 & 0.019 & 0.020 & 30.7 & 1 & 0.002 & 0.013 & 0.014 & 30.1 & 1 \\
&               & MR-ALasso-B   & 0.008 & 0.020 & 0.022 & 31.0 & 1 & 0.004 & 0.014 & 0.014 & 30.7 & 1 \\
\cmidrule(lr){2-13}

& \multirow{5}{*}{45\%}
& MR-IVW        & 0.562 & 0.016 & 0.562 &  --   &  --   & 0.564 & 0.015 & 0.564 &  --   &  --   \\
&               & Oracle IVW     & 0.000 & 0.021 & 0.021 & 45 & 1 & 0.000 & 0.015 & 0.015 & 45 & 1 \\
&               & MR-Lasso      & 0.580 & 0.025 & 0.580 & 74.0 & 0 & 0.584 & 0.024 & 0.585 & 80.4 & 0 \\
&               & MR-ALasso     & 0.096 & 0.092 & 0.133 & 60.4 & 0.921 & 0.013 & 0.021 & 0.025 & 47.9 & 1 \\
&               & MR-ALasso-B   & 0.193 & 0.055 & 0.200 & 59.2 & 0.919 & 0.115 & 0.031 & 0.119 & 48.2 & 1 \\

\bottomrule
\end{tabular}
}
\begin{tablenotes}[flushleft]
\scriptsize
\item Note: Oracle IVW is the IVW estimator computed using the true valid set of instruments. $\hat{\theta}$ denotes the empirical mean of the estimated causal effect; SD is the empirical standard deviation; RMSE is the root mean squared error; $\bar{s}_{\mathrm{sel}}$ is the average number of instruments identified as invalid; and ``All invalid'' is the empirical proportion of replicates in which the selected invalid set contains all truly invalid instruments, i.e., $\mathcal A\subseteq\widehat{\mathcal A}_n$.
\end{tablenotes}
\label{tab:modelII_theta0.0}
\end{table}

\begin{table}[H]
\caption{Simulation results for different cases when the true causal effect is $\theta=0.20$ for Model-II.}
\centering
\resizebox{\textwidth}{!}{
\begin{tabular}{lcl ccccc ccccc}
\toprule
\multirow{2}{*}{Design} & \multirow{2}{*}{$p_I$} & \multirow{2}{*}{Method} 
& \multicolumn{5}{c}{$n=10000$} & \multicolumn{5}{c}{$n=20000$} \\
\cmidrule(lr){4-8} \cmidrule(lr){9-13}
& & & $\hat{\theta}$ & SD & RMSE & $\bar{s}_{\text{sel}}$ & All invalid
& $\hat{\theta}$ & SD & RMSE & $\bar{s}_{\text{sel}}$ & All invalid \\
\midrule

\multirow{15}{*}{$\gamma_{\mathcal A}=\gamma_{\mathcal V}$}
& \multirow{5}{*}{15\%}
& MR-IVW        & 0.476 & 0.030 & 0.278 &  --   &   --   & 0.482 & 0.027 & 0.283 &  --   & -- \\
&               & Oracle IVW    & 0.195 & 0.017 & 0.017 & 15 & 1 & 0.198 & 0.012 & 0.012 & 15 & 1 \\
&               & MR-Lasso      & 0.199 & 0.018 & 0.018 & 16.6 & 1 & 0.201 & 0.013 & 0.013 & 16.6 & 1 \\
&               & MR-ALasso     & 0.197 & 0.018 & 0.018 & 16.0 & 1 & 0.198 & 0.012 & 0.012 & 15.2 & 1 \\
&               & MR-ALasso-B   & 0.198 & 0.018 & 0.018 & 16.2 & 1 & 0.199 & 0.012 & 0.012 & 16.1 & 1 \\
\cmidrule(lr){2-13}

& \multirow{5}{*}{30\%}
& MR-IVW        & 0.758 & 0.038 & 0.559 &  --   &   --   & 0.767 & 0.035 & 0.568 &  --   & -- \\
&               & Oracle IVW    & 0.195 & 0.018 & 0.019 & 30 & 1 & 0.198 & 0.013 & 0.013 & 30 & 1 \\
&               & MR-Lasso      & 0.202 & 0.022 & 0.022 & 31.3 & 1 & 0.202 & 0.016 & 0.016 & 31.1 & 1 \\
&               & MR-ALasso     & 0.198 & 0.020 & 0.020 & 30.8 & 1 & 0.199 & 0.014 & 0.014 & 30.9 & 1 \\
&               & MR-ALasso-B   & 0.202 & 0.019 & 0.019 & 31.0 & 1 & 0.202 & 0.014 & 0.014 & 31.0 & 1 \\
\cmidrule(lr){2-13}

& \multirow{5}{*}{45\%}
& MR-IVW        & 1.040 & 0.043 & 0.841 &  --   &   --   & 1.052 & 0.040 & 0.853 &  --   & -- \\
&               & Oracle IVW    & 0.195 & 0.021 & 0.021 & 45 & 1 & 0.198 & 0.015 & 0.015 & 45 & 1 \\
&               & MR-Lasso      & 0.285 & 0.049 & 0.098 & 63.8 & 1 & 0.241 & 0.028 & 0.049 & 59.6 & 1 \\
&               & MR-ALasso     & 0.199 & 0.023 & 0.023 & 45.5 & 1 & 0.200 & 0.016 & 0.016 & 45.2 & 1 \\
&               & MR-ALasso-B   & 0.433 & 0.055 & 0.240 & 46.1 & 1 & 0.416 & 0.046 & 0.221 & 45.9 & 1 \\

\midrule

\multirow{15}{*}{$\gamma_{\mathcal A}=3\gamma_{\mathcal V}$}
& \multirow{5}{*}{15\%}
& MR-IVW        & 0.587 & 0.019 & 0.387 &  --   & -- & 0.590 & 0.016 & 0.390 &  --   & -- \\
&               & Oracle IVW    & 0.195 & 0.017 & 0.017 & 15 & 1 & 0.198 & 0.012 & 0.012 & 15 & 1 \\
&               & MR-Lasso      & 0.218 & 0.023 & 0.029 & 20.2 & 0.998 & 0.206 & 0.015 & 0.016 & 17.4 & 1 \\
&               & MR-ALasso     & 0.197 & 0.018 & 0.018 & 16.0 & 1 & 0.199 & 0.012 & 0.013 & 15.1 & 1 \\
&               & MR-ALasso-B   & 0.199 & 0.018 & 0.018 & 16.2 & 1 & 0.200 & 0.013 & 0.012 & 16.2 & 1 \\
\cmidrule(lr){2-13}

& \multirow{5}{*}{30\%}
& MR-IVW        & 0.705 & 0.017 & 0.505 &  --   & -- & 0.707 & 0.015 & 0.508  &  --   & -- \\ 
&               & Oracle IVW    & 0.195 & 0.018 & 0.019 & 30 & 1 & 0.198 & 0.013 & 0.013 & 30 & 1 \\
&               & MR-Lasso      & 0.706 & 0.035 & 0.507 & 78.5 & 0.001 & 0.718 & 0.034 & 0.519 & 86.2 & 0 \\
&               & MR-ALasso     & 0.200 & 0.021 & 0.021 & 30.5 & 1 & 0.200 & 0.014 & 0.014 & 30.1 & 1 \\
&               & MR-ALasso-B   & 0.207 & 0.021 & 0.022 & 31.3 & 1 & 0.203 & 0.014 & 0.014 & 31.0 & 1 \\
\cmidrule(lr){2-13}

& \multirow{5}{*}{45\%}
& MR-IVW        & 0.761 & 0.016 & 0.561 &  --   & -- & 0.763 & 0.015 & 0.563 &  --   & -- \\
&               & Oracle IVW    & 0.195 & 0.021 & 0.021 & 45 & 1 & 0.198 & 0.015 & 0.015 & 45 & 1 \\
&               & MR-Lasso      & 0.778 & 0.025 & 0.578 & 73.6 & 0 & 0.783 & 0.024 & 0.583 & 80.2 & 0 \\
&               & MR-ALasso     & 0.331 & 0.121 & 0.178 & 63.7 & 0.833 & 0.217 & 0.026 & 0.030 & 49.1 & 1 \\
&               & MR-ALasso-B   & 0.411 & 0.061 & 0.220 & 62.2 & 0.845 & 0.322 & 0.035 & 0.127 & 49.3 & 1 \\

\bottomrule
\end{tabular}
}
\begin{tablenotes}[flushleft]
\scriptsize
\item Note: Oracle IVW is the IVW estimator computed using the true valid set of instruments. $\hat{\theta}$ denotes the empirical mean of the estimated causal effect; SD is the empirical standard deviation; RMSE is the root mean squared error; $\bar{s}_{\mathrm{sel}}$ is the average number of instruments identified as invalid; and ``All invalid'' is the empirical proportion of replicates in which the selected invalid set contains all truly invalid instruments, i.e., $\mathcal A\subseteq\widehat{\mathcal A}_n$.
\end{tablenotes}
\label{tab:modelII_theta0.2}
\end{table}

\end{document}